\def\qed{\rule{2mm}{2mm}}
\def\independent{\perp \!\!\! \perp}
\newtheorem{theorem}{Theorem}[section]
\newtheorem{lemma}{Lemma}[section]
\newtheorem{proposition}{Proposition}[section]
\theoremstyle{definition}
\newtheorem{example}{Example}[section]
\newtheorem{remark}{Remark}[section]
\newtheorem{assumption}{Assumption}[section]
\DeclareMathOperator*{\var}{Var}
\DeclareMathOperator*{\cov}{Cov}
\DeclareMathOperator*{\diag}{diag}
\newcommand{\mycomment}[1]{}
\begin{document}

\author{
Yuehao Bai \\
Department of Economics\\
University of Southern California \\
\url{yuehao.bai@usc.edu}
\and
Jizhou Liu \\
Booth School of Business\\
University of Chicago\\
\url{jliu32@chicagobooth.edu}
\and
Max Tabord-Meehan\\
Department of Economics\\
University of Chicago \\
\url{maxtm@uchicago.edu}
}

\bigskip

\title{Inference for Matched Tuples and Fully Blocked Factorial Designs \thanks{We thank the editor and anonymous referees, as well as seminar participants at Columbia University, Duke University, Indiana University, Michigan State University, Penn State University, UCLA, University of Pennsylvania, University of Pittsburgh, University of Southern California, UW Milwaukee, and Yale University for helpful comments. We thank Jiehan Xu for excellent research assistance. The third author acknowledges support from NSF grant SES-2149408.}}

\maketitle

\vspace{-0.3in}

\begin{spacing}{1.2}
\begin{abstract}
This paper studies inference in randomized controlled trials with multiple treatments, where treatment status is determined according to a ``matched tuples” design. If there are $|\mathcal{D}|$ possible treatments, then by a matched tuples design, we mean an experimental design where units are sampled i.i.d.\ from the population of interest, grouped into ``homogeneous” blocks of size $|\mathcal{D}|$, and finally, within each block, exactly one individual is randomly assigned to each of the $|\mathcal{D}|$ treatments. We first study estimation and inference for matched tuples designs in the general setting where the parameter of interest is a vector of linear contrasts over the collection of average potential outcomes for each treatment. Parameters of this form include standard average treatment effects used to compare one treatment relative to another, but also include parameters which may be of interest in the analysis of factorial designs. We first establish conditions under which a sample analogue estimator is asymptotically normal and construct a consistent estimator of its corresponding asymptotic variance. Combining these results establishes the asymptotic exactness of tests based on these estimators. In contrast, we show that, for two common testing procedures based on $t$-tests constructed from linear regressions, one test is generally conservative while the other is generally invalid. We go on to apply our results to study the asymptotic properties of what we call ``fully-blocked" $2^K$ factorial designs, which are simply matched tuples designs applied to a full factorial experiment. Leveraging our previous results, we establish that our estimator achieves a lower asymptotic variance under the fully-blocked design than that under any stratified factorial design which stratifies the experimental sample into a finite number of ``large" strata. A simulation study and empirical application illustrate the practical relevance of our results. 
\end{abstract}
\end{spacing}

\noindent KEYWORDS: Randomized controlled trials, matched tuples, matched pairs, multiple treatments, factorial designs

\noindent JEL classification codes: C12, C14
\hypersetup{pageanchor=false}
\thispagestyle{empty} 
\newpage
\hypersetup{pageanchor=true}
\setcounter{page}{1}

\section{Introduction}

This paper studies inference in randomized controlled trials with multiple treatments, where treatment status is determined according to a ``matched tuples” design. If there are $|\mathcal{D}|$ possible treatments, then by a matched tuples design, we mean an experimental design where units are sampled i.i.d.\ from the population of interest, grouped into ``homogeneous” blocks of size $|\mathcal{D}|$, and finally, within each block, exactly one individual is randomly assigned to each of the $|\mathcal{D}|$ treatments. As such, matched tuples designs generalize the concept of matched pairs designs to settings with more than two treatments. Matched tuples designs are commonly used in the social sciences: see \cite{bold2018experimental}, \cite{brown2020inducing}, \cite{McKenzie2013}, and \cite{McKenzie2014} for examples in economics, and are often motivated using the simulation evidence presented in \cite{bruhn2009pursuit}. However, we are not aware of any formal results which establish valid asymptotically exact methods of inference for matched tuples designs. Accordingly, in this paper we establish general results about estimation and inference for matched tuples designs, and then apply these results to study the asymptotic properties of what we call ``fully-blocked” $2^K$ factorial designs.

We first study estimation and inference for matched tuples designs in the general setting where the parameter of interest is a vector of linear contrasts over the collection of average outcomes for each treatment. Parameters of this form include standard average treatment effects (ATEs) used to compare one treatment relative to another, but as we explain below also include more complicated parameters which may be of interest, for instance, in the analysis of factorial designs. We first establish conditions under which a sample analogue estimator is asymptotically normal and construct a consistent estimator of its corresponding asymptotic variance. Combining these results establishes the asymptotic validity of tests based on these estimators. We then consider the asymptotic properties of two commonly recommended inference procedures. The first is based on a linear regression with block fixed effects. Importantly, we find the $t$-test based on such a regression is in general not valid for testing the null hypothesis that a pairwise ATE is equal to a prespecified value. The second is based on a linear regression with cluster-robust standard errors, where clusters are defined at the block level. Here we find that the corresponding $t$-test is generally valid but conservative, and that this conservativeness increases in the number of treatments.

Next, we apply our results to study the asymptotic properties of ``fully-blocked” $2^K$ factorial designs. Factorial designs are classical experimental designs \citep[see][for a textbook treatment]{wu2011experiments} which are increasingly being used in the social sciences \citep[see for instance][]{alatas2012targeting, besedevs2012age, dellavigna2016voting, kaur2015self, karlan2014agricultural}. In a $2^K$ factorial design, each treatment is a combination of multiple ``factors,” where each factor can take two distinct values, or ``levels.” As a consequence, a full $2^K$ factorial design can be thought of as a randomized experiment with $2^K$ distinct treatments (importantly however, the analysis of factorial designs typically considers factorial effects as the parameters of interest: see Section \ref{sec:factorial} for a definition). A fully-blocked factorial design is then simply a matched tuples design with blocks of size $2^K$. Leveraging our previous results, we establish that our estimator achieves a lower asymptotic variance under the fully-blocked design than under any stratified factorial design which stratifies the experimental sample into a finite number of ``large” strata (such designs include complete randomization as a special case). We also consider settings where only one factor may be of primary interest, and establish that even in such cases it is more efficient to perform a fully-blocked design than to perform a matched pairs design which exclusively focuses on the primary factor of interest. 

In a simulation study, we find that although our inference results are asymptotically exact, our proposed tests may be conservative in finite samples when the experiment features many treatments or many blocking variables. Accordingly, we also study the behavior of a matched tuples design with ``replicates,” where we form blocks of size \emph{two} times the number of treatments, and each treatment is assigned exactly \emph{twice} at random within each block. Although we find that such a design results in an estimator with slightly larger mean-squared error, the rejection probabilities of our proposed tests become much closer to the nominal level, which may result in improved power. Further discussion is provided in Section \ref{sec:replicate} below.

Although the analysis of matched tuples designs has to our knowledge not received much attention, there are large literatures on both the analysis of matched pairs designs and the analysis of factorial designs. Recent papers which have analyzed the properties of matched pairs designs include \cite{athey2017econometrics}, \cite{bai2021inference}, \cite{bai2022optimality}, \cite{chaisemartin2022at}, \cite{cytrynbaum2021designing}, \cite{imai2009essential}, \cite{jiang2020bootstrap}, \cite{fogarty2018regression}, and \cite{van2012adaptive}. Our analysis builds directly on the framework developed in \cite{bai2021inference}, and our Theorems \ref{thm:main_delta} and \ref{thm:V_const} nest some of their results when specialized to the setting of a binary treatment. \cite{cytrynbaum2021designing} considers a generalization of matched pairs designs, a special case of which he refers to as a matched tuples design. However, his design groups units into homogeneous blocks in order to assign a binary treatment with unequal treatment fractions. In contrast, we consider a setting where units are grouped into homogeneous blocks in order to assign multiple treatments.   

Recent papers which have analyzed factorial designs include \cite{rubin2016}, \cite{dasgupta2015causal}, \cite{li2020rerandomization}, \cite{muralidharan2019factorial}, \cite{pashley2019causal}, and \cite{liu2022randomization}. Our setup and notation for $2^K$ factorial designs mirrors the framework introduced in \cite{dasgupta2015causal}, although our setup differs in that we consider a ``super-population” framework where potential outcomes are modeled as random, whereas they maintain a finite population framework where potential outcomes are modeled as fixed.\footnote{The finite population ``design-based" perspective may be particularly attractive in settings where the experimental sample is not explicitly drawn from a larger population. In Appendix \ref{sec:fin_pop} we provide some preliminary simulation evidence that our proposed estimators may be relevant in such a setting as well, however, given the simulation evidence in \cite{chaisemartin2022at} and our currently incomplete understanding of the design-based properties of our estimators, we do not make any general claims in this paper.} Borrowing the framework from \cite{dasgupta2015causal}, \cite{rubin2016} and \cite{li2020rerandomization} propose re-randomization designs for factorial experiments which are shown to have favorable efficiency properties relative to a completely randomized design. Although we do not provide formal results comparing our fully-blocked design to these re-randomization designs, our simulation evidence suggests that, at least in the inferential framework considered in our paper, the fully-blocked design can improve efficiency relative to these re-randomization designs. Also closely related to our paper is \cite{liu2022randomization}, who extend the results in \cite{dasgupta2015causal} to general stratified randomized designs. Their results specifically exclude the setting where each treatment is assigned exactly once per block, which is the primary setting that we consider in this paper.

The rest of the paper is organized as follows. In Section \ref{sec:setup} we describe our setup and notation. Section \ref{sec:main} presents the main results. In Section \ref{sec:sims}, we examine the finite sample behavior of various experimental designs via simulation in the context of $2^K$ factorial experiments. Finally, in Section \ref{sec:application} we illustrate our proposed inference methods in an empirical application based on the experiment conducted in \cite{McKenzie2014}. We conclude with recommendations for empirical practice in Section \ref{sec:rec}.

\section{Setup and Notation} \label{sec:setup}
Let $Y_i \in \mathbf R$ denote the observed outcome of interest for the $i$th unit. Let $D_i \in \mathcal D$ denote treatment status for the $i$th unit, where $\mathcal{D}$ denotes a finite set of values of the treatment. We assume $\mathcal D = \{1, \ldots, |\mathcal D|\}$. Generally, we use $D_i = 1$ to indicate the $i$th unit is untreated, but such a restriction is not necessary for our results. Let $X_i$ denote the observed baseline covariates for the $i$th unit, and denote its dimension by $\mathrm{dim}(X_i)$. For $d \in \mathcal{D}$, let $Y_i(d)$ denote the potential outcome for the $i$th unit if its treatment status were $d$. The observed outcome and potential outcomes are related to treatment status by the expression
\begin{equation}\label{eq:PO}
Y_i = \sum_{d \in \mathcal{D}} Y_i(d)I\{D_i = d\}~.
\end{equation}
We suppose our sample consists of $J_n := (|\mathcal{D}|)n$ i.i.d.\ units. For any random variable indexed by $i$, for example $D_i$, we denote by $D^{(n)}$ the random vector $(D_1, D_2, \ldots, D_{J_n})$. Let $P_n$ denote the distribution of the observed data $Z^{(n)}$ where $Z_i = (Y_i, D_i, X_i)$, and $Q_n$ denote the distribution of $W^{(n)}$, where $W_i = (Y_i(1), Y_i(2), \ldots, Y_i(|\mathcal{D}|), X_i)$. We assume that $W^{(n)}$ consists of $J_n$ i.i.d observations, so that $Q_n = Q^{J_n}$, where $Q$ is the marginal distribution of $W_i$. Given $Q_n$, $P_n$ is then determined by (\ref{eq:PO}) and the mechanism for determining treatment assignment. We thus state our assumptions in terms of assumptions on $Q$ and the treatment assignment mechanism. 

Our object of interest will generically be defined as a vector of linear contrasts over the collection of expected potential outcomes across treatments. Formally, let 
\[\Gamma(Q) := (\Gamma_1(Q), \ldots, \Gamma_{|\mathcal{D}|}(Q))'~,\]
where $\Gamma_d(Q) := E_Q[Y_i(d)]$ for $d \in \mathcal{D}$. Let $\nu$ be a real-valued $m\times |\mathcal{D}|$ matrix. We define
\[\Delta_\nu(Q) := \nu\Gamma(Q) \in \mathbf R^m~,\]
as our generic parameter of interest. For example, in the special case where $\mathcal{D} = \{1, 2\}$ and $\nu = (-1, 1)$, $\Delta_\nu(Q) = E_Q[Y_i(2) - Y_i(1)]$ corresponds to the familiar average treatment effect for a binary treatment. Further examples of $\Delta_\nu(Q)$ are provided in Examples \ref{ex:matched-triples} and \ref{ex:2-factor} below.

We now describe our assumptions on $Q$. Our first assumption imposes restrictions on the (conditional) moments of the potential outcomes:

\begin{assumption} \label{as:Q}
The distribution $Q$ is such that
\begin{enumerate}[(a)]
\item $0 < E[\mathrm{Var}[Y_i(d) | X_i]]$ for $d \in \mathcal D$.
\item $E[Y_i^2(d)] < \infty$ for $d \in \mathcal D$.
\item $E[Y_i(d) | X_i = x]$, $E[Y_i^2(d) | X_i = x]$, and $\var[Y_i(d) | X_i]$ are Lipschitz for $d \in \mathcal D$.
\end{enumerate}
\end{assumption}
Assumption \ref{as:Q}(a) is a mild restriction imposed to rule out degenerate situations and Assumption \ref{as:Q}(b) is another mild restriction that permits the application of suitable laws of large numbers and central limit theorems. Assumption \ref{as:Q}(c), on the other hand, is a smoothness requirement that ensures that units that are ``close'' in terms of their baseline covariates are also ``close'' in terms of their potential outcomes. Assumption \ref{as:Q}(c) is a key assumption for establishing the asymptotic exactness of our proposed tests, since it allows us to argue that certain intermediate quantities in the derivations of our variance estimators vanish asymptotically (see for instance the proof of Lemma \ref{lem:rho-dd'}). Similar smoothness requirements are also imposed in \cite{bai2021inference}.

Next, we specify our assumptions on the mechanism determining treatment status. In words, we consider treatment assignments which first stratify the experimental sample into $n$ blocks of size $|\mathcal{D}|$ using the observed baseline covariates $X^{(n)}$, and then assign one unit to each treatment uniformly at random within each block. We call such a design a \emph{matched tuples} design. Formally, let
\[ \lambda_j = \lambda_j(X^{(n)}) \subseteq \{1, \ldots, J_n\},~ 1 \leq j \leq n \]
denote $n$ sets each consisting of $|\mathcal D|$ elements that form a partition of $\{1, \ldots, J_n\}$.

We assume treatment is assigned as follows:
\begin{assumption} \label{as:D}
Treatments are assigned so that $\{Y^{(n)}(d): d \in \mathcal D\} \independent D^{(n)} | X^{(n)}$ and, conditional on $X^{(n)}$,
\[ \{(D_i: i \in \lambda_j): 1 \leq j \leq n\} \]
are i.i.d.\ and each uniformly distributed over all permutations of $(1, 2, \ldots, |\mathcal{D}|)$.
\end{assumption}

We further require that the units in each block be ``close'' in terms of their baseline covariates in the following sense:
\begin{assumption} \label{as:close}
The blocks satisfy
\[ \frac{1}{n} \sum_{1 \leq j \leq n} \max_{i, k \in \lambda_j} ||X_i - X_k||^2 \stackrel{P}{\to} 0~.\]
\end{assumption}

We will also sometimes require that the distances between units in adjacent blocks be ``close" in terms of their baseline covariates:
\begin{assumption} \label{as:close-4}
The blocks satisfy
\[ \frac{1}{n} \sum_{1 \leq j \leq \lfloor n / 2 \rfloor} \max_{i \in \lambda_{2j - 1}, k \in \lambda_{2j}} ||X_i - X_k||^2 \stackrel{P}{\to} 0~. \]
\end{assumption}

We provide three examples of blocking algorithms which satisfy Assumptions \ref{as:close}--\ref{as:close-4}:
\begin{enumerate}
    \item Univariate covariate: When $\mathrm{dim}(X_i) = 1$, we can order units from smallest to largest according to $X_i$ and then block adjacent units into blocks of size $|\mathcal D|$. It then follows from Theorem 4.1 of \cite{bai2021inference} that Assumptions \ref{as:close}--\ref{as:close-4} are satisfied as long as $E[X_i^2] < \infty$.
    \item Pre-stratification: Suppose we have a covariate vector $\tilde X_i = (\tilde X_{1i}, \tilde X_{2i})$, where $\mathrm{dim}(\tilde X_{2i}) = 1$. Let $S$ be a function that maps from the support of $\tilde X_{1i}$ to a discrete set $\mathcal S = \{1, \ldots, |\mathcal S|\}$. Define $S_{1i} = S(\tilde X_{1i})$. For all units with the same value of $S_i$, order the units from smallest to largest according to $\tilde X_{2i}$ and then block adjacent units into blocks of size $|\mathcal D|$.\footnote{If the number of units in a stratum is not divisible by $|\mathcal{D}|$, we could simply assign the remaining units at random or drop them from the experiment.} It follows from Theorem 4.1 of \cite{bai2021inference} that the resulting blocks satisfy Assumptions \ref{as:close}--\ref{as:close-4} with $X_i = (S_{1i}, \tilde X_{2i})$ as long as $E[\tilde X_{2i}^2] < \infty$. As an example, suppose $\tilde X_1 = (\text{gender, education level})$ and $\tilde X_2 = \text{income}$. In this case, the blocks could be formed by first stratifying according to gender and education level and then blocking on income. A similar blocking procedure is used in the experiment conducted by \cite{McKenzie2014} which we revisit in our empirical application in Section \ref{sec:application}.
    \item Recursive pairing: When $\mathrm{dim}(X_i) > 1$ and $|\mathcal D| = 2^K$ for some $K$, we could form blocks by repeatedly implementing the ``pairs of pairs'' algorithm in Section 4 of \cite{bai2021inference} to successively larger groups of size $2^k$ for $k = 0, 1, \ldots, K$. To do this, units would first be matched into pairs (using for instance the non-bipartite matching algorithm from the \texttt{R} package \texttt{nbpMatching}). Next, these matched pairs would themselves be matched into ``pairs of pairs" using the average value of the covariates in each pair, in order to generate groups of size four. Continuing in this fashion, we would match pairs of groups until obtaining groups of size $2^K$. This is the algorithm we employ in our simulation designs. Such an algorithm could again be shown to satisfy Assumptions \ref{as:close}--\ref{as:close-4}. 
\end{enumerate}


\section{Main Results} \label{sec:main}
\subsection{Inference for Matched Tuples Designs}\label{sec:main_tuple}
In this section, we study estimation and inference for a general $m$-dimensional parameter $\Delta_\nu(Q)$ under a matched tuples design. For a pre-specified $\ell \times 1$ column vector $\Delta_0$ and $\ell \times m$ matrix $\Psi$ of rank $\ell$, the testing problem of interest is 
\begin{equation}\label{eq:nu_test}
H_0: \Psi\Delta_{\nu}(Q) = \Delta_0 \text{ versus } H_1:\Psi\Delta_\nu(Q) \ne \Delta_0
\end{equation}
at level $\alpha \in (0, 1)$.
First we describe our estimator of $\Delta_\nu(Q)$. For $d \in \mathcal D$, define
\[ \hat \Gamma_n(d) := \frac{1}{n} \sum_{1 \leq i \leq J_n} I \{D_i = d\} Y_i~, \]
and let $\hat \Gamma_n = (\hat \Gamma_{n}(1), \ldots, \hat \Gamma_{n}(|\mathcal{D}|))'$. In words, $\hat\Gamma_n(d)$ is simply the sample mean of the observations with treatment status $D_i = d$, and $\hat \Gamma_n$ is the vector of sample means across all treatments $d \in \mathcal{D}$. With $\hat\Gamma_n$ in hand, our estimator of $\Delta_\nu(Q)$ is then given by
\[\hat{\Delta}_{\nu,n} := \nu\hat\Gamma_n~.\]
 In what follows, it will be useful to define $\Gamma_d(X_i) := E[Y_i(d)|X_i]$. Our first result derives the limiting distribution of $\hat{\Delta}_{\nu, n}$ under our maintained assumptions.
\begin{theorem}\label{thm:main_delta}
Suppose $Q$ satisfies Assumption \ref{as:Q} and the treatment assignment mechanism satisfies Assumptions \ref{as:D}--\ref{as:close}. Then,
\[ \sqrt{n}(\hat\Delta_{\nu,n}- \Delta_\nu(Q)) \stackrel{d}{\to} N(0, \mathbb V_\nu)~, \]
where $\mathbb V_\nu := \nu\mathbb V \nu'$, with 
\begin{equation} \label{eq:V}
\mathbb V := \mathbb V_1 + \mathbb V_2~,
\end{equation}
\[\mathbb V_1 := \mathrm{diag}(E[\mathrm{Var}[Y_i(d) | X_i]]: d \in \mathcal D)~,\]
\[\mathbb V_2 := \left[\frac{1}{|\mathcal D|} \mathrm{Cov}[\Gamma_d(X_i), \Gamma_{d'}(X_i)]\right]_{d,d'\in \mathcal{D}}~.\]
\end{theorem}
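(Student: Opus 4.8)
The plan is to reduce everything to the joint asymptotic normality of the vector of sample means $\hat\Gamma_n$: since $\hat\Delta_{\nu,n} = \nu\hat\Gamma_n$ and $\Delta_\nu(Q) = \nu\Gamma(Q)$ with $\nu$ a fixed matrix, once I show $\sqrt n(\hat\Gamma_n - \Gamma(Q)) \stackrel{d}{\to} N(0,\mathbb V)$ the conclusion follows by the continuous mapping theorem, with limiting variance $\nu\mathbb V\nu' = \mathbb V_\nu$. To set up notation, for each block $1\le j\le n$ and each $d\in\mathcal D$ let $i_j(d)\in\lambda_j$ denote the unique unit assigned treatment $d$, so that $\hat\Gamma_n(d) = \frac1n\sum_{j=1}^n Y_{i_j(d)}(d)$. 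Writing $\tilde Y_i(d) := Y_i(d) - \Gamma_d(X_i)$, which satisfies $E[\tilde Y_i(d)\mid X_i]=0$, and using $E[Y_i(d)] = E[\Gamma_d(X_i)]$, I decompose
\[
\sqrt n\bigl(\hat\Gamma_n(d) - \Gamma_d(Q)\bigr) = A_n(d) + B_n(d),
\]
where $A_n(d) := \frac{1}{\sqrt n}\sum_j\bigl(\Gamma_d(X_{i_j(d)}) - E[\Gamma_d(X_i)]\bigr)$ collects the conditional-mean part and $B_n(d) := \frac{1}{\sqrt n}\sum_j \tilde Y_{i_j(d)}(d)$ collects the residual part.

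For $A_n(d)$, the idea is to center $\Gamma_d(X_{i_j(d)})$ at its within-block average. Conditional on $X^{(n)}$ the assigned unit $i_j(d)$ is uniform on $\lambda_j$, so $E[\Gamma_d(X_{i_j(d)})\mid X^{(n)}] = \frac{1}{|\mathcal D|}\sum_{i\in\lambda_j}\Gamma_d(X_i)$, and the deviations from this mean are independent across blocks with conditional variance bounded by $\max_{i,k\in\lambda_j}|\Gamma_d(X_i)-\Gamma_d(X_k)|^2$. Using the Lipschitz property in Assumption \ref{as:Q}(c) together with Assumption \ref{as:close}, the conditional variance of the deviation of $A_n(d)$ from its conditional mean is bounded by a constant times $\frac1n\sum_j\max_{i,k\in\lambda_j}|X_i-X_k|^2 = o_P(1)$, so that $A_n(d) = \frac{1}{\sqrt{|\mathcal D|}}\cdot\frac{1}{\sqrt{J_n}}\sum_{i=1}^{J_n}\bigl(\Gamma_d(X_i)-E[\Gamma_d(X_i)]\bigr) + o_P(1)$. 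Since the $X_i$ are i.i.d., the ordinary multivariate CLT applied to the vector $(\Gamma_d(X_i))_{d\in\mathcal D}$ gives that $(A_n(d))_{d\in\mathcal D}$ is asymptotically $N(0,\mathbb V_2)$, with the factor $1/|\mathcal D|$ producing exactly the scaling in $\mathbb V_2$.

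For $B_n(d)$, I condition on $X^{(n)}$. By Assumption \ref{as:D} the summands are independent across blocks and each has conditional mean zero. A law-of-total-variance computation over the within-block permutation shows that $\cov(B_n(d),B_n(d')\mid X^{(n)}) = \frac{1}{J_n}\sum_i \var[Y_i(d)\mid X_i]$ when $d=d'$ and vanishes when $d\ne d'$ (in the latter case the two residuals belong to distinct units of the block and are conditionally independent with mean zero). By the law of large numbers and Assumption \ref{as:Q}(b), this converges in probability to $\mathbb V_1 = \diag(E[\var[Y_i(d)\mid X_i]]:d\in\mathcal D)$. Joint asymptotic normality of $(B_n(d))_{d\in\mathcal D}$ then follows, via the Cramér--Wold device, from a Lindeberg--Feller central limit theorem applied conditionally on $X^{(n)}$ to the triangular array of block contributions, with the Lindeberg condition verified from the second-moment bound in Assumption \ref{as:Q}(b). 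I expect this conditional CLT---precisely, showing that the conditional characteristic function of $B_n$ converges in probability to $\exp(-\tfrac12 t'\mathbb V_1 t)$ while handling the permutation-induced dependence within blocks---to be the main technical obstacle; it closely parallels the argument for matched pairs in \cite{bai2021inference}.

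Finally I would combine the two parts. After the above approximation, $A_n$ is asymptotically a function of $X^{(n)}$ alone, while $B_n$ has conditional mean zero given $X^{(n)}$, so the cross-covariance is asymptotically negligible. Concretely, writing the joint characteristic function as $E[\exp(it'A_n)\,E[\exp(it'B_n)\mid X^{(n)}]]$ and using that the inner conditional characteristic function converges in probability to the constant $\exp(-\tfrac12 t'\mathbb V_1 t)$ while $A_n \stackrel{d}{\to} N(0,\mathbb V_2)$, a bounded-convergence argument yields convergence of the joint characteristic function to $\exp(-\tfrac12 t'(\mathbb V_1+\mathbb V_2)t)$. This establishes $\sqrt n(\hat\Gamma_n-\Gamma(Q))\stackrel{d}{\to}N(0,\mathbb V)$ with $\mathbb V = \mathbb V_1+\mathbb V_2$, and premultiplying by $\nu$ completes the proof.
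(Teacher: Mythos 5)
Your proposal is correct and follows essentially the same route as the paper: the decomposition into $A_n(d)+B_n(d)$ is exactly the paper's $C_{n,d}+A_{n,d}$, the conditional-mean part is reduced to an i.i.d.\ average of $\Gamma_d(X_i)/|\mathcal D|$ via the Lipschitz and block-closeness assumptions and handled by the ordinary CLT, the residual part is handled by a conditional Lindeberg--Feller CLT, and the two are combined through the product structure of the characteristic function (which the paper packages as Lemma S.1.2 of \cite{bai2021inference}). The only cosmetic difference is that you compute the residual part's variance conditioning on $X^{(n)}$ alone (averaging out the within-block permutation exactly) rather than on $(X^{(n)},D^{(n)})$, which spares one application of the smoothness argument but changes nothing substantive.
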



To construct our test, we next define a consistent estimator for the asymptotic variance matrix $\mathbb V_\nu$. To begin, note by the law of total variance that
\[ E[\var[Y_i(d) | X_i]] = \var[Y_i(d)] - E[E[Y_i(d) | X_i]^2] + E[Y_i(d)]^2~. \]
Therefore, in order to estimate $\mathbb V_1$ consistently, it suffices to provide consistent estimators for $E[E[Y_i(d) | X_i]^2]$, $E[Y_i(d)]$, and $\var[Y_i(d)]$. A similar remark applies to $\mathbb V_2$. In light of this, define
\begin{align*}\label{eq:rho}
\hat \rho_n(d, d) &:= \frac{2}{n} \sum_{1 \leq j \leq \lfloor n / 2 \rfloor} \Big ( \sum_{i \in \lambda_{2j - 1}} Y_i I \{D_i = d\} \Big ) \Big ( \sum_{i \in \lambda_{2j}} Y_i I \{D_i = d\} \Big ) \\
\hat \rho_n(d, d') &:= \frac{1}{n} \sum_{1 \leq j \leq n} \Big ( \sum_{i \in \lambda_j} Y_i I \{D_i = d\} \Big ) \Big ( \sum_{i \in \lambda_j} Y_i I \{D_i = d'\} \Big ) \text{ if } d \neq d' \\
\hat \sigma_n^2(d) &:= \frac{1}{n} \sum_{1 \leq i \leq J_n} (Y_i - \hat \Gamma_n(d))^2 I \{D_i = d\}~.
\end{align*}
To understand the construction, note that in order to estimate $E[E[Y_i(d) | X_i]^2]$ consistently, we would ideally average over the products of the outcomes of two units with similar values of $X_i$ and both with treatment status $d$. By construction, however, only one unit in each block has treatment status $d$. To overcome this problem, note that Assumption \ref{as:close-4} ensures that in the limit units in adjacent blocks also have similar values of $X_i$. Therefore, to construct our estimator of $E[E[Y_i(d)|X_i]^2]$, denoted by $\hat{\rho}_n(d,d)$, we average over the product of the outcomes of the units with treatment status $d$ in two adjacent blocks. $\hat \rho_n(d, d)$ is analogous to the ``pairs of pairs" variance estimator in \cite{bai2021inference}. A similar construction has also been used in \cite{abadie2008estimation} in a related setting. On the other hand, for $d \neq d'$, we have distinct units with treatment status $d$ and $d'$ within each block, and therefore our estimator of $E[E[Y_i(d) | X_i] E[Y_i(d') | X_i]]$, denoted $\hat{\rho}_n(d,d')$, can be estimated using units within the same block.

Our estimator for $\mathbb V_\nu$ is then given by $\hat{\mathbb V}_{\nu, n} := \nu\hat{\mathbb V}_n\nu'$, where 
\begin{align*}
\hat{\mathbb{V}}_n &:= \hat{\mathbb{V}}_{1,n} + \hat{\mathbb{V}}_{2,n}\\
\hat{\mathbb{V}}_{1,n} &:= \mathrm{diag}\left(\hat{\mathbb{V}}_{1,n}(d):d \in \mathcal{D}\right)\\
\hat{\mathbb{V}}_{2,n} &:= \left[\hat{\mathbb{V}}_{2,n}(d,d')\right]_{d, d' \in \mathcal{D}}~,
\end{align*}
with
\begin{align*}
\hat{\mathbb{V}}_{1,n}(d) &:= \hat\sigma^2_n(d) - (\hat{\rho}_n(d,d) - \hat{\Gamma}_n^2(d))\\
\hat{\mathbb{V}}_{2,n}(d, d') &:= \frac{1}{|\mathcal{D}|}(\hat{\rho}_n(d, d') - \hat{\Gamma}_n(d)\hat{\Gamma}_n(d'))~.
\end{align*}

Given this estimator, our test is given by
\[\phi_n^{\nu}(Z^{(n)}) = I\{T_n^{\nu}(Z^{(n)}) > c_{1 - \alpha}\}~,\]
where 
\[T_n^{\nu}(Z^{(n)}) = n(\Psi\hat\Delta_{\nu,n}- \Psi \Delta_0)'(\Psi\hat{\mathbb V}_{\nu, n}\Psi')^{-1}(\Psi\hat\Delta_{\nu,n} - \Psi \Delta_0)~,\]
and $c_{1 - \alpha}$ is the $1 - \alpha$ quantile of the $\chi^2_\ell$ distribution. Our next result establishes the consistency of $\hat{\mathbb{V}}_n$ for $\mathbb{V}$ and the asymptotic validity of the above test.
\begin{theorem}\label{thm:V_const}
Suppose $Q$ satisfies Assumption \ref{as:Q} and the treatment assignment mechanism satisfies Assumptions \ref{as:D}--\ref{as:close-4}. Then,
\[\hat{\mathbb{V}}_n \stackrel{P}{\to} \mathbb{V}~.\]
Therefore, for the problem of testing (\ref{eq:nu_test}) at level $\alpha \in (0,1)$, $\phi_n^{\nu}(Z^{(n)})$ satisfies
\[\lim_{n \rightarrow \infty}E[\phi_n^{\nu}(Z^{(n)})] = \alpha~,\]
under the null hypothesis.
\end{theorem}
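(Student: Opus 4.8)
The plan is to prove the two claims in sequence, with essentially all the work concentrated in the consistency statement $\hat{\mathbb{V}}_n \stackrel{P}{\to}\mathbb{V}$; given this, the validity of $\phi_n^\nu$ follows from Theorem \ref{thm:main_delta} by standard arguments. Since $\mathbb{V}=\mathbb{V}_1+\mathbb{V}_2$ and the estimator has the matching additive structure $\hat{\mathbb{V}}_n=\hat{\mathbb{V}}_{1,n}+\hat{\mathbb{V}}_{2,n}$, I would prove consistency entrywise. Using the law-of-total-variance identity quoted before the theorem, the targets decompose as $E[\var[Y_i(d)|X_i]]=\var[Y_i(d)]-E[\Gamma_d(X_i)^2]+\Gamma_d(Q)^2$ and $\cov[\Gamma_d(X_i),\Gamma_{d'}(X_i)]=E[\Gamma_d(X_i)\Gamma_{d'}(X_i)]-\Gamma_d(Q)\Gamma_{d'}(Q)$. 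Matching terms, it suffices to establish four building-block convergences: $\hat\Gamma_n(d)\stackrel{P}{\to}\Gamma_d(Q)$, $\hat\sigma_n^2(d)\stackrel{P}{\to}\var[Y_i(d)]$, $\hat\rho_n(d,d)\stackrel{P}{\to}E[\Gamma_d(X_i)^2]$, and $\hat\rho_n(d,d')\stackrel{P}{\to}E[\Gamma_d(X_i)\Gamma_{d'}(X_i)]$ for $d\neq d'$; the continuous mapping theorem then assembles these into $\hat{\mathbb{V}}_{1,n}(d)\stackrel{P}{\to}E[\var[Y_i(d)|X_i]]$ and $\hat{\mathbb{V}}_{2,n}(d,d')\stackrel{P}{\to}\tfrac{1}{|\mathcal D|}\cov[\Gamma_d(X_i),\Gamma_{d'}(X_i)]$.

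The two building blocks not involving cross-block products are comparatively routine. For both I would condition on $(W^{(n)},X^{(n)})$, so that the only remaining randomness is the within-block assignment. Because each unit receives treatment $d$ with marginal probability $1/|\mathcal D|$ under Assumption \ref{as:D}, the conditional mean of $\hat\Gamma_n(d)=\tfrac{1}{n}\sum_i I\{D_i=d\}Y_i(d)$ equals $\tfrac{1}{J_n}\sum_i Y_i(d)$, which converges to $\Gamma_d(Q)$ by the law of large numbers; the conditional variance is $O(1/n)$ by Assumption \ref{as:Q}(b) and the independence of assignments across blocks, so the fluctuation around the conditional mean vanishes. An identical argument, using $E[Y_i^2(d)]<\infty$, handles $\hat\sigma_n^2(d)$.

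The crux is the consistency of the cross-block estimators $\hat\rho_n(d,d')$ and $\hat\rho_n(d,d)$, which is the content of Lemma \ref{lem:rho-dd'} and the main obstacle. Writing $i_d(j)$ for the unique unit in block $\lambda_j$ assigned to $d$, the within-block estimator is $\hat\rho_n(d,d')=\tfrac{1}{n}\sum_{j}Y_{i_d(j)}(d)Y_{i_{d'}(j)}(d')$, a product of two distinct units in the same block. I would decompose each factor into its conditional mean and residual, $Y_{i_d(j)}(d)=\Gamma_d(X_{i_d(j)})+\epsilon_d(j)$ with $E[\epsilon_d(j)\mid X^{(n)},D^{(n)}]=0$. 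The three terms containing a residual vanish in probability: conditional on covariates and assignments the residuals of distinct units are independent and mean zero, so each such term is a normalized sum of conditionally uncorrelated, mean-zero summands with $O(1/n)$ conditional variance. For the surviving term $\tfrac{1}{n}\sum_j\Gamma_d(X_{i_d(j)})\Gamma_{d'}(X_{i_{d'}(j)})$ I would first replace $X_{i_{d'}(j)}$ by $X_{i_d(j)}$; the incurred error is controlled via the Lipschitz property in Assumption \ref{as:Q}(c), the within-block closeness in Assumption \ref{as:close}, and the Cauchy--Schwarz inequality, and vanishes in probability. The resulting term $\tfrac{1}{n}\sum_i I\{D_i=d\}\Gamma_d(X_i)\Gamma_{d'}(X_i)$ then converges to $E[\Gamma_d(X_i)\Gamma_{d'}(X_i)]$ by the same conditioning-and-LLN argument as above. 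The estimator $\hat\rho_n(d,d)$ is handled identically, except that the two factors come from the adjacent blocks $\lambda_{2j-1}$ and $\lambda_{2j}$ and the factor $2/n$ offsets the halved number of summands; here the covariate-replacement step uses the adjacent-block closeness in Assumption \ref{as:close-4} in place of Assumption \ref{as:close}, which is precisely why the latter assumption is needed for this theorem but not for Theorem \ref{thm:main_delta}.

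Assembling the building blocks gives $\hat{\mathbb{V}}_n\stackrel{P}{\to}\mathbb{V}$, and hence $\hat{\mathbb{V}}_{\nu,n}=\nu\hat{\mathbb{V}}_n\nu'\stackrel{P}{\to}\nu\mathbb{V}\nu'=\mathbb{V}_\nu$. For the test, note that $\mathbb{V}$ is positive definite because Assumption \ref{as:Q}(a) makes the diagonal $\mathbb{V}_1$ strictly positive while $\mathbb{V}_2$ is positive semidefinite, so $\Psi\mathbb{V}_\nu\Psi'=(\Psi\nu)\mathbb{V}(\Psi\nu)'$ is invertible whenever $\Psi\nu$ has full row rank $\ell$. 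Under $H_0$ we may write $\Psi\hat\Delta_{\nu,n}-\Delta_0=\Psi(\hat\Delta_{\nu,n}-\Delta_\nu(Q))$, so Theorem \ref{thm:main_delta} gives $\sqrt{n}(\Psi\hat\Delta_{\nu,n}-\Delta_0)\stackrel{d}{\to}N(0,\Psi\mathbb{V}_\nu\Psi')$. Combining this with $\Psi\hat{\mathbb{V}}_{\nu,n}\Psi'\stackrel{P}{\to}\Psi\mathbb{V}_\nu\Psi'$ via Slutsky's lemma and the continuous mapping theorem yields $T_n^\nu(Z^{(n)})\stackrel{d}{\to}\chi^2_\ell$, and since the $\chi^2_\ell$ distribution is continuous at its $1-\alpha$ quantile $c_{1-\alpha}$, we conclude $E[\phi_n^\nu(Z^{(n)})]=P(T_n^\nu>c_{1-\alpha})\to\alpha$.
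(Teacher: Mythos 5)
Your overall architecture is exactly the paper's: the paper proves the theorem by combining three auxiliary lemmas (consistency of $\frac{1}{n}\sum_i Y_i^r(d)I\{D_i=d\}$ for $r=1,2$, of $\hat\rho_n(d,d')$, and of $\hat\rho_n(d,d)$) with the continuous mapping theorem, and your four building blocks, your use of the Lipschitz condition plus Assumptions \ref{as:close}/\ref{as:close-4} to control the covariate-replacement error in the systematic part of $\hat\rho_n(d,d')$ and $\hat\rho_n(d,d)$, and your Slutsky argument for the $\chi^2_\ell$ limit of $T_n^\nu$ all line up with what the paper does.

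The one step that does not hold as written is your disposal of the mean-zero fluctuation terms via conditional Chebyshev with an asserted ``$O(1/n)$ conditional variance.'' Assumption \ref{as:Q}(b) only gives $E[Y_i^2(d)]<\infty$, so the conditional variances you need are not $O(1/n)$: for $\hat\sigma_n^2(d)$ the relevant quantity is of order $\frac{1}{n^2}\sum_i Y_i^4(d)$, and for the residual terms in $\hat\rho_n(d,d')$ it is of order $\frac{1}{n^2}\sum_j \sigma_d^2(X_{i_d(j)})\sigma_{d'}^2(X_{i_{d'}(j)})$; neither of these is $n^{-1}$ times a convergent average unless you assume fourth moments. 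This is precisely why the paper does \emph{not} use a second-moment bound here and instead verifies a conditional uniform-integrability condition (Lemma S.1.3 of \cite{bai2021inference}) for the triangular array of block-level products, using truncation inequalities of the form $|ab|I\{|ab|>\lambda\}\le a^2I\{|a|>\sqrt\lambda\}+b^2I\{|b|>\sqrt\lambda\}$ to reduce everything to second moments of $Y_i(d)$ and $\Gamma_d(X_i)$. Your route can be rescued without uniform integrability — e.g., bound $\frac{1}{n^2}\sum_j A_jB_j\le \bigl(\max_j A_j/n\bigr)\cdot\frac{1}{n}\sum_j B_j$ and use that $\max_{i\le m}Z_i/m\to 0$ a.s. for i.i.d.\ nonnegative $Z_i$ with finite mean — but as stated the ``$O(1/n)$'' claim is a gap, and you should either supply that maximal-term argument or switch to the paper's uniform-integrability WLLN. (A minor further point: invertibility of $\Psi\mathbb V_\nu\Psi'$ requires $\Psi\nu$ to have full row rank, which you correctly flag as an additional condition; the paper leaves it implicit.)
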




\begin{example}{(Inference for Matched Triples)}\label{ex:matched-triples}
Consider the setting where $\mathcal{D} = \{1, 2, 3\}$, where we consider $d = 1$ as a control arm and $d = 2, 3$ as treatment sub-arms. See, for example, \cite{bold2018experimental} and \cite{brown2020inducing}. Suppose our parameter of interest is the vector of average treatment effects for the treatments $d = 2, 3$ versus control $d = 1$. In this case, the parameter of interest is given by $\Delta_\nu(Q)$, where
\[
\nu = 
\begin{pmatrix}
-1 & 1 & 0 \\
-1 & 0 & 1
\end{pmatrix}
~.\]
It follows from Theorem \ref{thm:main_delta} that
\[\sqrt n(\hat\Delta_{\nu,n}- \Delta_\nu(Q)) \stackrel{d}{\to} N(0, \mathbb{V}_\nu)~,\]
where
\[ \mathbb V_\nu = \begin{pmatrix}
\sigma_{\nu,1,1}^2 & \sigma_{\nu,1,2}^2  \\
\sigma_{\nu,1,2}^2 & \sigma_{\nu,2,2}^2
\end{pmatrix}~,\]
and
\begin{align*}
\sigma_{\nu,1,1}^2 &= E[\var[Y_i(1) | X_i]] + E[\var[Y_i(2) | X_i]] + \frac{1}{3} E\left[ \left( (\Gamma_1(X_i) - \Gamma_1) - (\Gamma_2(X_i) - \Gamma_2) \right)^2 \right] \\
\sigma_{\nu,2,2}^2 &= E[\var[Y_i(1) | X_i]] + E[\var[Y_i(3) | X_i]] + \frac{1}{3} E\left[ \left( (\Gamma_1(X_i) - \Gamma_1) - (\Gamma_3(X_i) - \Gamma_3) \right)^2 \right] \\
\sigma_{\nu,1,2}^2 &= E[\var[Y_i(1) | X_i]] + \frac{1}{3} E\left[\left( (\Gamma_1(X_i) - \Gamma_1) - (\Gamma_2(X_i) - \Gamma_2) \right)\left( (\Gamma_1(X_i) - \Gamma_1) - (\Gamma_3(X_i) - \Gamma_3) \right) \right] ~,
\end{align*}
where we recall $\Gamma_d(X_i) = E[Y_i(d)|X_i]$. These variance formulas imply the following two observations: first, by decomposing $\sigma^2_{\nu,1,1}$ using the law of total variance, we can show that the commonly-used two-sample $t$-test is conservative when testing the null hypothesis on the contrast of any two treatment levels in a matched tuples design. A similar observation was made in the special case of a matched-pair design in \cite{bai2021inference}. Second, the adjusted $t$-test developed in \cite{bai2021inference} is also conservative for testing such hypotheses. Specifically, \cite{bai2021inference} study inference for $E[Y(2) - Y(1)]$ in a matched-pair design when $|\mathcal D| = 2$ and the sample size is $2n$. In a matched triples experiment with $|\mathcal D| = 3$ and sample size $3n$, researchers may be tempted to apply the variance estimator from Theorem 3.3 in \cite{bai2021inference} to the subsample with $D_i \in \{1, 2\}$. However, it can be shown in our framework that the limit of the variance estimator from \cite{bai2021inference} is given by replacing $\frac{1}{3}$ in the last term of $\sigma_{\nu, 1, 1}^2$ with $\frac{1}{2}$. Therefore, the test which studentizes using the variance estimator from \cite{bai2021inference} would be asymptotically conservative in our setting. 
\end{example}

Next, we study the properties of two commonly recommended inference procedures in the analysis of matched tuple designs. The first procedure is a $t$-test obtained from a linear regression of outcomes on treatment indicators and block fixed effects. Specifically, we consider a $t$-test obtained from the following regression:
\begin{equation} \label{eq:sfe}
Y_i = \sum_{d \in \mathcal D \backslash \{1\}} \beta(d) I \{D_i = d\} + \sum_{1 \leq j \leq n} \delta_j I \{i \in \lambda_j\} + \epsilon_i~,
\end{equation}
which we interpret as the projection of $Y$ on the indicators for treatment status and block fixed effects. Let $\hat \beta_n(d)$, $d \in \mathcal D \backslash \{1\}$ and $\hat \delta_{j, n}$, $1 \leq j \leq n$ denote the OLS estimators of $\beta(d)$, $d \in \mathcal D \backslash \{1\}$ and $\delta_j$, $1 \leq j \leq n$. It is common in practice to use $\hat \beta_n(d)$ as an estimator for the pairwise average treatment effect between treatment $d$ and treatment $1$. See, for instance, \cite{McKenzie2013} and \cite{McKenzie2014}. Furthermore, researchers often conduct inference on the pairwise ATEs using the heteroskedasticity-robust variance estimator obtained from \eqref{eq:sfe}. Formally, for $d \in \mathcal D \backslash \{1\}$ and $\Delta_0 \in \mathbf R$, consider the problem of testing
\begin{equation} \label{eq:H0-sfe}
E_Q[Y_i(d)] - E_Q[Y_i(1)] = \Delta_0 \text{ versus } H_1: E_Q[Y_i(d)] - E_Q[Y_i(1)] \neq \Delta_0
\end{equation}
at level $\alpha \in (0, 1)$. Let $\kappa_j\cdot \hat {\mathbb V}_n^{\rm sfe}(d, 1)$ denote the ``HC$j$" heteroskedasticity-robust variance estimator of $\hat \beta_n(d)$ from the linear regression in \eqref{eq:sfe}, where $\kappa_j$ for $j \in \{0, 1\}$ corresponds to one of two common degrees of freedom corrections \citep[see][]{mackinnon1985some}:
\[ \kappa_j  = \begin{cases} 
      1 & \text{if $j = 0$} \\
      \frac{\mathcal{|D|}n}{|\mathcal{D}|n - (|\mathcal{D}| - 1 + n)} & \text{if $j = 1$}~. 
   \end{cases}
\] The test is then defined as
\begin{equation} \label{eq:test-sfe}
\phi_n^{\rm sfe}(Z^{(n)}) = I \{|T_n^{\rm sfe}(Z^{(n)})| > z_{1 - \frac{\alpha}{2}}\}~,
\end{equation}
where $z_{1 - \frac{\alpha}{2}}$ is the $(1 - \frac{\alpha}{2})$-th quantile of the standard normal distribution and
\begin{equation} \label{eq:stat-sfe}
T_n^{\rm sfe}(Z^{(n)}) = \frac{\hat \beta_n(d) - \Delta_0}{\sqrt{\kappa_j \cdot \hat {\mathbb V}_n^{\rm sfe}(d, 1)}}~.
\end{equation}
The following theorem shows that the OLS estimator $\hat \beta_n(d)$ is numerically equivalent to the standard difference-in-means estimator. However, it shows that the $t$-test defined in (\ref{eq:test-sfe}) is not generally valid for testing the null hypothesis defined in (\ref{eq:H0-sfe}). 
\begin{theorem} \label{thm:sfe}
Suppose $Q$ satisfies Assumption \ref{as:Q} and the treatment assignment mechanism satisfies Assumptions \ref{as:D}--\ref{as:close-4}. Then,
\[ \hat \beta_n(d) = \hat \Gamma_n(d) - \hat \Gamma_n(1) \text{ for } d \in \mathcal D \backslash \{1\}~. \]
Moreover, 
\begin{itemize}
    \item  Using estimator $\mathrm{HC}0$, the limiting rejection probability of the test defined in \eqref{eq:test-sfe} could be strictly larger than $\alpha$. 
    \item Using estimator $\mathrm{HC}1$, the limiting rejection probability of the test defined in \eqref{eq:test-sfe} could be strictly larger than $\alpha$ for $|\mathcal{D}| > 2$.
\end{itemize}
\end{theorem}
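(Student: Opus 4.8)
The plan is to treat the three conclusions separately: an exact finite-sample linear-algebra identity for the point estimator, and then a computation of the probability limits of the two studentizing variances, compared against the true asymptotic variance supplied by Theorem \ref{thm:main_delta}. Throughout write $\tau_d := \Gamma_d - \Gamma_1$ and $\mu_d(X_i) := \Gamma_d(X_i) - \Gamma_d$, and let $V_d := E[\var[Y_i(d)\mid X_i]]$. For the equivalence $\hat\beta_n(d) = \hat\Gamma_n(d) - \hat\Gamma_n(1)$, I would partial out the block fixed effects via Frisch--Waugh--Lovell. Because each treatment is assigned exactly once per block, the block mean of $I\{D_i = d\}$ is $1/|\mathcal D|$ in every block, so the residualized regressor is $\tilde D_{id} := I\{D_i = d\} - 1/|\mathcal D|$; a one-line calculation gives $\sum_{i\in\lambda_j}\tilde D_{id}\tilde D_{id'} = I\{d=d'\} - 1/|\mathcal D|$, whence the residualized cross-product matrix (indexed by $d \in \mathcal D \setminus \{1\}$) is $n(I - \tfrac1{|\mathcal D|}\mathbf 1 \mathbf 1')$, with inverse $\tfrac1n(I + \mathbf 1\mathbf 1')$ by Sherman--Morrison. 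Computing $\sum_i \tilde D_{id} Y_i = n\hat\Gamma_n(d) - \tfrac{n}{|\mathcal D|}\sum_{d'}\hat\Gamma_n(d')$ and substituting, one checks directly that $\hat\beta_n(d) = \hat\Gamma_n(d) - \hat\Gamma_n(1)$ solves the normal equations.

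For the two variance claims I would first record a usable form of the HC$0$ variance. Applying Frisch--Waugh--Lovell to the robust sandwich shows that the HC$0$ variance of $\hat\beta_n(d)$ extracted from \eqref{eq:sfe} equals $A\big(\sum_i \tilde x_i \tilde x_i' \hat\epsilon_i^2\big)A$, where $\tilde x_i := (\tilde D_{id})_{d\neq 1}$, $A := \tfrac1n(I+\mathbf 1\mathbf 1')$, and $\hat\epsilon_i$ are the full-regression residuals. The block normal equations give $\hat\epsilon_{(j,e)} = (Y_{(j,e)} - \bar Y_j) - (\hat\beta_n(e) - \bar\beta_n)$, where $\bar Y_j$ is the block mean, $\bar\beta_n := \tfrac1{|\mathcal D|}\sum_{e\neq 1}\hat\beta_n(e)$, and $\hat\beta_n(1):=0$. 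Combining Theorem \ref{thm:main_delta} (so $\hat\beta_n(e) \stackrel{P}{\to}\tau_e$) with Assumption \ref{as:close} (which forces covariates within a block to coincide in the limit) and the smoothness of Assumption \ref{as:Q}(c), I would show the limiting squared residual for a treatment-$e$ unit is $(\epsilon_e^*)^2$ with $\epsilon_e^* = (\mu_e(X_i) - \bar\mu(X_i)) + (U_e - \bar U)$, i.e. the within-block demeaning of the centered outcome $Y_i(e) - \Gamma_e$ across $|\mathcal D|$ conditionally-independent draws sharing a common $X$.

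This yields a clean meat limit: by a law of large numbers across blocks, $\tfrac1n\sum_i \tilde x_i \tilde x_i'\hat\epsilon_i^2 \stackrel{P}{\to} \Omega^*$ with $\Omega^*_{d,d'} = \sum_{e\in\mathcal D} w_e^{(d)} w_e^{(d')} s_e$, where $w_e^{(d)} := I\{e=d\} - 1/|\mathcal D|$ and $s_e := E[(\epsilon_e^*)^2]$. Hence $n\hat{\mathbb V}_n^{\rm sfe}(d,1) \stackrel{P}{\to} \big[(I+\mathbf 1\mathbf 1')\Omega^*(I+\mathbf 1\mathbf 1')\big]_{dd}$, and evaluating the quadratic form the weights collapse: $\sum_{d_1\neq 1}(1+I\{d_1=d\})w_e^{(d_1)}$ equals $+1$, $-1$, $0$ according as $e=d$, $e=1$, or otherwise, giving the remarkably simple expression $n\hat{\mathbb V}_n^{\rm sfe}(d,1) \stackrel{P}{\to} s_1 + s_d$. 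Writing $s_e = E[(\mu_e(X_i) - \bar\mu(X_i))^2] + u_e$ with $u_e = (1-\tfrac1{|\mathcal D|})^2 V_e + \tfrac1{|\mathcal D|^2}\sum_{e'\neq e} V_{e'}$, and recalling from Theorem \ref{thm:main_delta} that $\sigma^2_{\rm true} := V_1 + V_d + \tfrac1{|\mathcal D|}E[(\mu_1(X_i)-\mu_d(X_i))^2]$, I would finish as follows. Since $\kappa_0 = 1$, a DGP with homogeneous conditional means (so every $\mu_e$ equal, giving $s_e = u_e$) yields $s_1 + s_d = (1-\tfrac1{|\mathcal D|})(V_1 + V_d) < V_1 + V_d = \sigma^2_{\rm true}$ already when all $V_e$ are equal; thus $T_n^{\rm sfe}$ converges to a normal with variance exceeding one and the rejection probability exceeds $\alpha$, proving the HC$0$ claim. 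For HC$1$, $\kappa_1 \to |\mathcal D|/(|\mathcal D|-1)$, and under the same homogeneous-means DGP over-rejection occurs exactly when $2\bar V < V_1 + V_d$ with $\bar V := \tfrac1{|\mathcal D|}\sum_e V_e$; this is impossible at $|\mathcal D| = 2$ (there the HC$1$ limit weakly exceeds $\sigma^2_{\rm true}$, i.e. it is conservative), but is easily arranged once $|\mathcal D| > 2$ by taking $V_e$ small for $e \notin \{1,d\}$, e.g. $|\mathcal D| = 3$ with a small third conditional variance.

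The main obstacle is the middle step: rigorously passing to the meat limit $\Omega^*$. The residuals involve the estimated block fixed effects (a number of nuisance parameters growing with the sample) and the block means $\bar Y_j$, so the limit must be justified with the same machinery used for the variance estimators earlier in the paper---conditioning on $X^{(n)}$, using that blocks are i.i.d. with independent uniform permutation assignments, and invoking Assumptions \ref{as:close} and \ref{as:Q}(c) to replace $\Gamma_e(X_i)$ by a common within-block value up to asymptotically negligible error (cf. the proof of Lemma \ref{lem:rho-dd'}). Once this law of large numbers is established, the algebra collapsing the quadratic form to $s_1 + s_d$ and the construction of the counterexample DGPs are routine.
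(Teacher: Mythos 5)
Your proposal is correct and follows essentially the same route as the paper's proof: the Frisch--Waugh--Lovell identity for both the point estimate and the robust sandwich, the closed-form block-demeaned residuals, a law of large numbers yielding a meat limit whose quadratic form collapses to $s_1+s_d$ (which matches the paper's probability limit term by term, since $E[(\mu_e(X_i)-\bar\mu(X_i))^2]=\var[\Gamma_e(X_i)-\tfrac{1}{|\mathcal D|}\sum_{d'}\Gamma_{d'}(X_i)]$), and the identical counterexample DGPs, including the threshold $2\bar V<V_1+V_d$ which reproduces the paper's factor $\frac{|\mathcal D|^2-2|\mathcal D|+2}{|\mathcal D|^2-|\mathcal D|}$. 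The only difference is organizational --- you collapse the sandwich via the weights $c_e=I\{e=d\}-I\{e=1\}$ rather than computing each entry of the meat matrix explicitly --- and you correctly identify that the remaining work is the same uniform-integrability/LLN machinery the paper uses in its auxiliary lemmas.
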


\cite{bai2021inference} remark that the test defined in \eqref{eq:test-sfe} is conservative in the context of a matched-pair design when using $\mathrm{HC}1$. Theorem \ref{thm:sfe} shows that, when considering a matched tuples design with more than two treatments, this is no longer necessarily the case. 

\begin{remark}\label{rem:HC_conservative}
An inspection of the proof of Theorem \ref{thm:sfe} reveals that the probability limit of $n\cdot\kappa_1\hat{\mathbb{V}}^{\rm sfe}_n(d,1)$ is given by
\begin{align*}
    &\frac{|\mathcal{D}|}{|\mathcal{D}| - 1}\Big(\var \left [ \Gamma_1(X_i) - \frac{1}{|\mathcal D|} \sum_{d' \in \mathcal D} \Gamma_{d'}(X_i) \right ] + \left ( 1 - \frac{1}{|\mathcal D|} \right )^2 E[\var[Y_i(1) | X_i]] + \frac{1}{|\mathcal D|^2} \sum_{d' \in \mathcal D \backslash \{1\}} E[\var[Y_i(d') | X_i]] \\
    & \hspace{3em} + \var \left [ \Gamma_d(X_i) - \frac{1}{|\mathcal D|} \sum_{d' \in \mathcal D} \Gamma_{d'}(X_i) \right ] + \left ( 1 - \frac{1}{|\mathcal D|} \right )^2 E[\var[Y_i(d) | X_i]] + \frac{1}{|\mathcal D|^2} \sum_{d' \in \mathcal D \backslash \{d\}} E[\var[Y_i(d') | X_i]]\Big)~,
\end{align*}
whereas the true asymptotic variance of $\hat{\beta}_n(d)$ is given by
\[ E\left[\var[Y_i(d) | X_i]] + E[\var[Y_i(1) | X_i] \right] + \frac{1}{|\mathcal D|} E\left[ \left( (\Gamma_d(X_i) - \Gamma_d) - (\Gamma_1(X_i) - \Gamma_1) \right)^2 \right]~. \]
From these expressions, we can conclude that when $|\mathcal{D}|$ is large it is likely that $\kappa_1\hat{\mathbb{V}}^{\rm sfe}_n(d,1)$ is conservative. However, as shown in the proof of Theorem \ref{thm:sfe}, this cannot be guaranteed for finite $|\mathcal{D}| > 2$ in general.
\end{remark}

The second procedure is a block-cluster robust $t$-test which modifies a recent proposal in \cite{chaisemartin2022at} to the setting with multiple treatments. Specifically, we consider a cluster-robust $t$-test constructed from a regression of outcomes on a constant and treatment indicators:
\[ Y_i = \gamma(1) + \sum_{d \in \mathcal D \backslash \{1\}} \gamma(d) I \{D_i = d\} + \epsilon_i~, \]
where clusters are defined at the level of \emph{blocks} of units $\{\lambda_j\}_{1 \le j \le \mathcal{D}}$.
Let $\hat{\gamma}_n(d)$, $d \in \mathcal{D} \backslash \{1\}$ denote the OLS estimator of $\gamma(d)$, it then follows immediately that $\hat{\gamma}_n(d) = \hat{\Gamma}_n(d) - \hat{\Gamma}_n(1)$. We then consider the problem of testing \eqref{eq:H0-sfe} at level $\alpha \in (0, 1)$ using a test defined by
\[\phi^{\rm bcve}_n(Z^{(n)}) = I\{|T^{\rm bcve}_n(Z^{(n)})| > z_{1 - \frac{\alpha}{2}}\}~,\]
where $z_{1 - \frac{\alpha}{2}}$ is the $(1 - \frac{\alpha}{2})$-th quantile of the standard normal distribution and 
\begin{equation} \label{eq:stat-bcve}
T_n^{\rm bcve}(Z^{(n)}) = \frac{\hat \gamma_n(d) - \Delta_0}{\sqrt{\hat {\mathbb V}_n^{\rm bcve}(d)}}~,
\end{equation}
with $\hat {\mathbb V}_n^{\rm bcve}(d)$ denoting the $d$-th diagonal element of the block-cluster variance estimator defined as: 
\begin{equation}\label{eq:BCVE}
    \hat{\mathbb{V}}^{\rm bcve}_n =  \left(\sum_{1\leq j \leq n} \sum_{i \in \lambda_j} C_i C_i'\right)^{-1} \left(\sum_{1\leq j \leq n} \left(\sum_{i\in\lambda_j} \hat \epsilon_i C_i \right) \left(\sum_{i\in\lambda_j} \hat \epsilon_i C_i \right)^{\prime} \right)\left(\sum_{1\leq j \leq n} \sum_{i \in \lambda_j} C_i C_i'\right)^{-1}~,
\end{equation}
where $C_i = (1, I \{D_i = 2\},\dots, I \{D_i = |\mathcal{D}|\})'$ and $\hat \epsilon_i = \sum_{d \in \mathcal{D}\backslash \{1\}} (Y_i - \hat \gamma_n(d)) I\{D_i = d\} + Y_i I\{D_i=1\}- \hat\gamma_n(1)$.

The following theorem shows that the $t$-test defined in \eqref{eq:stat-bcve} is generally conservative for testing the null hypothesis defined in \eqref{eq:H0-sfe}. 

\begin{theorem}\label{thm:bcve}
Consider the block-cluster variance estimator $\hat{\mathbb V}_n^{\rm bcve}$ as defined in \eqref{eq:BCVE} in the Appendix. Then the $d$-th diagonal element of this estimator is equal to
\[n\cdot\hat{\mathbb {V}}_n^{\rm bcve}(d) = \frac{1}{n} \sum_{1 \leq j \leq n} \left ( \sum_{i \in \lambda_j} Y_i I \{D_i = d\} - \sum_{i \in \lambda_j} Y_i I \{D_i = 1\} \right )^2 - (\hat \Gamma_n(d) - \hat \Gamma_n(1))^2 ~. \]
Moreover, under Assumptions \ref{as:Q}--\ref{as:close}, 
\[n\cdot\hat{\mathbb {V}}_n^{\rm bcve}(d) \xrightarrow{p} E[\var[Y_i(d) | X_i]] + E[\var[Y_i(1) | X_i]] + E\left[ \left( (\Gamma_d(X_i) - \Gamma_d) - (\Gamma_1(X_i) - \Gamma_1) \right)^2 \right]~.\]
It thus follows that the test defined in \eqref{eq:stat-bcve} is conservative for testing the null hypothesis defined in \eqref{eq:H0-sfe} unless
\begin{equation}\label{eq:bcve-exact}
E\left[ \left( (\Gamma_d(X_i) - \Gamma_d) - (\Gamma_1(X_i) - \Gamma_1) \right)^2 \right] = 0~.
\end{equation}
\end{theorem}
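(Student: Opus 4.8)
The plan is to prove the two displayed equations in turn — first the closed-form algebraic identity for the $d$-th diagonal element of the block-cluster variance estimator, and then its probability limit — and finally to read off conservativeness by comparing the limit with the true asymptotic variance delivered by Theorem~\ref{thm:main_delta}.

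For the algebraic identity I would start from the sandwich form of the cluster-robust estimator in \eqref{eq:BCVE}. Taking the regressors to be the constant together with the indicators $I\{D_i = d\}$, $d \in \mathcal D\setminus\{1\}$, the Gram matrix has an arrowhead structure — $|\mathcal D|n$ in the top-left corner, $n$ along the remaining entries of the first row and column, and $n$ times the identity in the lower-right block — precisely because within each of the $n$ blocks every treatment appears exactly once; this is inverted explicitly by the block-matrix inversion formula. The key simplification is on the residuals: since OLS on a constant and group indicators returns group means, the fitted value of a unit with $D_i = d$ equals $\hat\Gamma_n(d)$, so its residual is $Y_i - \hat\Gamma_n(D_i)$. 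Consequently the within-block score (the sum over $i \in \lambda_j$ of the regressor vector times $\hat\epsilon_i$) has first coordinate $\sum_{i\in\lambda_j}\hat\epsilon_i$ and $d$-th coordinate equal to the residual of the unique unit in block $j$ with $D_i=d$. Assembling the ``meat'' $\sum_j (\text{score}_j)(\text{score}_j)'$ and conjugating by the inverse Gram matrix, the $d$-th diagonal entry collapses after cancellation to the sample second moment of the within-block contrast $\sum_{i\in\lambda_j} Y_i I\{D_i=d\} - \sum_{i\in\lambda_j}Y_i I\{D_i=1\}$ minus $(\hat\Gamma_n(d)-\hat\Gamma_n(1))^2$, i.e.\ the stated formula, which is just the sample variance of the block-level contrasts.

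For the probability limit, write $Y_{j,d} := \sum_{i\in\lambda_j}Y_iI\{D_i=d\}$, so the formula equals $\frac1n\sum_j (Y_{j,d}-Y_{j,1})^2 - \big(\frac1n\sum_j(Y_{j,d}-Y_{j,1})\big)^2$. Since $\frac1n\sum_j(Y_{j,d}-Y_{j,1}) = \hat\Gamma_n(d)-\hat\Gamma_n(1)\xrightarrow{p}\Gamma_d-\Gamma_1$, it remains to find the limit of the first term, and here I would invoke the same law-of-large-numbers machinery developed for Theorems~\ref{thm:main_delta}--\ref{thm:V_const}: under Assumption~\ref{as:close} the two matched units in a block share a common covariate value in the limit, and the Lipschitz conditions of Assumption~\ref{as:Q}(c) let me replace their conditional means by the value at a common $X$. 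Decomposing $Y_{j,d}=\Gamma_d(X)+(Y_{j,d}-\Gamma_d(X))$ and conditioning on covariates, the two idiosyncratic terms are conditionally mean-zero, independent across the two distinct units, and independent of the conditional means, so every cross term vanishes and $\frac1n\sum_j(Y_{j,d}-Y_{j,1})^2 \xrightarrow{p} E[\var[Y_i(d)|X_i]] + E[\var[Y_i(1)|X_i]] + E[(\Gamma_d(X_i)-\Gamma_1(X_i))^2]$. Subtracting $(\Gamma_d-\Gamma_1)^2$ and splitting $E[(\Gamma_d(X_i)-\Gamma_1(X_i))^2] = E[((\Gamma_d(X_i)-\Gamma_d)-(\Gamma_1(X_i)-\Gamma_1))^2] + (\Gamma_d-\Gamma_1)^2$ gives exactly the stated limit.

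Finally, for conservativeness I would specialize Theorem~\ref{thm:main_delta} to $\nu$ equal to the row vector with $+1$ in position $d$, $-1$ in position $1$, and zeros elsewhere. A direct computation of $\nu\mathbb V\nu'$ gives the true asymptotic variance $E[\var[Y_i(d)|X_i]]+E[\var[Y_i(1)|X_i]] + \tfrac{1}{|\mathcal D|}E[((\Gamma_d(X_i)-\Gamma_d)-(\Gamma_1(X_i)-\Gamma_1))^2]$, which differs from the BCVE limit only in the coefficient on the last term, $\tfrac1{|\mathcal D|}$ versus $1$. Since $|\mathcal D|\ge 2$, the BCVE limit weakly exceeds the truth, strictly unless \eqref{eq:bcve-exact} holds, so the test over-covers; this also makes transparent that the conservativeness grows with $|\mathcal D|$. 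The main obstacle I anticipate is the first step: carefully carrying out the inversion of the Gram matrix and the sandwich contraction so that the many off-diagonal contributions cancel and leave precisely the clean block-variance formula. The probability-limit step is essentially an application of the paper's existing arguments (the only subtlety being to verify, via Assumptions~\ref{as:Q}(c) and~\ref{as:close}, that the cross terms between the two distinct matched units vanish), and the final comparison is then immediate.
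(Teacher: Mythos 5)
Your proposal is correct and follows essentially the same route as the paper's proof: an explicit inversion of the Gram matrix in the sandwich formula (exploiting that the residuals are $Y_i - \hat\Gamma_n(D_i)$ since the regression is saturated in treatment) to obtain the block-level sample-variance form, then the same law-of-large-numbers machinery underlying Lemmas \ref{lem:mean} and \ref{lem:rho-dd'} for the probability limit, and finally the comparison with $\nu\mathbb V\nu'$ from Theorem \ref{thm:main_delta}, whose $\tfrac{1}{|\mathcal D|}$ coefficient on the heterogeneity term versus the estimator's coefficient of $1$ yields the conservativeness claim. The only differences are presentational (you describe the Gram inverse via its arrowhead structure and re-derive the cross-term cancellation directly rather than citing the auxiliary lemmas), so there is nothing substantive to flag.
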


\begin{remark}\label{rem:bcve}
An inspection of the proof of Theorem \ref{thm:bcve} reveals that, unless \eqref{eq:bcve-exact} holds, the difference between the probability limit of $n\cdot\hat{\mathbb V}^{\rm bcve}_n(d)$ and the asymptotic variance of $\hat{\Gamma}_n(d) - \hat{\Gamma}_n(1)$ is equal to
\[ \left ( 1 - \frac{1}{|\mathcal{D}|} \right ) E\left[ \left( (\Gamma_d(X_i) - \Gamma_d) - (\Gamma_1(X_i) - \Gamma_1) \right)^2 \right]~. \]
It thus follows that the test defined in \eqref{eq:stat-bcve} in fact becomes more conservative for testing $\eqref{eq:H0-sfe}$ as the number of treatments $|\mathcal{D}|$ increases.
\end{remark}

\subsection{Inference for ``Replicate'' Designs} \label{sec:replicate}
Our analysis so far has focused on the setting where $J_n = |\mathcal{D}| n$ units are blocked into $n$ blocks of size $|\mathcal{D}|$, and each treatment $d \in \mathcal{D}$ is assigned exactly once in each block. In this section, we consider a modification of this design where units are grouped into blocks of size $2|\mathcal{D}|$ and each treatment status $d \in \mathcal{D}$ is assigned exactly \emph{twice} in each block. Formally, for the remainder of this section suppose $n$ is even, and let
\[ \tilde \lambda_j = \tilde \lambda_j(X^{(n)}) \subseteq \{1, \ldots, J_n\},~ 1 \leq j \leq n / 2\]
denote $n/2$ sets each consisting of $2 |\mathcal D|$ elements that form a partition of $\{1, \ldots, J_n\}$.

We assume treatment is assigned as follows:
\begin{assumption} \label{as:D-replicate}
Treatments are assigned so that $\{Y^{(n)}(d): d \in \mathcal D\} \independent D^{(n)} | X^{(n)}$ and, conditional on $X^{(n)}$,
\[ \{(D_i: i \in \tilde \lambda_j): 1 \leq j \leq n / 2\} \]
are i.i.d.\ and each uniformly distributed over all permutations of $(1, 1, 2, 2, \ldots, |\mathcal{D}|, |\mathcal{D}|)$.
\end{assumption}

We further require that the units in each block be ``close'' in terms of their baseline covariates in the following sense:
\begin{assumption} \label{as:close-replicate}
The blocks satisfy
\[ \frac{1}{n} \sum_{1 \leq j \leq n / 2} \max_{i, k \in \tilde \lambda_j} ||X_i - X_k||^2 \stackrel{P}{\to} 0~.\]
\end{assumption}

We first establish that the limiting distribution of $\hat{\Delta}_{\nu,n}$ for such a ``replicate'' design is the same as that for the matched tuples design considered in Theorem \ref{thm:main_delta}.

\begin{theorem} \label{thm:replicate-delta}
Suppose $Q$ satisfies Assumption \ref{as:Q} and the treatment assignment mechanism satisfies Assumptions \ref{as:D-replicate}--\ref{as:close-replicate}. Then,
\[ \sqrt n(\hat \Delta_{\nu, n} - \Delta_\nu(Q)) \stackrel{d}{\to} N(0, \mathbb V_\nu)~, \]
with $\mathbb{V}_{\nu}$ as defined in Theorem \ref{thm:main_delta}.
\end{theorem}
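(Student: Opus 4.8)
\section*{Proof proposal for Theorem \ref{thm:replicate-delta}}

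The plan is to follow the strategy used to prove Theorem \ref{thm:main_delta}, making only the modifications needed to accommodate blocks of size $2|\mathcal D|$ in which each treatment appears exactly twice. By linearity of $\hat\Delta_{\nu,n} = \nu\hat\Gamma_n$ and $\Delta_\nu(Q)=\nu\Gamma(Q)$, it suffices to establish the joint limit $\sqrt n(\hat\Gamma_n - \Gamma(Q))\stackrel{d}{\to} N(0,\mathbb V)$ and apply the continuous mapping theorem, since $\nu\cdot N(0,\mathbb V)=N(0,\nu\mathbb V\nu')=N(0,\mathbb V_\nu)$. For each $d\in\mathcal D$ I would center the potential outcome at its conditional mean and write, with $\tilde Y_i(d):=Y_i(d)-\Gamma_d(X_i)$,
\[ \hat\Gamma_n(d)-\Gamma_d(Q) = \underbrace{\Big(\tfrac1n\textstyle\sum_{1\le i\le J_n} I\{D_i=d\}\Gamma_d(X_i)-\Gamma_d(Q)\Big)}_{A_n(d)} + \underbrace{\tfrac1n\textstyle\sum_{1\le i\le J_n} I\{D_i=d\}\tilde Y_i(d)}_{R_n(d)}~. \]
Note that, just as Theorem \ref{thm:main_delta} used only within-block closeness (Assumption \ref{as:close}), here I expect to need only the replicate analogue, Assumption \ref{as:close-replicate}.

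The residual part $R_n(d)$ delivers $\mathbb V_1$. Since each treatment is assigned exactly twice in each of the $n/2$ blocks, there are exactly $n$ units with $D_i=d$ — the same count as in the single-assignment design — and because $\{Y^{(n)}(d)\}\independent D^{(n)}\mid X^{(n)}$, conditional on $(X^{(n)},D^{(n)})$ their residuals are independent and mean zero. A Lindeberg CLT conditional on $(X^{(n)},D^{(n)})$, together with a law of large numbers argument using Assumptions \ref{as:Q}(c) and \ref{as:close-replicate} to show $\tfrac1n\sum_{i:D_i=d}\var[Y_i(d)\mid X_i]\stackrel{P}{\to}E[\var[Y_i(d)\mid X_i]]$, gives $\sqrt n\,R_n(d)\to N(0,E[\var[Y_i(d)\mid X_i]])$; for $d\neq d'$ the treated-$d$ and treated-$d'$ units are distinct, hence the residuals are conditionally independent and the cross-covariance vanishes, yielding the diagonal matrix $\mathbb V_1$. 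The conditional-mean part $A_n(d)$ delivers $\mathbb V_2$. The key observation is that the within-block expectation of $\sum_{i\in\tilde\lambda_j}I\{D_i=d\}\Gamma_d(X_i)$ given $X^{(n)}$ equals $\tfrac{1}{|\mathcal D|}\sum_{k\in\tilde\lambda_j}\Gamma_d(X_k)$ (each unit receives $d$ with probability $2/(2|\mathcal D|)=1/|\mathcal D|$), so $A_n(d)$ shares the conditional mean of the full-sample average $\bar\Gamma_d:=\tfrac{1}{J_n}\sum_k\Gamma_d(X_k)$, which converges by the i.i.d.\ CLT to a Gaussian with variance $\tfrac{1}{|\mathcal D|}\var[\Gamma_d(X_i)]$ and cross-covariances $\tfrac{1}{|\mathcal D|}\cov[\Gamma_d(X_i),\Gamma_{d'}(X_i)]$, i.e.\ exactly $\mathbb V_2$. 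This part of the limit is manifestly design-independent, since neither $\bar\Gamma_d$ nor the scaling $J_n=|\mathcal D|n$ depends on the block structure; this is precisely why the replicate design attains the same asymptotic variance.

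The main obstacle is the asymptotic-equivalence step, namely showing $\sqrt n\big(A_n(d)-(\bar\Gamma_d-\Gamma_d(Q))\big)\stackrel{P}{\to}0$. Writing this difference as $\tfrac{1}{\sqrt n}\sum_j\xi_{j,d}$ with $\xi_{j,d}:=\sum_{i\in\tilde\lambda_j}I\{D_i=d\}\Gamma_d(X_i)-\tfrac{1}{|\mathcal D|}\sum_{k\in\tilde\lambda_j}\Gamma_d(X_k)$, a crude bound on the magnitudes (e.g.\ Cauchy--Schwarz) does not suffice, since it would leave a factor of $\sqrt n$. Instead I would exploit that, by the previous display, each $\xi_{j,d}$ is conditionally mean zero given $X^{(n)}$ and the blocks are conditionally independent, so the conditional variance of $\tfrac{1}{\sqrt n}\sum_j\xi_{j,d}$ equals $\tfrac1n\sum_j\var[\xi_{j,d}\mid X^{(n)}]$. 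Because $\xi_{j,d}$ is a sum of two within-block-centered terms, the Lipschitz property of $\Gamma_d(\cdot)$ (Assumption \ref{as:Q}(c)) gives $|\xi_{j,d}|\le 2L\max_{i,k\in\tilde\lambda_j}|X_i-X_k|$, whence $\tfrac1n\sum_j\var[\xi_{j,d}\mid X^{(n)}]\le \tfrac{4L^2}{n}\sum_j\max_{i,k\in\tilde\lambda_j}|X_i-X_k|^2\stackrel{P}{\to}0$ by Assumption \ref{as:close-replicate}, and conditional Chebyshev closes the step. Finally, I would verify that the residual and conditional-mean contributions are asymptotically jointly Gaussian and independent — the cross-covariance vanishing once more because the residuals are conditionally mean zero given the $X^{(n)}$-measurable term $\bar\Gamma_d$ — and assemble the full limit $N(0,\mathbb V_1+\mathbb V_2)=N(0,\mathbb V)$ via the Cram\'er--Wold device, completing the proof after left-multiplication by $\nu$.
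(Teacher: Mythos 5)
Your proposal is correct and follows essentially the same route as the paper, which simply states that the proof of Theorem \ref{thm:replicate-delta} mirrors that of Theorem \ref{thm:main_delta} with the modification for two assignments per block (deferring to Lemma B.3 of \cite{bai2022optimality}) and omits the details. Your decomposition into the conditionally centered residual term and the conditional-mean term, the use of the assignment probability $2/(2|\mathcal D|)=1/|\mathcal D|$, and the Lipschitz-plus-Assumption-\ref{as:close-replicate} bound on the block-level discrepancy $\xi_{j,d}$ are exactly the omitted modifications (your constant $2L$ in the bound on $|\xi_{j,d}|$ should be roughly $4L$ after centering at a reference point in the block, but this is immaterial).
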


Although the limiting distribution of $\hat{\Delta}_{\nu,n}$ for the standard matched tuples and replicate designs are identical, variance estimation in the replicate design is often understood to be conceptually simpler, because each treatment status is assigned \emph{twice} in each block \citep[see for instance the discussion of variance estimation in][in the context of matched pair designs]{athey2017econometrics}. Indeed, in this case an alternative variance estimator can be constructed which is identical to the estimator proposed in Section \ref{sec:main_tuple} except that we replace $\hat \rho_n(d,d)$ by
\begin{equation*}
    \tilde \rho_n(d, d) = \frac{2}{n} \sum_{1 \leq j \leq \lfloor n/2 \rfloor} \Big ( \prod_{i \in \lambda_{j}} Y_i I \{D_i = d\} \Big ) ~,
\end{equation*}
which no longer requires averaging over the product of outcomes of units in adjacent blocks. The following theorem establishes the consistency of $\tilde \rho_n(d, d)$, where importantly we note that Assumption \ref{as:close-4}, which maintains that adjacent blocks be ``close", is no longer required. It is then straightforward to show the consistency of the corresponding variance estimator for $\hat \Delta_{\nu, n}$ constructed by replacing $\hat{\rho}_n(d,d)$ in $\hat{\mathbb V}_n$ with $\tilde \rho_n(d, d)$.

\begin{theorem} \label{thm:replicate-rho}
Suppose $Q$ satisfies Assumption \ref{as:Q} and the treatment assignment mechanism satisfies Assumptions \ref{as:D-replicate}--\ref{as:close-replicate}. Then,
\begin{equation} \label{eq:replicate-consistent}
    \tilde \rho_n(d, d) \stackrel{P}{\to} E[E[Y_i(d) | X_i]^2]~.
\end{equation}
\end{theorem}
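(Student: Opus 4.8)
The plan is to show that $\tilde\rho_n(d,d)$ is close to its conditional mean given the covariates, which in turn converges to $E[\Gamma_d(X_i)^2]$. The argument is the within-block analogue of the consistency proof for the cross-product estimator $\hat\rho_n(d,d')$ with $d\ne d'$ in Theorem \ref{thm:V_const} (cf.\ Lemma \ref{lem:rho-dd'}), with the second treatment replaced by a second copy of $d$; crucially, since both units assigned to $d$ now lie in the \emph{same} block $\tilde\lambda_j$, the adjacent-block closeness of Assumption \ref{as:close-4} is not needed. First I would condition on $W^{(n)}$. By Assumption \ref{as:D-replicate} the within-block assignment is uniform over permutations of $(1,1,\dots,|\mathcal D|,|\mathcal D|)$ and independent of potential outcomes, and by the symmetry of this multiset in its two $d$-slots the two units in $\tilde\lambda_j$ receiving $d$ are a uniformly distributed unordered pair among the $\binom{2|\mathcal D|}{2}$ pairs of the block. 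Hence
\[ E[\tilde\rho_n(d,d)\mid W^{(n)}] = \frac{2}{n}\sum_{1\le j\le n/2}\binom{2|\mathcal D|}{2}^{-1}\sum_{\{i,k\}\subseteq\tilde\lambda_j}Y_i(d)Y_k(d)~. \]

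Next I would take a further conditional expectation given $X^{(n)}$. Because $W^{(n)}$ is i.i.d., the potential outcomes of distinct units are independent given the covariates, so $E[Y_i(d)Y_k(d)\mid X^{(n)}]=\Gamma_d(X_i)\Gamma_d(X_k)$ for $i\ne k$. Using the identity $2\Gamma_d(X_i)\Gamma_d(X_k)=\Gamma_d(X_i)^2+\Gamma_d(X_k)^2-(\Gamma_d(X_i)-\Gamma_d(X_k))^2$, the symmetric part of the pair average collapses \emph{exactly} to the sample average $\frac{1}{J_n}\sum_{1\le i\le J_n}\Gamma_d(X_i)^2$, which converges in probability to $E[\Gamma_d(X_i)^2]$ by the weak law of large numbers; this limit is finite since $E[\Gamma_d(X_i)^2]\le E[Y_i^2(d)]<\infty$ by Jensen's inequality and Assumption \ref{as:Q}(b). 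The leftover term is bounded in absolute value by $\frac{1}{n}\sum_j\max_{i,k\in\tilde\lambda_j}(\Gamma_d(X_i)-\Gamma_d(X_k))^2$, which by the Lipschitz continuity of $\Gamma_d$ in Assumption \ref{as:Q}(c) is at most a constant times $\frac{1}{n}\sum_j\max_{i,k\in\tilde\lambda_j}|X_i-X_k|^2\xrightarrow{P}0$ under Assumption \ref{as:close-replicate}. This establishes $E[\tilde\rho_n(d,d)\mid X^{(n)}]\xrightarrow{P}E[\Gamma_d(X_i)^2]$.

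It then remains to show that the two sources of variability around this conditional mean are negligible, namely the sampling variation of the $Y_i(d)$ about $\Gamma_d(X_i)$ and the randomness of the treatment assignment. For the latter, conditional on $W^{(n)}$ the per-block products are independent across the $n/2$ blocks by Assumption \ref{as:D-replicate}, so $\var[\tilde\rho_n(d,d)\mid W^{(n)}]$ is a sum of $n/2$ nonnegative terms carrying the prefactor $(2/n)^2$, hence of order $O_P(1/n)$; the former is handled analogously using the independence of $(Y_i(d),X_i)$ across units given the covariates. The main obstacle is integrability: Assumption \ref{as:Q}(b) supplies only \emph{second} moments of $Y_i(d)$, whereas the relevant conditional second moments involve products such as $E[Y_i^2(d)Y_k^2(d)]$ that formally require fourth moments. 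I would resolve this by truncation—splitting $Y_i(d)=Y_i(d)I\{|Y_i(d)|\le M\}+Y_i(d)I\{|Y_i(d)|>M\}$, proving the variance bounds for the bounded part (for which all moments exist and the above $O(1/n)$ rates are genuine), and controlling the tail contribution in $L^1$ via $E[Y_i^2(d)I\{|Y_i(d)|>M\}]\to 0$ as $M\to\infty$, which is valid under Assumption \ref{as:Q}(b). Combining the convergence of the conditional mean with the vanishing of both noise terms gives $\tilde\rho_n(d,d)\xrightarrow{P}E[E[Y_i(d)\mid X_i]^2]$.
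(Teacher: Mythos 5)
Your argument is correct and follows the paper's own route: the paper likewise computes $E[\tilde\rho_n(d,d)\mid X^{(n)}]$ as the average over the $\binom{2|\mathcal D|}{2}$ equally likely pairs in each block, applies the polarization identity together with Assumptions \ref{as:Q}(c) and \ref{as:close-replicate} to get the limit $E[\Gamma_d^2(X_i)]$, and then establishes concentration around this conditional mean by the same uniform-integrability/truncation device you describe (deferring to the argument of Lemma \ref{lem:rho-dd}), which needs only second moments. The one cosmetic difference is that the paper truncates the \emph{product} via $|ab|I\{|ab|>\lambda\}\le a^2I\{|a|>\sqrt\lambda\}+b^2I\{|b|>\sqrt\lambda\}$ rather than each factor separately, which avoids the cross terms of the form $a^2I\{|b|>M\}$ that your factor-wise splitting would otherwise have to control.
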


We remark that Theorems \ref{thm:main_delta}--\ref{thm:V_const} and Theorems \ref{thm:replicate-delta}--\ref{thm:replicate-rho}, yielding identical conclusions, do not allow us to effectively compare the properties of the standard matched tuples design and matched tuples with replicates. In order to compare these designs, we evaluate their finite sample properties via simulation in Section \ref{sec:sims}. There, we find that the mean squared error of $\hat{\Delta}_{\nu,n}$ under the replicate design is typically larger than under the standard non-replicate design. However, we also find that the rejection probabilities of our proposed tests under the replicate design are much closer to the nominal level relative to the non-replicate design, which can sometimes exhibit rejection probabilities strictly smaller than the nominal level when matching on multiple covariates. As a result, the replicate design is sometimes able to achieve better power relative to the non-replicate design. We emphasize, however, that our current asymptotic framework is not precise enough to capture these differences. One possible conjecture is that since replicate designs could be thought of as convex combinations of matched tuples designs \citep[see Lemma 2 in][] {bai2022optimality}, it is as if we are averaging over multiple matched tuples designs when we estimate the limiting variance. However, we leave a detailed theoretical comparison of these two designs to future work.

\subsection{Asymptotic Properties of Fully-Blocked $2^K$ Factorial Designs}\label{sec:factorial}
In this section we apply the results derived in Sections \ref{sec:main_tuple}--\ref{sec:replicate} to study the asymptotic properties of what we call ``fully-blocked" $2^K$ factorial designs. Section \ref{sec:factorial_setup} introduces $2^K$ factorial experiments. Section \ref{sec:fact_properties} introduces the fully-blocked factorial design and compares the efficiency properties of fully-blocked factorial designs to some alternative designs.

\subsubsection{Setup and Notation for $2^K$ factorial designs}\label{sec:factorial_setup}
In this section we describe the setup of a $2^K$ factorial experiment, the resulting parameters of interest, and their corresponding estimators \citep[see][for a textbook treatment] {wu2011experiments}. A $2^K$ factorial design assigns treatments which are combinations of multiple ``factors," where each factor can take two distinct values, or ``levels.'' For instance, \cite{karlan2014agricultural} study the effect of capital constraints and uninsured risk on the investment decisions of farmers in Ghana. In their setting, each treatment consists of two factors: whether or not a household receives a cash grant, and whether or not a household receives an insurance grant. Our setup and notation mirror the framework introduced in \cite{dasgupta2015causal} and \cite{li2020rerandomization}.  Given $K$ factors each with two treatment levels $\{-1, +1\}$, our set of treatments $\mathcal{D}$ now consists of all possible $2^K$ factor combinations. For a factor combination $d \in \mathcal{D}$, define $\iota_k(d) \in \{-1, +1\}$ to be the level of factor $k$ under treatment $d$. The vector $\iota(d) := (\iota_1(d), \iota_2(d), \ldots, \iota_K(d))$ then describes the levels of all $K$ factors associated with factor combination $d$. This notation allows us to define \emph{factorial effects} as parameters of the form $\Delta_{\nu}(Q)$ for appropriately constructed contrast vectors $\nu$. For instance, consider the contrast vector defined as
\[\nu_k := \left(\iota_k(1), \iota_k(2), \ldots, \iota_k(|\mathcal{D}|)\right)~.\]
Then, the parameter $\Delta_{\nu_k}(Q)$ obtained from this contrast can be written as
\begin{align*}
\Delta_{\nu_k}(Q) = \sum_{d \in \mathcal{D}}I\{\iota_k(d) = +1\}\Gamma_d(Q) - \sum_{d \in \mathcal{D}}I\{\iota_k(d) = -1\}\Gamma_d(Q)~.
\end{align*}
We define the \emph{main effect} of factor $k$ as $2^{-(K-1)}\Delta_{\nu_k}(Q)$. In words, the main effect of factor $k$ measures the average difference between the outcomes of factor combinations under which the $k$th factorial effect is $1$ versus the outcomes of factor combinations under which the $k$th factorial effect is $-1$. The re-scaling $2^{-(K-1)}$ is introduced because there are $2^{K-1}$ possible values for all the factor combinations when fixing the $k$th factor. We call $\nu_k$ the \emph{generating vector} for the main effect of factor $k$.

We can subsequently build on the generating vectors of the main effects in order to define the \emph{interaction effects} between various factors. The interaction effect between a given set of factors is defined using the contrast obtained from taking the element-wise product of the generating vectors for the relevant factors. For instance, the two-factor interaction between factors $k$ and $k'$ is defined as $2^{-(K-1)}\Delta_{\nu_{k,k'}}(Q)$, where $\nu_{k,k'} := \nu_k \odot \nu_{k'}$ and $\odot$ denotes element-wise multiplication. Similarly, the three-factor interaction $2^{-(K-1)}\Delta_{\nu_{k,k',k''}}(Q)$ is defined using the contrast vector $\nu_{k,k',k''} := \nu_k \odot \nu_{k'} \odot \nu_{k''}$. We illustrate these definitions in the special case of a $2^2$ factorial design in Example \ref{ex:2-factor} below. For simplicity, in what follows, we omit the re-scaling by $2^{-(K-1)}$ in our discussions and results.

\begin{example}\label{ex:2-factor}
Here we illustrate the concept of main and interaction effects in the case of a $2^2$ factorial design. Table \ref{table:2-factor} depicts the 4 factor combinations and their corresponding factor levels. 
\begin{table}[htbp]\label{table:2-factor}
\centering
\begin{tabular}{cccc}
\toprule
Factor Combination & Factor 1 & Factor 2 & Factor 1/2 Interaction \\
\midrule
1     & -1    & -1    & +1 \\
2     & -1    & +1     & -1 \\
3     & +1     & -1    & -1 \\
4     & +1     & +1     & +1 \\
\bottomrule
    \end{tabular}%
  \label{tab:addlabel}%
  \caption{Example of a $2^2$ factorial design}
\end{table}%

From the column labeled Factor 1 we observe that the generating vector for the main effect of factor one, $\nu_1$, is given by
\[\nu_1 = \left(-1, -1, +1, +1\right)~,\]
so that the main effect of factor one is given by (up to re-scaling)
\[\Delta_{\nu_1}(Q) = E_Q[Y_i(+1, +1) + Y_i(+1, -1)] - E_Q[Y_i(-1, +1) + Y_i(-1, -1)]~,\]
where here we have indexed potential outcomes explicitly by their factor levels. Similarly, the column labeled Factor 2 corresponds to the generating vector for the main effect of factor two, $\nu_2$. To define the interaction effect between factors one and two, we construct the relevant contrast by taking the element-wise product of $\nu_1$ and $\nu_2$:
\[\nu_{1,2} = \nu_1 \odot \nu_2 = \left(+1, -1, -1, +1\right)~,\]
this produces the column labeled Factor 1/2 Interaction. Accordingly, the interaction effect between factors one and two is given by (up to re-scaling)
\[\Delta_{\nu_{1,2}}(Q) = E_Q[Y_i(+1,+1) - Y_i(-1,+1)] - E_Q[Y_i(+1, -1) - Y_i(-1, -1)]~.\]
In words, $\Delta_{\nu_{1,2}}(Q)$ measures the difference in the the average difference in potential outcomes over factor one when factor two is set to $1$ versus the average difference in potential outcomes over factor one when factor two is set to $-1$.
\end{example}

Given the above setup, we estimate the factorial effect given by $\Delta_{\nu}(Q)$ using the estimator $\hat{\Delta}_{\nu,n}$ defined in Section \ref{sec:main_tuple}. \cite{wu2011experiments} and \cite{dasgupta2015causal} explain that $\hat{\Delta}_{\nu,n}$ is a standard estimator in this context. For instance, the estimator of the main effect of factor $k$, $2^{-(K-1)}\hat \Delta_{\nu_k,n}$, is in fact the difference-in-means estimator over the $k$-th factor:
\begin{align*}
2^{-(K-1)}\hat\Delta_{\nu_k,n} &= \frac{1}{2^{K-1}}\sum_{d \in \mathcal{D}}I\{\iota_k(d) = +1\}\hat{\Gamma}_n(d) - \frac{1}{2^{K-1}}\sum_{d \in \mathcal{D}}I\{\iota_k(d) = -1\}\hat{\Gamma}_n(d) \\
&= \frac{1}{n2^{K-1}}\sum_{1 \leq i \leq J_n}\sum_{d \in \mathcal{D}}I\{\iota_k(d) = +1\}  I \{D_i = d\} Y_i - \frac{1}{n2^{K-1}}\sum_{1 \leq i \leq J_n}\sum_{d \in \mathcal{D}}I\{\iota_k(d) = -1\}  I \{D_i = d\} Y_i\\
&= \frac{1}{n2^{K-1}}\sum_{1 \leq i \leq J_n}I\{\iota_k(D_i) = +1\}  Y_i - \frac{1}{n2^{K-1}}\sum_{1 \leq i \leq J_n}I\{\iota_k(D_i) = -1\}  Y_i ~.
\end{align*}

\subsubsection{Efficiency Properties of Fully-Blocked Factorial Designs}\label{sec:fact_properties}
In this section, we compare the asymptotic variance of the estimator $\hat\Delta_{\nu,n}$ under what we call a ``fully-blocked" factorial design relative to some alternative designs. A fully-blocked factorial design first blocks the experimental sample into $n$ blocks of size $2^K$ based on the observable characteristics $X^{(n)}$, and then assigns each of the $2^K$ factor combinations exactly once in each block. Formally, a fully-blocked factorial design is simply a matched tuples design as defined in Section \ref{sec:setup}, where $\mathcal{D}$ consists of the set of all possible factor combinations. 

Our first result compares the fully-blocked factorial design to completely randomized and stratified factorial designs. Given a $2^K$ factorial experiment and a sample of size $J_n = n2^K$, a completely randomized factorial design simply assigns $n$ individuals to each of the $2^K$ factor combinations at random. A stratified factorial design first partitions the covariate space into a finite number of groups, or ``strata", and then performs a completely randomized factorial design within each stratum. Formally, let $h: \mathrm{supp}(X) \to \{1, \ldots, S\}$ be a function which maps covariate values into a set of discrete strata labels. Then, a stratified factorial design performs a completely randomized factorial design within each stratum produced by $h(\cdot)$. Note that a completely randomized design is a special case of the stratified factorial design where the co-domain of $h(\cdot)$ is a singleton. See \cite{rubin2016} and \cite{li2020rerandomization} for further discussion of these designs. Theorem \ref{thm:block_factorial} shows that the asymptotic variance of $\hat{\Delta}_{\nu,n}$ is weakly smaller under a fully-blocked factorial design than that under \emph{any} stratified factorial design as defined above, as long as the potential outcomes satisfy the smoothness assumptions described in Assumption \ref{as:Q}(c).

\begin{theorem}\label{thm:block_factorial}
Suppose Assumptions \ref{as:Q}(a)-(b) hold and let $h: \mathrm{supp}(X) \to \{1, \ldots, S\}$ be any measurable function which maps covariate values into a set of discrete strata labels. Let $\Delta_{\nu}(Q)$ be a factorial effect for some $1\times2^K$ contrast vector $\nu$. Then under a stratified factorial design with strata defined by $h(\cdot)$,
\[\sqrt{n}(\hat\Delta_{\nu,n}- \Delta_\nu(Q)) \stackrel{d}{\to} N(0, \mathbb \sigma^2_{h,\nu})~,\]
where $\sigma^2_{h,\nu} = \nu\mathbb{V}_{h}\nu'$, with
\begin{align*}
\mathbb V_h &:= \mathbb V_{h,1} + \mathbb V_{h,2} \\
\mathbb V_{h,1} &:= \diag(E[\var[Y_i(d) | h(X_i)]]: d \in \mathcal D) \\
\mathbb V_{h,2} &:= \left[\frac{1}{|\mathcal D|} \cov[E[Y_i(d) | h(X_i)], E[Y_i(d') | h(X_i)]]\right]_{d,d'\in \mathcal{D}}~.
\end{align*}
Moreover, 
\[\sigma^2_{\nu} \le \sigma^2_{h,\nu}~,\]
where $\sigma^2_\nu = \mathbb V_\nu$ (as defined in Theorem \ref{thm:main_delta}) is the asymptotic variance of $\hat{\Delta}_{\nu,n}$ (under Assumptions \ref{as:Q}--\ref{as:close}) for a fully-blocked factorial design.
\end{theorem}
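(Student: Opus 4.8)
The plan is to prove the theorem in two stages: first establish the displayed central limit theorem for the stratified factorial design with limiting covariance $\mathbb{V}_h$, and then establish the variance inequality $\sigma^2_\nu \le \sigma^2_{h,\nu}$, taking the fully-blocked variance $\sigma^2_\nu = \nu\mathbb{V}\nu'$ as given by Theorem \ref{thm:main_delta}.

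For the central limit theorem, I would condition on the vector of stratum labels $(h(X_i): 1 \le i \le J_n)$ and exploit that, by definition of the stratified factorial design, the treatment assignment within each of the $S$ strata is an independent complete randomization over $\mathcal{D}$, with each occupied stratum containing $\Theta(n)$ units. Writing $\hat\Gamma_n(d) - \Gamma_d(Q)$ as a sum over strata, I would decompose it into a ``within-stratum'' term and a ``between-stratum'' term. The within-stratum term captures, for each stratum $s$, the deviation of the stratum-specific treatment averages from $E[Y_i(d) \mid h(X_i) = s]$; conditional on the assignment the $|\mathcal{D}|$ treatment groups in a stratum are independent i.i.d.\ samples, so a Lindeberg-type argument yields the diagonal matrix $\mathbb{V}_{h,1}$ with no cross-treatment covariance (mirroring the elementary fact that, conditional on which units receive $d$, the groups are independent). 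The between-stratum term is a functional of the empirical stratum frequencies; since $J_n = |\mathcal{D}| n$, rescaling the multinomial fluctuations of these frequencies by $\sqrt{n} = \sqrt{J_n}/\sqrt{|\mathcal{D}|}$ produces precisely the factor $1/|\mathcal{D}|$ appearing in $\mathbb{V}_{h,2}$. After verifying that these two terms are asymptotically independent, the Cram\'er--Wold device delivers joint normality with covariance $\mathbb{V}_h = \mathbb{V}_{h,1} + \mathbb{V}_{h,2}$, and the result for $\hat\Delta_{\nu,n} = \nu\hat\Gamma_n$ follows by linearity.

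For the variance comparison, the key observation is that $\mathbb{V}_h$ has exactly the same form as $\mathbb{V}$ from Theorem \ref{thm:main_delta} but with the conditioning variable $X_i$ replaced by its coarsening $h(X_i)$. Since $\sigma(h(X_i)) \subseteq \sigma(X_i)$, the law of total variance gives on the diagonal $\mathbb{V}_{h,1}(d) - \mathbb{V}_1(d) = E[\var[\Gamma_d(X_i) \mid h(X_i)]]$, while the law of total covariance gives off the diagonal $\mathbb{V}_{h,2}(d,d') - \mathbb{V}_2(d,d') = -\tfrac{1}{|\mathcal{D}|} E[\cov[\Gamma_d(X_i), \Gamma_{d'}(X_i) \mid h(X_i)]]$. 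Combining these and writing $g(X_i) := \sum_{d \in \mathcal{D}} \nu_d \Gamma_d(X_i)$ for the conditional contrast function, I would show
\[
\nu(\mathbb{V}_h - \mathbb{V})\nu' = \sum_{d \in \mathcal{D}} \nu_d^2\, E[\var[\Gamma_d(X_i) \mid h(X_i)]] - \frac{1}{|\mathcal{D}|} E[\var[g(X_i) \mid h(X_i)]]~.
\]
Here I would invoke the defining feature of a factorial contrast, namely that every entry satisfies $\nu_d \in \{-1, +1\}$, so that $\nu_d^2 = 1$ and $\sum_{d \in \mathcal{D}} \nu_d^2 = |\mathcal{D}|$. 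The conclusion then follows from the Cauchy--Schwarz inequality applied conditionally on $h(X_i)$,
\[
\var[g(X_i) \mid h(X_i)] \le \Big(\sum_{d \in \mathcal{D}} \nu_d^2\Big)\sum_{d \in \mathcal{D}} \var[\Gamma_d(X_i) \mid h(X_i)] = |\mathcal{D}| \sum_{d \in \mathcal{D}} \var[\Gamma_d(X_i) \mid h(X_i)]~,
\]
which shows that the displayed difference is nonnegative and hence $\sigma^2_\nu \le \sigma^2_{h,\nu}$.

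I expect the central limit theorem of the first stage to be the more laborious step, since it requires carefully isolating the within- and between-stratum fluctuations, controlling the multinomial stratum counts (up to rounding in the stratum-wise allocations), and verifying both a Lindeberg condition and asymptotic independence. The variance inequality, by contrast, is short but hinges entirely on the factorial structure: the factor $1/|\mathcal{D}|$ in $\mathbb{V}_{h,2}$ matches exactly the constant $\sum_{d} \nu_d^2 = |\mathcal{D}|$ furnished by Cauchy--Schwarz precisely because $|\nu_d| \equiv 1$, so the argument would break down for a general contrast vector whose entries do not share a common magnitude.
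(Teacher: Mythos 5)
Your proposal is correct. The CLT half follows essentially the same route as the paper: the paper decomposes $\sqrt{n}(\hat\Gamma_n(d)-\Gamma_d(Q))$ into a within-stratum term $A_{n,d}$, an allocation-imbalance term $B_{n,d}$ (which vanishes by the exact balance of stratified block randomization, and which corresponds to the ``rounding in the stratum-wise allocations'' you flag), and a between-stratum term $C_{n,d}$ driven by the i.i.d.\ sampling of the stratum labels, then invokes Lemma C.1 of \cite{bugni2019inference} for joint normality of $(A_n,C_n)$ with block-diagonal covariance --- exactly your within/between split with asymptotic independence.

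The variance comparison is where you genuinely diverge, and your route is valid. The paper first rewrites $\nu\mathbb V\nu' = \sum_d \nu_d^2\var[Y_i(d)] - \tfrac{1}{|\mathcal D|}\sum_{d\neq d'}\var[\nu_d\Gamma_d(X_i)-\nu_{d'}\Gamma_{d'}(X_i)]$ (and likewise with $h(X_i)$), and then shows each pairwise variance can only shrink under coarsening via an orthogonal (ANOVA-type) decomposition with respect to $h(X_i)$; this yields the stronger conclusion that $\mathbb V_h-\mathbb V$ is positive semidefinite for \emph{every} contrast vector, with no use of the factorial structure. You instead compute $\nu(\mathbb V_h-\mathbb V)\nu'$ directly from the laws of total variance and covariance, collapse the off-diagonal contribution into $-\tfrac{1}{|\mathcal D|}E[\var[g(X_i)\mid h(X_i)]]$ with $g=\sum_d\nu_d\Gamma_d$, and close with Cauchy--Schwarz using $\sum_d\nu_d^2=|\mathcal D|$; both steps check out, so your argument is a shorter, self-contained proof for the factorial case. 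One small correction to your closing remark: the inequality does not actually hinge on $|\nu_d|\equiv 1$. The paper's proof covers arbitrary $\nu$, and even your own approach survives if you apply Cauchy--Schwarz with unit weights, $\bigl(\sum_d\nu_d Z_d\bigr)^2\le|\mathcal D|\sum_d\nu_d^2 Z_d^2$, which again gives $\tfrac{1}{|\mathcal D|}\var[g\mid h]\le\sum_d\nu_d^2\var[\Gamma_d\mid h]$ for any contrast vector; so the restriction to factorial contrasts is a convenience of your particular grouping of terms, not a genuine limitation of the result.
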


\begin{remark}\label{rem:re-randomization}
\cite{rubin2016} and \cite{li2020rerandomization} propose re-randomization designs in the context of factorial experiments which are also shown to have favorable properties relative to complete and stratified factorial designs. In Section \ref{sec:sims-mse}, we compare the mean-squared error of the fully-blocked design to a re-randomized design via Monte Carlo simulation.
\end{remark}

Our next result considers settings where only a subset of the factors are of primary interest to the researcher. For instance, \cite{besedevs2012age} use a factorial design to study how the number of options in an agent's choice set affects their ability to make optimal decisions. Here the primary factor of interest is the number of options (four or thirteen), but the design also features other secondary factors. In such a case we might imagine that a matched pairs design which focuses on the factor of primary interest and assigns the other factors by i.i.d.\ coin flips may be more efficient for estimating the primary factorial effect than the fully-blocked design which treats all the factors symmetrically. In particular, we consider a setting where we are interested in the average main effect on the $k$th factor, $\Delta_{\nu_k}(Q)$, and compare the performance of the fully-blocked design to a design which performs matched pairs over the $k$th factor while assigning the other factors to individuals at random using i.i.d.\ Bernoulli(1/2) assignment. We call such a design the ``factor $k$ specific" matched pairs design. Formally, let
\[ \zeta_j = \zeta_j(X^{(n)}) \subset \{1, \dots, 2^K n\},~ 1 \leq j \leq 2^{K - 1} n \]
denote a partition of the set of indices such that each $\zeta_j$ contains two units. The ``factor $k$ specific" matched pairs design satisfies the following assumption:

\begin{assumption} \label{ass:kspecific}
   Treatment status is assigned so that $\{Y^{(n)}(d): d \in \mathcal D\} \independent D^{(n)} | X^{(n)}$ and, conditional on $X^{(n)}$,
\[ \{(\iota_k(D_i): i \in \zeta_j): 1 \leq j \leq 2^{K - 1} n\} \]
are i.i.d.\ and each uniformly distributed over $\{(-1, +1), (+1, -1)\}$. Furthermore, independently of $X^{(n)}$ and independently across $1 \leq j \leq K, j \neq k$, $\iota_j(D_i)$ is i.i.d.\ across $1 \leq i \leq 2^K n$ and $P \{\iota_j(D_i) = -1\} = P \{\iota_j(D_i) = +1\} = \frac{1}{2}$. 
\end{assumption}

Theorem \ref{thm:matched-pair} shows that the asymptotic variance of $\hat{\Delta}_{\nu_1,n}$ is weakly smaller under a fully-blocked design than that under the factor specific matched pairs design.

\begin{theorem}\label{thm:matched-pair}
Suppose Assumptions \ref{as:Q}--\ref{as:close} hold and the treatment assignment mechanism satisfies Assumption \ref{ass:kspecific}. Then,
\[ \sqrt n(\hat \Delta_{\nu_k,n} - \Delta_{\nu_k}(Q)) \stackrel{d}{\to} N(0, \mathbb{V}_{\nu_k} + \xi_1 + \xi_0 )~, \]
where $\mathbb{V}_{\nu_k}$ is defined in Theorem \ref{thm:main_delta}, and
\begin{align*}
\xi_1 &=  \sum_{d \in \mathcal{D}:\iota_k(d) = +1} E \left[\left(\Gamma_d(X_i) - \frac{1}{2^{K-1}}\sum_{d' \in\mathcal{D}:\iota_k(d') = +1} \Gamma_{d'}(X_i)\right)^2\right] \\
\xi_0 &=  \sum_{d \in \mathcal{D}:\iota_k(d) = -1}E\left[\left(\Gamma_d(X_i) - \frac{1}{2^{K-1}}\sum_{d' \in\mathcal{D}:\iota_k(d') = -1} \Gamma_{d'}(X_i)\right)^2\right].
\end{align*}
\end{theorem}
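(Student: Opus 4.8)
The plan is to reduce everything to the identity $\hat\Delta_{\nu_k,n} = \frac1n\sum_{1\le i\le 2^K n}\iota_k(D_i)Y_i$ (derived exactly as in Section \ref{sec:factorial_setup}) and then to peel off the three sources of randomness in the factor $k$ specific design: the idiosyncratic outcome noise, the i.i.d.\ assignment of the \emph{secondary} factors, and the matched-pairs assignment of factor $k$. Writing $s_i := \iota_k(D_i)$, $\epsilon_i := Y_i(D_i) - \Gamma_{D_i}(X_i)$, and introducing $\bar\Gamma_+(x) := 2^{-(K-1)}\sum_{d:\iota_k(d)=+1}\Gamma_d(x)$ and $\bar\Gamma_-(x)$ analogously, I would decompose
\[ \sqrt n(\hat\Delta_{\nu_k,n} - \Delta_{\nu_k}(Q)) = R_n + C_n + S_n~, \]
where $R_n := \tfrac{1}{\sqrt n}\sum_i s_i\epsilon_i$, $C_n := \tfrac{1}{\sqrt n}\sum_i s_i(\Gamma_{D_i}(X_i) - \bar\Gamma_{s_i}(X_i))$, and $S_n := \tfrac{1}{\sqrt n}\sum_i s_i\bar\Gamma_{s_i}(X_i) - \sqrt n\,\Delta_{\nu_k}(Q)$. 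Here $C_n$ collects the deviation of $\Gamma_{D_i}(X_i)$ from its average $\bar\Gamma_{s_i}(X_i)$ over the secondary-factor levels $\mathbf c_i := (\iota_{k'}(D_i):k'\ne k)$, and $S_n$ is the ``matched-pairs on factor $k$'' term.

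For $R_n$ I would condition on $(X^{(n)},D^{(n)})$: the summands are independent and mean zero with conditional variance $\var[Y_i(D_i)\mid X_i]$, and since each $d\in\mathcal D$ is assigned to $\approx n$ units with representative covariates, a law of large numbers gives $\frac1n\sum_i\var[Y_i(D_i)\mid X_i]\stackrel{P}{\to}\sum_{d\in\mathcal D}E[\var[Y_i(d)\mid X_i]]$; a conditional Lindeberg CLT (using Assumption \ref{as:Q}(b)) then yields $R_n\stackrel{d}{\to}N(0,\nu_k\mathbb V_1\nu_k')$. For $C_n$ I would condition on $(X^{(n)},\{s_i\},\{\zeta_j\})$, under which the secondary factors $\mathbf c_i$ are i.i.d.\ across $i$ and $E[\Gamma_{D_i}(X_i)\mid X_i,s_i]=\bar\Gamma_{s_i}(X_i)$; the summands are conditionally mean zero and independent, with second moment $\var_{\mathbf c}(\Gamma_{(s_i,\cdot)}(X_i))$. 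Splitting according to $s_i=\pm1$ (each holding for exactly $2^{K-1}n$ units) and applying a LLN gives conditional variance $\to\xi_1+\xi_0$, hence $C_n\stackrel{d}{\to}N(0,\xi_1+\xi_0)$.

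The crux is $S_n$. Grouping by pairs, $S_n+\sqrt n\,\Delta_{\nu_k}(Q)=\tfrac{1}{\sqrt n}\sum_j(\bar\Gamma_+(X^+_j)-\bar\Gamma_-(X^-_j))$, where $X^\pm_j$ are the covariates of the two units in $\zeta_j$ assigned $\pm1$ on factor $k$. Integrating out the within-pair coin flip, the conditional mean of pair $j$'s contribution is $\tfrac12(\bar\Gamma_+ -\bar\Gamma_-)$ summed over its two units, while its conditional variance is $\tfrac14$ times the squared within-pair difference of $\bar\Gamma_+ + \bar\Gamma_-$; by the Lipschitz property in Assumption \ref{as:Q}(c) together with the closeness in Assumption \ref{as:close}, this conditional variance vanishes in the sense that $\tfrac1n\sum_j\tfrac14(\cdots)^2\stackrel{P}{\to}0$. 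Thus $S_n$ is asymptotically equal to its conditional mean, which telescopes over units into $\tfrac{1}{2\sqrt n}\sum_{i=1}^{2^K n}(\bar\Gamma_+(X_i)-\bar\Gamma_-(X_i))-\sqrt n\,\Delta_{\nu_k}(Q)$, an average of $2^K n$ i.i.d.\ terms normalized by $\sqrt n$. A CLT gives variance $\tfrac{2^K}{4}\var(\bar\Gamma_+(X_i)-\bar\Gamma_-(X_i))$, and since $\bar\Gamma_+(X_i)-\bar\Gamma_-(X_i)=2^{-(K-1)}\sum_d\iota_k(d)\Gamma_d(X_i)$ this equals $\tfrac{1}{2^K}\var(\sum_d\iota_k(d)\Gamma_d(X_i))=\nu_k\mathbb V_2\nu_k'$. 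This is the main obstacle: it is precisely the matched-pairs variance-reduction phenomenon, where closeness removes the assignment-induced variation and leaves a sum over \emph{all} $2^K n$ units (the factor $\tfrac12$), so that the between-covariate contribution coincides with that of the fully-blocked design rather than being twice as large. The coupling arguments here mirror the proof of Theorem \ref{thm:main_delta} and the matched-pairs analysis of \cite{bai2021inference}, so much of that machinery can be imported.

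Finally I would establish joint asymptotic normality and independence of $(R_n,C_n,S_n)$. By construction $E[R_n\mid X^{(n)},D^{(n)}]=0$ with $C_n,S_n$ measurable with respect to $(X^{(n)},D^{(n)})$, and $E[C_n\mid X^{(n)},\{s_i\},\{\zeta_j\}]=0$ with $S_n$ measurable with respect to that coarser field; this nesting makes the three pieces asymptotically uncorrelated. Joint convergence to independent normals follows from the Cram\'er--Wold device applied to the same conditional CLTs, verifying a Lindeberg condition for arbitrary linear combinations. Summing the three limiting variances gives $\nu_k\mathbb V_1\nu_k'+(\xi_1+\xi_0)+\nu_k\mathbb V_2\nu_k'=\mathbb V_{\nu_k}+\xi_1+\xi_0$, which is the claimed asymptotic variance.
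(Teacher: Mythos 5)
Your proposal is correct, and it reaches the stated variance by a genuinely different route than the paper. The paper's proof does not decompose the estimator into three conditional pieces; instead it introduces auxiliary i.i.d.\ Rademacher vectors $A_i$, defines ``averaged'' potential outcomes $\tilde Y_i(\pm 1) = \sum_{d} I\{\iota_k(d)=\pm 1\} I\{\iota_{-k}(d)=A_i\} Y_i(d)$, shows that $\hat\Delta_{\nu_k,n}$ is equal in distribution to the difference-in-means estimator of a \emph{binary} matched-pairs design with these augmented potential outcomes, and then imports the matched-pairs CLT (Lemma S.1.4 of Bai, Romano and Shaikh) as a black box; the terms $\xi_1,\xi_0$ emerge afterwards from a law-of-total-variance calculation that splits $\var[\tilde Y_i(\pm1)\mid X_i]$ into the within-combination noise and the across-combination heterogeneity of $\Gamma_d(X_i)$. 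Your decomposition $R_n + C_n + S_n$ unbundles exactly these two pieces at the outset: $R_n$ carries $\nu_k\mathbb V_1\nu_k'$, $C_n$ carries $\xi_1+\xi_0$ (this is precisely the $\var[E[\cdot\mid X_i,A_i]\mid X_i]$ part of the paper's law-of-total-variance step, realized as a conditional CLT over the i.i.d.\ secondary-factor draws), and $S_n$ carries $\nu_k\mathbb V_2\nu_k'$ via the standard matched-pairs variance-collapse argument; your arithmetic for each limit, including the $\tfrac{2^K}{4}\var(\bar\Gamma_+-\bar\Gamma_-)=\nu_k\mathbb V_2\nu_k'$ identity, checks out against the paper's final expression. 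What the paper's coupling buys is brevity — one distributional-equivalence observation and an off-the-shelf lemma replace your three conditional CLTs and the Cram\'er--Wold/nested-$\sigma$-field bookkeeping for joint convergence (which is itself essentially Lemma S.1.2 of the same reference, so the machinery you would need is available). What your route buys is transparency about the provenance of each variance component and independence from the binary-treatment lemma. The only points to tighten in a full write-up are the LLN steps asserting that the units with $s_i=+1$ are ``representative'' (these need the within-pair closeness of $\zeta_j$ plus the Lipschitz condition, exactly as in the proof of Theorem \ref{thm:main_delta}) and the Lindeberg verifications, neither of which is a conceptual obstacle.
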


\begin{remark}
In this section we have presented results for ``full" factorial designs, which assign individuals to every possible combination of factors. This is in contrast to ``fractional" factorial designs, which assign only a subset of the possible factor combinations \citep[see for example][]{wu2011experiments,pashley2019causal}. We leave possible extensions of our procedure to the fractional case for future work.
\end{remark}


\section{Simulations}\label{sec:sims}
In this section we examine the finite sample performance of the estimator $\hat\Delta_{\nu,n}$ and the test $\phi^\nu_n(Z^{(n)})$ in the context of a $2^K$ factorial experiment, under various alternative experimental designs. In Sections \ref{sec:sims-mse} and \ref{sec:sims-inference} the data generating processes are as specified below (in Section \ref{sec:sims-multcovs} we study an alternative design with multiple covariates and factors). For $d = (d^{(1)}, d^{(2)}) \in \{-1, 1\}^2$ and $1\leq i \leq 4 n$, the potential outcomes are generated according to the equation:
\begin{equation*}
Y_i(d) = \mu_d + \mu_d(X_i) + \sigma_d(X_i) \epsilon_{i}~.
\end{equation*}
In each of the specifications, $((X_i, \epsilon_{i}): 1\leq i \leq 4 n)$ are i.i.d; for $1 \leq i \leq 4n$, $X_i$ and $\epsilon_{i}$ are independent.
\begin{enumerate}[{\bf Model} 1:]
	\item $\mu_{1, a}(X_i) = \mu_{-1, a}(X_i) = \gamma X_i$ for $a \in \{-1, 1\}$, where $\gamma = 1$. $\mu_{1,1} = 2\mu_{1,-1} = 4\mu_{-1,1} = 2\tau$ for a parameter $\tau \in \{0, 0.2\}$, $\mu_{-1, -1} = 0$, $\epsilon_{i} \sim N(0, 1)$ and $X_i \sim N(0, 1)$ for all $d \in \{-1, 1\}^2$ and $\sigma_d(X_i) = 1$. 
	\item As in Model 1, but $\mu_d(X_i) = X_i + (X_i^2 - 1)/3$.
	\item As in Model 1, but $\mu_d(X_i) = \gamma_d X_i + (X_i^2 - 1)/3$. $\gamma_{1,1}=2$, $\gamma_{-1,1}=1$, $\gamma_{1,-1}=1/2$ and $\gamma_{-1,-1}=-1$.
	\item As in Model 3, but $\mu_d(X_i) = \sin(\gamma_d X_i)$.
	\item As in Model 3, $\mu_d(X_i) = \sin(\gamma_d X_i) + \gamma_d X_i/10 + (X_i^2 - 1)/3$.
	\item As in Model 3, but $\sigma_d(X_i) = (1 + d^{(1)} + d^{(2)})X_i^2$.
\end{enumerate}
 We consider five parameters of interest as listed in Table \ref{table:estimands}. $\Delta_{\nu_1}(Q)$ and $\Delta_{\nu_2}(Q)$ correspond to the main factorial effects for the two factors. $\Delta_{\nu_{1,2}}(Q)$ corresponds to the interaction effect between the two factors, as discussed in Example \ref{ex:2-factor}. $\Delta_{\nu_1^1}(Q)$ and $\Delta_{\nu_{-1}^1}(Q)$ denote the average effect of one factor, keeping the value of the other factor fixed at $1$ or $-1$. All simulations are performed with a sample of size $4n = 1000$. 
\begin{table}[ht!]
\centering
\setlength{\tabcolsep}{8pt}
\renewcommand{\arraystretch}{1.5}
\begin{tabular}{cc}
\toprule
Parameter of interest & Formula \\ \midrule
$\frac{1}{2}\Delta_{\nu_1}(Q)$      &   $ \frac{1}{2} E[Y_i(1, 1) - Y_i(-1, 1)] + \frac{1}{2} E[Y_i(1, -1) - Y_i(-1, -1)] $      \\
$\frac{1}{2}\Delta_{\nu_2}(Q)$   & $ \frac{1}{2} E[Y_i(1, 1) - Y_i(1, -1)] + \frac{1}{2} E[Y_i(-1, 1) - Y_i(-1, -1)]$  \\
$\frac{1}{2}\Delta_{\nu_{1,2}}(Q)$ & $\frac{1}{2} E[Y_i(1, 1) - Y_i(-1, 1)] - \frac{1}{2} E[Y_i(1, -1) - Y_i(-1, -1)]$ \\
$\Delta_{\nu_1^1}(Q)$      &   $E[Y_i(1, 1) - Y_i(-1, 1)]$      \\
$\Delta_{\nu_{-1}^1}(Q)$      &        $E[Y_i(1, -1) - Y_i(-1, -1)]$ \\
\bottomrule
\end{tabular}
\caption{Parameters of interest}
\label{table:estimands}
\end{table}
\subsection{MSE Properties of the Matched Tuples Design}\label{sec:sims-mse}
In this section, we study the mean-squared-error performance of $\hat\Delta_{\nu,n}$ across several experimental designs. We analyze and compare the MSE for all five parameters of interest for the following seven experimental designs:
\begin{enumerate}
	\item \textbf{(B-B)} $(D_i^{(1)}, D_i^{(2)})$ are i.i.d.\ across $1 \leq i \leq 4n$ and the two entries are independently distributed as $2A - 1$, where $A$ follows Bernoulli$(1/2)$.
	\item \textbf{(C)} $(D_i^{(1)}, D_i^{(2)})$ are jointly drawn from a completely randomized design. We uniformly at random divide the experimental sample of size $4n$ into four groups of size $n$ and assign a different $d \in \{-1, 1\}^2$ for each group.
	\item \textbf{(MP-B)} A matched-pair design for $D^{(1)}$, where units are ordered and paired according to $X_i$. For each pair, uniformly at random assign $D_i^{(1)} = 1$ to one of the units. Independently, $(D_i^{(2)}: 1 \leq i \leq 4n)$ are i.i.d.\ with the distribution of $2A - 1$, where $A \sim \mathrm{Bernoulli}(1/2)$.
	\item \textbf{(MT)} Matched tuples design where units are ordered according to $X_i$.
	\item \textbf{(Large-2)} A stratified design, where the experimental sample is divided into two strata using the median of $X_i$ as the cutoff. In each stratum, treatment is assigned as in $\textbf{C}$. 
	\item \textbf{(Large-4)} As in \textbf{(Large-2)}, but with four strata. 
	\item \textbf{(RE)} A re-randomization design using a Mahalanobis balance function. As outlined in \cite{rubin2016}, we select the main-effect threshold criterion to be the $100(0.01^{1/K})$ percentile of a $\chi^2_{p}$ distribution with $p=\mathrm{dim}(X_i)$, and select the interaction-effect threshold criterion to be $100(0.01^{1/L})$, where $L$ is the number of interaction effects.
\end{enumerate}

Table \ref{table:mse} displays the ratio of the MSE of each design relative to the MSE of \textbf{MT}, computed across 4,000 Monte Carlo replications. In each of the designs, we set treatment effects to zero by setting $\tau=0$. As expected from Theorems \ref{thm:block_factorial} and \ref{thm:matched-pair}, \textbf{MT} outperforms \textbf{B-B}, \textbf{C}, \textbf{MP-B}, \textbf{Large-2}, and \textbf{Large-4} in every model specification. We also find that \textbf{MT} compares favorably to \textbf{RE}, with \textbf{RE} slightly outperforming \textbf{MT} in some cases, but with \textbf{MT} outperforming in general. Although we do not have formal results comparing the matched tuples design to re-randomization, we note that re-randomization redraws treatments until the distances between certain features of the covariate distribution across treatment statuses are below certain pre-specified thresholds. In contrast, the matched tuples design attempts to \emph{minimize} these distances by blocking units finely based on the covariates. See also Remark 3 of \cite{bai2022optimality} for a related observation in the binary treatment setting.

\begin{table}[ht!]
\centering
\setlength{\tabcolsep}{5pt}
\begin{adjustbox}{max width=0.75\linewidth,center}
\begin{tabular}{lllllllll}
\toprule
Model              & Parameter & \textbf{B-B} & \textbf{C} & \textbf{MP-B} & \textbf{MT} & \textbf{Large-2} & \textbf{Large-4} & \textbf{RE}    \\
\midrule
\multirow{5}{*}{1} &   $ \Delta_{\nu_1}$       & 2.099 & 1.948 & 1.045 & 1.000 & 1.335 & 1.138 & 1.031 \\
& $ \Delta_{\nu_2}$ & 2.036 & 2.015 & 2.113 & 1.000 & 1.407 & 1.179 & 0.988 \\
& $ \Delta_{\nu_{1,2}}$ & 2.008 & 2.044 & 2.016 & 1.000 & 1.423 & 1.091 & 1.014 \\
&   ${\Delta}_{\nu_1^1}$        & 2.051 & 2.014 & 1.563 & 1.000 & 1.402 & 1.134 & 1.029 \\
&   ${\Delta}_{\nu_{-1}^1}$       &  2.057 & 1.978 & 1.498 & 1.000 & 1.357 & 1.095 & 1.017 \\
\\
\multirow{5}{*}{2} &   $ \Delta_{\nu_1}$       & 2.327 & 2.168 & 1.044 & 1.000 & 1.546 & 1.249 & 1.232 \\
& $ \Delta_{\nu_2}$ & 2.254 & 2.259 & 2.355 & 1.000 & 1.619 & 1.312 & 1.209 \\
& $ \Delta_{\nu_{1,2}}$ & 2.249 & 2.287 & 2.173 & 1.000 & 1.646 & 1.225 & 1.250 \\
&   ${\Delta}_{\nu_1^1}$        & 2.285 & 2.265 & 1.634 & 1.000 & 1.599 & 1.260 & 1.227 \\
&   ${\Delta}_{\nu_{-1}^1}$       &  2.291 & 2.190 & 1.585 & 1.000 & 1.593 & 1.215 & 1.255 \\
\\
\multirow{5}{*}{3} &   $ \Delta_{\nu_1}$       & 2.042 & 1.996 & 1.792 & 1.000 & 1.422 & 1.206 & 1.124 \\
& $ \Delta_{\nu_2}$ & 1.576 & 1.527 & 1.480 & 1.000 & 1.221 & 1.140 & 1.109 \\
& $ \Delta_{\nu_{1,2}}$ & 3.113 & 2.982 & 1.943 & 1.000 & 1.900 & 1.337 & 1.187 \\
&   ${\Delta}_{\nu_1^1}$        & 3.401 & 3.351 & 2.237 & 1.000 & 1.979 & 1.410 & 1.225 \\
&   ${\Delta}_{\nu_{-1}^1}$       & 1.899 & 1.802 & 1.619 & 1.000 & 1.388 & 1.166 & 1.103 \\
\\
\multirow{5}{*}{4} &   $ \Delta_{\nu_1}$       & 1.311 & 1.305 & 1.252 & 1.000 & 1.100 & 1.070 & 1.194 \\
& $ \Delta_{\nu_2}$ & 1.218 & 1.210 & 1.167 & 1.000 & 1.063 & 1.064 & 1.057 \\
& $ \Delta_{\nu_{1,2}}$ & 1.296 & 1.289 & 1.152 & 1.000 & 1.184 & 1.084 & 1.191 \\
&   ${\Delta}_{\nu_1^1}$        & 1.416 & 1.401 & 1.259 & 1.000 & 1.158 & 1.080 & 1.249 \\
&   ${\Delta}_{\nu_{-1}^1}$       &  1.201 & 1.202 & 1.150 & 1.000 & 1.128 & 1.075 & 1.140 \\
\\
\multirow{5}{*}{5} &   $ \Delta_{\nu_1}$       & 1.603 & 1.606 & 1.315 & 1.000 & 1.280 & 1.169 & 1.375 \\
& $ \Delta_{\nu_2}$ & 1.444 & 1.458 & 1.378 & 1.000 & 1.225 & 1.173 & 1.235 \\
& $ \Delta_{\nu_{1,2}}$ & 1.607 & 1.598 & 1.351 & 1.000 & 1.370 & 1.184 & 1.390 \\
&   ${\Delta}_{\nu_1^1}$        & 1.802 & 1.797 & 1.415 & 1.000 & 1.353 & 1.192 & 1.441 \\
&   ${\Delta}_{\nu_{-1}^1}$       &  1.434 & 1.434 & 1.262 & 1.000 & 1.301 & 1.164 & 1.332 \\
\\
\multirow{5}{*}{6} &   $ \Delta_{\nu_1}$       & 1.119 & 1.122 & 1.116 & 1.000 & 1.055 & 1.021 & 1.065 \\
& $ \Delta_{\nu_2}$ & 1.051 & 1.042 & 1.056 & 1.000 & 1.026 & 0.991 & 0.989 \\
& $ \Delta_{\nu_{1,2}}$ & 1.107 & 1.104 & 1.077 & 1.000 & 1.074 & 0.994 & 1.018 \\
&   ${\Delta}_{\nu_1^1}$        & 1.096 & 1.100 & 1.088 & 1.000 & 1.058 & 1.005 & 1.051 \\
&   ${\Delta}_{\nu_{-1}^1}$       &  1.197 & 1.177 & 1.137 & 1.000 & 1.092 & 1.017 & 0.996 \\
\bottomrule
\end{tabular}
\end{adjustbox}
\caption{Ratio of MSEs relative to MT}
\label{table:mse}
\end{table}

\subsection{Inference}\label{sec:sims-inference}
In this section, we study the finite sample properties of several different tests of the null hypothesis $H_0: \Delta_{\nu} = 0$ for various choices of $\nu$, against the alternative hypotheses implied by setting $\tau = 0.2$. In this section we restrict our attention to five assignment mechanisms: \textbf{B-B}, \textbf{C}, \textbf{MT}, \textbf{Large-2} and \textbf{Large-4}. We exclude \textbf{MP-B} because it is a non-standard experimental design for which we have not developed an inference procedure. We also exclude the re-randomization design (\textbf{RE}) because, although it is a widely studied design, the inferential results in \cite{li2020rerandomization} are derived in a finite population framework which is distinct from our super-population framework, and their resulting limiting distribution is non-normal. 

In each case we perform our hypothesis tests at a significance level of $0.05$. For design \textbf{B-B}, tests are performed using a standard $t$-test. For designs \textbf{C}, \textbf{Large-2} and \textbf{Large-4} the tests are constructed using the asymptotic normality result from Theorem \ref{thm:block_factorial} combined with variance estimators constructed using the same plug-in method as in \cite{bugni2018} and \cite{bugni2019inference}. For design {\bf MT} the test is constructed as described in Theorem \ref{thm:V_const}. Table \ref{table:rej.prob} displays the rejection probabilities under the null and alternative hypotheses, computed from 2,000 Monte Carlo replications. The results show that the rejection probabilities are universally around 0.05 under the null hypothesis, which verifies the validity of our tests across all the designs. Under the alternative hypotheses implied by $\tau=0.2$, the rejection probabilities vary substantially across the different designs, outcome models and parameters. However, our matched tuples design displays the highest power for almost all parameters and model specifications. 

\begin{table}[ht!]
\centering
\setlength{\tabcolsep}{4pt}
\begin{adjustbox}{max width=0.9\linewidth,center}
\begin{tabular}{lllllllllllll}
\toprule
&       &  \multicolumn{5}{c}{Under $H_0$} & & \multicolumn{5}{c}{Under $H_1$} \\ \cmidrule{3-7} \cmidrule{9-13}
Model              & Parameter & \textbf{B-B} & \textbf{C} & \textbf{MT} & \textbf{Large-2} & \textbf{Large-4} &  & \textbf{B-B} & \textbf{C} & \textbf{MT} & \textbf{Large-2} & \textbf{Large-4}    \\ \midrule
\multirow{5}{*}{1} &   $ \Delta_{\nu_1}$       & 0.057 & 0.049 & 0.051 & 0.050 & 0.046 &   & 0.790 & 0.803 & 0.977 & 0.915 & 0.963 \\
& $ \Delta_{\nu_2}$ & 0.052 & 0.059 & 0.046 & 0.060 & 0.058 &  & 0.371 & 0.403 & 0.675 & 0.534 & 0.593 \\
& $ \Delta_{\nu_{1,2}}$ & 0.049 & 0.059 & 0.049 & 0.059 & 0.043 &   & 0.081 & 0.093 & 0.126 & 0.100 & 0.106 \\
&   ${\Delta}_{\nu_1^1}$        & 0.052 & 0.043 & 0.048 & 0.064 & 0.040 &   & 0.646 & 0.656 & 0.921 & 0.816 & 0.884 \\
&   ${\Delta}_{\nu_{-1}^1}$       &  0.056 & 0.051 & 0.044 & 0.057 & 0.048 &   & 0.361 & 0.333 & 0.594 & 0.499 & 0.545 \\
\\
\multirow{5}{*}{2} &   $ \Delta_{\nu_1}$       & 0.053 & 0.043 & 0.049 & 0.048 & 0.045 &   & 0.738 & 0.737 & 0.976 & 0.875 & 0.951 \\
& $ \Delta_{\nu_2}$ & 0.056 & 0.061 & 0.046 & 0.059 & 0.056 &   & 0.341 & 0.377 & 0.670 & 0.483 & 0.551 \\
& $ \Delta_{\nu_{1,2}}$ & 0.052 & 0.065 & 0.050 & 0.060 & 0.044 &  & 0.082 & 0.091 & 0.126 & 0.101 & 0.095 \\
&   ${\Delta}_{\nu_1^1}$        & 0.049 & 0.051 & 0.046 & 0.057 & 0.036 &   & 0.597 & 0.610 & 0.919 & 0.758 & 0.840 \\
&   ${\Delta}_{\nu_{-1}^1}$       &  0.056 & 0.051 & 0.046 & 0.054 & 0.048 &   & 0.340 & 0.310 & 0.598 & 0.436 & 0.500 \\
\\
\multirow{5}{*}{3} &   $ \Delta_{\nu_1}$       & 0.054 & 0.056 & 0.050 & 0.053 & 0.052 &   & 0.571 & 0.570 & 0.837 & 0.705 & 0.787 \\
& $ \Delta_{\nu_2}$ & 0.056 & 0.057 & 0.056 & 0.057 & 0.059 &   & 0.235 & 0.259 & 0.361 & 0.286 & 0.323 \\
& $ \Delta_{\nu_{1,2}}$ & 0.051 & 0.051 & 0.052 & 0.062 & 0.047 &   & 0.060 & 0.064 & 0.116 & 0.091 & 0.082 \\
&   ${\Delta}_{\nu_1^1}$        & 0.048 & 0.051 & 0.046 & 0.061 & 0.035 &   & 0.402 & 0.421 & 0.885 & 0.624 & 0.762 \\
&   ${\Delta}_{\nu_{-1}^1}$       &  0.061 & 0.047 & 0.060 & 0.056 & 0.057 &   & 0.255 & 0.234 & 0.374 & 0.310 & 0.340 \\
\\
\multirow{5}{*}{4} &   $ \Delta_{\nu_1}$       & 0.049 & 0.051 & 0.045 & 0.045 & 0.050 &   & 0.908 & 0.905 & 0.968 & 0.956 & 0.957 \\
& $ \Delta_{\nu_2}$ & 0.051 & 0.052 & 0.051 & 0.051 & 0.058 &   & 0.488 & 0.520 & 0.604 & 0.569 & 0.559 \\
& $ \Delta_{\nu_{1,2}}$ & 0.056 & 0.052 & 0.049 & 0.065 & 0.045 &   & 0.092 & 0.102 & 0.126 & 0.117 & 0.111 \\
&   ${\Delta}_{\nu_1^1}$        & 0.050 & 0.048 & 0.051 & 0.054 & 0.045 &   & 0.762 & 0.785 & 0.908 & 0.865 & 0.886 \\
&   ${\Delta}_{\nu_{-1}^1}$       &  0.044 & 0.055 & 0.048 & 0.052 & 0.046 &   & 0.498 & 0.472 & 0.544 & 0.528 & 0.523 \\
\\
\multirow{5}{*}{5} &   $ \Delta_{\nu_1}$       & 0.054 & 0.054 & 0.045 & 0.045 & 0.043 &   & 0.844 & 0.847 & 0.964 & 0.912 & 0.937 \\
& $ \Delta_{\nu_2}$ & 0.053 & 0.056 & 0.051 & 0.048 & 0.053 &   & 0.416 & 0.445 & 0.589 & 0.491 & 0.505 \\
& $ \Delta_{\nu_{1,2}}$ & 0.052 & 0.054 & 0.049 & 0.059 & 0.049 &   & 0.092 & 0.099 & 0.124 & 0.110 & 0.099 \\
&   ${\Delta}_{\nu_1^1}$        & 0.051 & 0.052 & 0.049 & 0.058 & 0.043 &   & 0.674 & 0.688 & 0.911 & 0.810 & 0.847 \\
&   ${\Delta}_{\nu_{-1}^1}$       &  0.050 & 0.062 & 0.049 & 0.056 & 0.049 &   & 0.416 & 0.403 & 0.523 & 0.461 & 0.474 \\
\\
\multirow{5}{*}{6} &   $ \Delta_{\nu_1}$       & 0.050 & 0.050 & 0.043 & 0.058 & 0.043 &   & 0.129 & 0.128 & 0.122 & 0.115 & 0.130 \\
& $ \Delta_{\nu_2}$ & 0.053 & 0.059 & 0.057 & 0.057 & 0.051 &   & 0.074 & 0.086 & 0.088 & 0.079 & 0.080 \\
& $ \Delta_{\nu_{1,2}}$ & 0.047 & 0.046 & 0.052 & 0.053 & 0.044 &   & 0.052 & 0.046 & 0.052 & 0.057 & 0.050 \\
&   ${\Delta}_{\nu_1^1}$        & 0.049 & 0.046 & 0.049 & 0.051 & 0.043 &   & 0.082 & 0.083 & 0.077 & 0.082 & 0.081 \\
&   ${\Delta}_{\nu_{-1}^1}$       &  0.059 & 0.056 & 0.058 & 0.059 & 0.056 &   & 0.140 & 0.113 & 0.125 & 0.131 & 0.135 \\
\bottomrule
\end{tabular}
\end{adjustbox}
\caption{Rejection probabilities under the null and alternative hypothesis}
\label{table:rej.prob}
\end{table}

\subsection{Experiments with More Factors and Covariates}\label{sec:sims-multcovs}
In this section we repeat the previous simulation exercises while varying the number of factors $K$ and the number of observed covariates $\mathrm{dim}(X_i)$. The data generating process is constructed as follows:
\begin{equation*}
Y_i(d) = \begin{cases} \tau d^{(1)} +  \tilde{X}_i' \beta + \epsilon_{i}, & \text { if } K=1 \\ \tau\cdot\left(d^{(1)} + \frac{\sum_{k=2}^K d^{(k)} }{K-1}\right) + \gamma_{d}\tilde{X}_i' \beta + \epsilon_{i}, & \text { if } K \geq 2 \end{cases}
\end{equation*}
where $\tau \in \{0, 0.1\}$, $d = (d^{(1)}, \ldots, d^{(K)})$ and $d^{(k)}\in \{-1, 1\}$ represents the treatment status of the $k$-th factor. We set $\gamma_{d} = 1$ if $d^{(2)}=1$, $\gamma_{d} = -1$ otherwise, in order to ensure the conditional means are heterogeneous in the second factor. $\tilde{X}_i$ contains $9$ covariates, out of which the first $\text{dim}(X_i)$ covariates are observed and used for the experimental designs. The distributions of $\tilde{X}_i, \epsilon_i$ and the values of $\beta$ are calibrated using data obtained from \cite{rubin2016}, who study the covariate balancing properties of $2^K$ factorial re-randomization designs using data from the New York Department of Education (NYDE). Details on the empirical context and construction of the data generating process are provided in Appendix \ref{sec:calibrated_details}.

To construct our matched tuples of size $2^K$ when $\mathrm{dim}(X_i) > 1$, we employ the recursive pairing algorithm described in Section \ref{sec:setup} using the Mahalanobis distance. We emphasize, however, that this approach is not guaranteed to be optimal, and we leave the study of potentially more effective matching algorithms to future work. 

In addition to the standard matched tuples design (\textbf{MT}), we also include a matched tuples design with a \emph{ replicate} for each treatment as described in Section \ref{sec:replicate}, denoted by \textbf{MT2}. For example, in the \textbf{MT2} design with two factors, units are matched into groups of \emph{eight}, and two units receive each factor combination. We also continue to consider the alternative designs (\textbf{C}, \textbf{Large-4}, \textbf{MP-B} and \textbf{RE}) from Section \ref{sec:sims-mse}. When constructing the strata for \textbf{Large-4}, we stratify on one covariate drawn at random from the set of available covariates.



In Table \ref{table:more-factor-x-mse} we report the ratio of the MSE of each design relative to the MSE of {\bf MT} when $\text{dim}(X_i) = 1$ and $K = 1$ (computed from 4,000 Monte Carlo replications). For all experiments in this section, the number of observations is fixed to be 1,280 so that we have 20 matched tuples of size 64 when $K=6$. Our simulation results are consistent with those in Section \ref{sec:sims-mse}: \textbf{MT} displays the lowest MSE across almost all model specifications. Although \textbf{MT2} generally produces larger MSEs than \textbf{MT}, it still performs favorably relative to the other designs. For methods that use an increasing number of covariates when $\mathrm{dim}(X_i)$ increases (\textbf{MT}, \textbf{MT2}, \textbf{MP-B} and \textbf{RE}), we observe that the MSE in fact \emph{increases} with the number of available covariates. We expect this is because (as shown in Appendix \ref{sec:calibrated_details}) the first covariate is a much stronger predictor of the control outcome than the other available covariates, which are relatively uninformative.



\mycomment{
\begin{table}[ht!]
\centering
\setlength{\tabcolsep}{2.5pt}
\begin{adjustbox}{max width=\linewidth,center}
\begin{tabular}{ccccccccccccccccc}
\toprule
$\mathrm{dim}(X_i)$  & Method & $K=1$ & $K=2$ & $K=3$ & $K=4$ & $K=5$ & $K=6$ & & Method & $K=1$ & $K=2$ & $K=3$ & $K=4$ & $K=5$ & $K=6$ \\ 
\midrule
1 & \multirow{5}{*}{\textbf{MT}} & 1.000 & 1.057 & 1.093 & 1.004 & 1.137 & 1.103 & & \multirow{5}{*}{\textbf{C}} & 1.359 & 1.452 & 1.440 & 1.274 & 1.365 & 1.454 \\
2 &  & 0.768 & 0.770 & 0.772 & 0.774 & 0.776 & 0.792  & & &  1.451 & 1.393 & 1.384 & 1.499 & 1.482 & 1.397 \\
4 &  & 0.403 & 0.413 & 0.456 & 0.478 & 0.536 & 0.666 & & &  1.358 & 1.441 & 1.492 & 1.391 & 1.421 & 1.401 \\
8 &  & 0.225 & 0.307 & 0.413 & 0.580 & 0.715 & 0.857 & &  & 1.429 & 1.313 & 1.366 & 1.365 & 1.400 & 1.324 \\
10 & & 0.267 & 0.414 & 0.497 & 0.649 & 0.880 & 0.944 & &  & 1.422 & 1.426 & 1.416 & 1.408 & 1.478 & 1.514 \\
\\
1 & \multirow{5}{*}{\textbf{MT2}} & 1.114 & 0.987 & 1.053 & 1.013 & 1.030 & 1.105 & & \multirow{5}{*}{\textbf{Large-4}} & 1.092 & 1.041 & 1.109 & 1.136 & 1.109 & 1.062  \\
2 &  & 0.754 & 0.798 & 0.712 & 0.732 & 0.779 & 0.901 & & &  1.151 & 1.107 & 1.053 & 1.131 & 1.017 & 1.077 \\
4 &  & 0.408 & 0.492 & 0.488 & 0.575 & 0.632 & 0.768 & & &  1.056 & 1.064 & 1.064 & 1.096 & 1.123 & 1.118 \\
8 &  & 0.300 & 0.392 & 0.560 & 0.680 & 0.798 & 1.078  & &  & 1.163 & 1.038 & 1.146 & 1.066 & 1.040 & 1.098 \\
10 & & 0.370 & 0.466 & 0.687 & 0.878 & 0.988 & 1.091 & &  & 1.134 & 1.018 & 1.041 & 1.084 & 1.082 & 1.106 \\
\\
1 & \multirow{5}{*}{\textbf{MP-B}} & 1.016 & 1.450 & 1.324 & 1.386 & 1.456 & 1.333 & & \multirow{5}{*}{\textbf{RE}} & 1.067 & 1.465 & 1.333 & 1.304 & 1.292 & 1.424 \\
2 &  & 0.769 & 1.426 & 1.354 & 1.356 & 1.378 & 1.458 & & &  0.710 & 1.286 & 1.343 & 1.294 & 1.293 & 1.300 \\
4 &  & 0.397 & 1.457 & 1.437 & 1.264 & 1.460 & 1.304 & & &  0.374 & 1.293 & 1.274 & 1.481 & 1.437 & 1.430 \\
8 &  & 0.234 & 1.451 & 1.354 & 1.318 & 1.423 & 1.411  & &  & 0.277 & 1.423 & 1.348 & 1.464 & 1.305 & 1.364 \\
10 & & 0.267 & 1.452 & 1.392 & 1.429 & 1.371 & 1.472 & &  & 0.309 & 1.261 & 1.330 & 1.350 & 1.327 & 1.320   \\
\bottomrule
\end{tabular}
\end{adjustbox}
\caption{Ratio of MSE against matched tuples with single factor and covariate}
\label{table:more-factor-x-mse}
\end{table}
}

\begin{table}[ht!]
\centering
\setlength{\tabcolsep}{2.5pt}
\begin{adjustbox}{max width=\linewidth,center}
\begin{tabular}{ccccccccccccccccc}
\toprule
$\mathrm{dim}(X_i)$  & Method & $K=1$ & $K=2$ & $K=3$ & $K=4$ & $K=5$ & $K=6$ & & Method & $K=1$ & $K=2$ & $K=3$ & $K=4$ & $K=5$ & $K=6$ \\ 
\midrule
1 & \multirow{5}{*}{\textbf{MT}} & 1.000 & 1.003 & 1.006 & 1.113 & 1.297 & 1.945 & & \multirow{5}{*}{\textbf{C}} & 9.151 & 8.554 & 8.642 & 8.939 & 9.015 & 9.181 \\
2 &  & 1.027 & 1.052 & 1.107 & 1.180 & 1.463 & 2.293  & & &  9.120 & 8.528 & 8.568 & 8.867 & 9.053 & 9.114 \\
4 &  & 1.043 & 1.130 & 1.420 & 1.687 & 2.170 & 3.338 & & &  8.968 & 8.364 & 8.569 & 8.868 & 8.949 & 8.765 \\
6 &  & 1.192 & 1.495 & 1.763 & 2.241 & 3.097 & 4.304 & &  & 8.945 & 8.327 & 8.588 & 8.994 & 9.081 & 8.853 \\
9 & & 1.284 & 1.702 & 2.047 & 2.781 & 3.337 & 4.081 & &  & 8.934 & 8.309 & 8.600 & 8.788 & 8.915 & 8.526 \\
\\
1 & \multirow{5}{*}{\textbf{MT2}} & 1.017 & 1.049 & 1.074 & 1.297 & 1.916 & 2.903 & & \multirow{5}{*}{\textbf{Large-4}} & 4.393 & 4.605 & 4.674 & 4.634 & 4.393 & 4.381  \\
2 &  & 1.044 & 1.086 & 1.212 & 1.547 & 2.200 & 3.585 & & &  6.523 & 6.926 & 6.745 & 6.704 & 6.521 & 6.367 \\
4 &  & 1.224 & 1.332 & 1.620 & 2.231 & 3.379 & 4.799 & & &  7.321 & 8.100 & 7.407 & 7.559 & 7.542 & 7.399 \\
6 &  & 1.451 & 1.901 & 2.339 & 3.061 & 4.020 & 5.721  & &  & 8.143 & 8.137 & 7.644 & 7.801 & 8.288 & 7.906 \\
9 & & 1.609 & 2.140 & 2.693 & 3.231 & 4.387 & 6.903  & &  & 8.093 & 8.075 & 8.170 & 7.799 & 8.129 & 8.402 \\
\\
1 & \multirow{5}{*}{\textbf{MP-B}} & 0.991 & 8.693 & 8.807 & 8.964 & 8.991 & 8.829  & & \multirow{5}{*}{\textbf{RE}} & 1.073 & 1.091 & 1.296 & 2.032 & 3.040 & 3.640 \\
2 &  & 0.978 & 8.854 & 8.897 & 8.863 & 8.811 & 9.072 & & & 1.090 & 1.069 & 1.955 & 3.284 & 4.282 & 5.094 \\
4 &  & 0.967 & 8.970 & 8.711 & 9.020 & 8.855 & 8.749 & & &  1.320 & 1.410 & 3.278 & 4.640 & 5.504 & 6.270  \\
6 &  & 1.175 & 9.148 & 8.753 & 8.941 & 8.774 & 8.596  & &  & 1.961 & 1.886 & 3.976 & 5.648 & 6.223 & 6.759 \\
9 & & 1.227 & 8.793 & 8.989 & 9.444 & 9.227 & 8.273 & &  & 2.515 & 2.566 & 4.957 & 6.265 & 6.676 & 7.455  \\
\bottomrule
\end{tabular}
\end{adjustbox}
\caption{Ratio of MSEs relative to MT using a single factor and covariate}
\label{table:more-factor-x-mse}
\end{table}

In Table \ref{table:more-factor-x}, we compute the rejection probabilities when testing the null hypothesis $H_0: \Delta_{\nu_1} = 0$ against the alternative implied by setting $\tau = 0.1$, for various choices of $K$ and $\mathrm{dim}(X_i)$ (computed from 1,000 Monte Carlo replications). Under the null hypothesis, we observe that our tests under design \textbf{MT} become conservative as $\mathrm{dim}(X_i)$ and $K$ increase. In particular, we notice a large difference between $K = 4$ and $K = 5$. However, despite being conservative, \textbf{MT} still displays favorable power properties relative to \textbf{C} and \textbf{Large-4} for all but the largest choices of $K$.

Our next observation is that our tests under design \textbf{MT2} remain exact even as $\mathrm{dim}(X_i)$ and $K$ both increase. As we explain in Section \ref{sec:replicate}, we suspect that our challenges for inference using \textbf{MT} come from poor estimation of the variance, which seems to be alleviated in \textbf{MT2}, where the number of observations receiving each treatment within a tuple are doubled. As a result of this exactness, \textbf{MT2} achieves higher power than \textbf{MT} when $\mathrm{dim}(X_i)$ and $K$ are large. To further explore these power improvements, Figure \ref{fig:power_plots}  presents power plots for three specific choices of $K$ and $\mathrm{dim}(X_i)$ with $\tau$ ranging from 0 to 0.1 (Figure \ref{fig:power_plots2} in the appendix presents power plots for alternatives implied by larger values than $\tau = 0.1$). First, when $\mathrm{dim}(X_i)$ and $K$ are small, for instance $\mathrm{dim}(X_i)=K=1$, we observe no significant difference between the power plots generated by \textbf{MT} and \textbf{MT2}. However, when the dimension of the covariates and factors are both large, for instance $\mathrm{dim}(X_i)=6, K=4$, \textbf{MT2} dominates \textbf{MT} for all alternative hypotheses. Therefore, our recommendation to practitioners is to consider a matched tuples design when working with few treatments and covariates, but to consider the replicated design when dealing with a large number of treatments and/or covariates.



\mycomment{
\begin{table}[ht!]
\centering
\setlength{\tabcolsep}{3pt}
\begin{adjustbox}{max width=\linewidth,center}
\begin{tabular}{ccccccccccccccc}
\toprule
& &    \multicolumn{6}{c}{Under $H_0$} & & \multicolumn{6}{c}{Under $H_1$} \\ \cmidrule{3-8} \cmidrule{10-15}
Method & $\mathrm{dim}(X_i)$ & $K=1$ & $K=2$ & $K=3$ & $K=4$ & $K=5$ & $K=6$ & & $K=1$ & $K=2$ & $K=3$ & $K=4$ & $K=5$ & $K=6$ \\ 
\midrule
\multirow{5}{*}{\textbf{MT}}& 1 & 0.046 & 0.043 & 0.048 & 0.048 & 0.059 & 0.066 & & 0.449 & 0.441 & 0.417 & 0.438 & 0.439 & 0.449  \\
& 2 & 0.035 & 0.039 & 0.045 & 0.046 & 0.046 & 0.040 & & 0.565 & 0.570 & 0.538 & 0.546 & 0.541 & 0.521 \\
& 4 & 0.041 & 0.042 & 0.041 & 0.029 & 0.016 & 0.022 & & 0.827 & 0.791 & 0.744 & 0.675 & 0.570 & 0.473 \\
& 8 & 0.026 & 0.018 & 0.014 & 0.012 & 0.017 & 0.020 & & 0.965 & 0.840 & 0.708 & 0.537 & 0.435 & 0.372 \\
& 10 & 0.021 & 0.019 & 0.020 & 0.019 & 0.020 & 0.032 & & 0.892 & 0.728 & 0.602 & 0.485 & 0.373 & 0.338 \\
\\
\multirow{5}{*}{\textbf{MT2}} & 1 & 0.045 & 0.049 & 0.049 & 0.055 & 0.060 & 0.051 & & 0.439 & 0.466 & 0.440 & 0.417 & 0.441 & 0.408 \\
& 2 & 0.047 & 0.046 & 0.049 & 0.042 & 0.052 & 0.046 & & 0.571 & 0.574 & 0.578 & 0.570 & 0.541 & 0.524  \\
& 4 & 0.060 & 0.049 & 0.046 & 0.045 & 0.050 & 0.043 & & 0.810 & 0.803 & 0.725 & 0.722 & 0.620 & 0.535  \\
& 8 & 0.050 & 0.060 & 0.052 & 0.054 & 0.048 & 0.048 & & 0.912 & 0.827 & 0.728 & 0.603 & 0.490 & 0.442 \\
& 10 & 0.053 & 0.062 & 0.045 & 0.053 & 0.058 & 0.057 & & 0.872 & 0.718 & 0.621 & 0.536 & 0.476 & 0.427 \\
\\
\multirow{5}{*}{\textbf{C}} & 1 & 0.044 & 0.056 & 0.064 & 0.046 & 0.051 & 0.063 & & 0.344 & 0.351 & 0.384 & 0.372 & 0.380 & 0.365 \\
& 2 & 0.062 & 0.057 & 0.053 & 0.037 & 0.054 & 0.073 & & 0.350 & 0.363 & 0.375 & 0.371 & 0.336 & 0.349  \\
& 4 & 0.061 & 0.053 & 0.049 & 0.058 & 0.047 & 0.047 & & 0.347 & 0.354 & 0.352 & 0.342 & 0.369 & 0.367  \\
& 8 & 0.051 & 0.049 & 0.043 & 0.048 & 0.061 & 0.052 & & 0.337 & 0.362 & 0.329 & 0.365 & 0.328 & 0.370  \\
& 10 & 0.055 & 0.050 & 0.048 & 0.050 & 0.051 & 0.067 & & 0.309 & 0.382 & 0.340 & 0.354 & 0.331 & 0.376 \\
\\
\multirow{5}{*}{\textbf{Large-4}} & 1 & 0.045 & 0.050 & 0.050 & 0.048 & 0.071 & 0.091 & & 0.433 & 0.415 & 0.457 & 0.432 & 0.471 & 0.517 \\
& 2 & 0.043 & 0.055 & 0.046 & 0.058 & 0.065 & 0.093 & & 0.421 & 0.400 & 0.434 & 0.446 & 0.447 & 0.490  \\
& 4 & 0.059 & 0.059 & 0.061 & 0.062 & 0.054 & 0.087 & & 0.439 & 0.446 & 0.412 & 0.441 & 0.479 & 0.502   \\
& 8 & 0.055 & 0.054 & 0.065 & 0.064 & 0.068 & 0.073 & & 0.434 & 0.433 & 0.458 & 0.434 & 0.460 & 0.516 \\
& 10 & 0.054 & 0.058 & 0.055 & 0.061 & 0.063 & 0.079 & & 0.435 & 0.484 & 0.443 & 0.455 & 0.478 & 0.489 \\
\toprule
\end{tabular}
\end{adjustbox}
\caption{Rejection probability when testing $H_0: \Delta_{\nu_1} = 0$ under more factors and covariates}
\label{table:more-factor-x}
\end{table}

\begin{table}[ht!]
\centering
\setlength{\tabcolsep}{3pt}
\begin{adjustbox}{max width=\linewidth,center}
\begin{tabular}{ccccccccccccccc}
\toprule
& &    \multicolumn{6}{c}{Under $H_0$} & & \multicolumn{6}{c}{Under $H_1$} \\ \cmidrule{3-8} \cmidrule{10-15}
Method & $\mathrm{dim}(X_i)$ & $K=1$ & $K=2$ & $K=3$ & $K=4$ & $K=5$ & $K=6$ & & $K=1$ & $K=2$ & $K=3$ & $K=4$ & $K=5$ & $K=6$ \\ 
\midrule
\multirow{5}{*}{\textbf{MT}}& 1 & 0.049 & 0.051 & 0.050 & 0.056 & 0.051 & 0.049 & & 0.434 & 0.438 & 0.431 & 0.438 & 0.428 & 0.439  \\
& 2 & 0.052 & 0.051 & 0.056 & 0.048 & 0.045 & 0.038 & & 0.553 & 0.557 & 0.559 & 0.533 & 0.531 & 0.506 \\
& 4 & 0.045 & 0.042 & 0.037 & 0.030 & 0.025 & 0.023 & & 0.831 & 0.791 & 0.750 & 0.669 & 0.586 & 0.480 \\
& 8 & 0.026 & 0.019 & 0.014 & 0.016 & 0.016 & 0.020 & & 0.960 & 0.841 & 0.686 & 0.540 & 0.430 & 0.376 \\
& 10 & 0.026 & 0.022 & 0.015 & 0.017 & 0.019 & 0.026 & & 0.901 & 0.747 & 0.593 & 0.469 & 0.396 & 0.353 \\
\\
\multirow{5}{*}{\textbf{MT2}} & 1 & 0.050 & 0.049 & 0.054 & 0.048 & 0.054 & 0.048 & & 0.430 & 0.435 & 0.442 & 0.428 & 0.442 & 0.444 \\
& 2 & 0.053 & 0.051 & 0.048 & 0.051 & 0.048 & 0.053 & & 0.551 & 0.551 & 0.556 & 0.546 & 0.551 & 0.523  \\
& 4 & 0.048 & 0.045 & 0.052 & 0.055 & 0.050 & 0.050 & & 0.830 & 0.805 & 0.757 & 0.695 & 0.634 & 0.547  \\
& 8 & 0.046 & 0.048 & 0.049 & 0.050 & 0.052 & 0.050 & & 0.917 & 0.825 & 0.711 & 0.613 & 0.506 & 0.436 \\
& 10 & 0.051 & 0.049 & 0.049 & 0.047 & 0.057 & 0.048 & & 0.849 & 0.751 & 0.627 & 0.544 & 0.464 & 0.411 \\
\\
\multirow{5}{*}{\textbf{C}} & 1 & 0.057 & 0.053 & 0.050 & 0.049 & 0.058 & 0.054 & & 0.354 & 0.334 & 0.343 & 0.347 & 0.355 & 0.359 \\
& 2 & 0.048 & 0.059 & 0.052 & 0.053 & 0.053 & 0.056 & & 0.343 & 0.335 & 0.340 & 0.360 & 0.356 & 0.363  \\
& 4 & 0.050 & 0.051 & 0.049 & 0.054 & 0.053 & 0.055 & & 0.352 & 0.346 & 0.345 & 0.350 & 0.341 & 0.360  \\
& 8 & 0.049 & 0.051 & 0.054 & 0.050 & 0.061 & 0.057 & & 0.351 & 0.352 & 0.336 & 0.349 & 0.355 & 0.368  \\
& 10 & 0.052 & 0.051 & 0.051 & 0.048 & 0.048 & 0.060 & & 0.345 & 0.339 & 0.347 & 0.348 & 0.346 & 0.356 \\
\\
\multirow{5}{*}{\textbf{Large-4}} & 1 & 0.051 & 0.054 & 0.055 & 0.057 & 0.055 & 0.082 & & 0.431 & 0.435 & 0.445 & 0.462 & 0.465 & 0.501 \\
& 2 & 0.047 & 0.051 & 0.054 & 0.054 & 0.063 & 0.079 & & 0.435 & 0.435 & 0.427 & 0.446 & 0.480 & 0.516  \\
& 4 & 0.049 & 0.052 & 0.054 & 0.058 & 0.070 & 0.083 & & 0.439 & 0.433 & 0.445 & 0.449 & 0.465 & 0.506   \\
& 8 & 0.048 & 0.051 & 0.049 & 0.057 & 0.064 & 0.073 & & 0.432 & 0.429 & 0.427 & 0.441 & 0.476 & 0.496 \\
& 10 & 0.050 & 0.048 & 0.055 & 0.058 & 0.062 & 0.084 & & 0.422 & 0.441 & 0.433 & 0.444 & 0.479 & 0.512 \\
\toprule
\end{tabular}
\end{adjustbox}
\caption{Rejection probabilities when testing $H_0: \Delta_{\nu_1} = 0$ under the null and alternative hypothesis}
\label{table:more-factor-x}
\end{table}
}

\begin{table}[ht!]
\centering
\setlength{\tabcolsep}{3pt}
\begin{adjustbox}{max width=\linewidth,center}
\begin{tabular}{ccccccccccccccc}
\toprule
& &    \multicolumn{6}{c}{Under $H_0$} & & \multicolumn{6}{c}{Under $H_1$} \\ \cmidrule{3-8} \cmidrule{10-15}
Method & $\mathrm{dim}(X_i)$ & $K=1$ & $K=2$ & $K=3$ & $K=4$ & $K=5$ & $K=6$ & & $K=1$ & $K=2$ & $K=3$ & $K=4$ & $K=5$ & $K=6$ \\ 
\midrule
\multirow{5}{*}{\textbf{MT}}& 1 & 0.049 & 0.045 & 0.033 & 0.023 & 0.009 & 0.008 & & 0.998 & 1.000 & 1.000 & 0.997 & 0.980 & 0.837  \\
& 2 & 0.047 & 0.043 & 0.041 & 0.018 & 0.008 & 0.002 & & 0.999 & 0.998 & 0.997 & 0.997 & 0.935 & 0.732 \\
& 4 & 0.040 & 0.029 & 0.031 & 0.011 & 0.009 & 0.008 & & 1.000 & 1.000 & 0.979 & 0.946 & 0.794 & 0.583 \\
& 6 & 0.037 & 0.018 & 0.010 & 0.022 & 0.010 & 0.007 & & 0.999 & 0.989 & 0.936 & 0.870 & 0.668 & 0.479 \\
& 9 & 0.041 & 0.026 & 0.016 & 0.019 & 0.014 & 0.003 & & 0.988 & 0.961 & 0.895 & 0.810 & 0.674 & 0.319 \\
\\
\multirow{5}{*}{\textbf{MT2}} & 1 & 0.054 & 0.054 & 0.044 & 0.059 & 0.047 & 0.052 & & 1.000 & 0.999 & 1.000 & 0.996 & 0.973 & 0.858 \\
& 2 & 0.048 & 0.053 & 0.041 & 0.058 & 0.039 & 0.055 & & 1.000 & 0.999 & 1.000 & 0.985 & 0.943 & 0.784 \\
& 4 & 0.075 & 0.048 & 0.054 & 0.056 & 0.060 & 0.046 & & 0.996 & 0.993 & 0.981 & 0.951 & 0.843 & 0.673 \\
& 6 & 0.053 & 0.067 & 0.046 & 0.054 & 0.045 & 0.046 & & 0.988 & 0.967 & 0.926 & 0.857 & 0.744 & 0.579 \\
& 9 & 0.065 & 0.050 & 0.053 & 0.059 & 0.060 & 0.047 & & 0.983 & 0.944 & 0.872 & 0.840 & 0.704 & 0.494 \\
\\
\multirow{5}{*}{\textbf{C}} & 1 & 0.062 & 0.054 & 0.041 & 0.056 & 0.059 & 0.069 & & 0.437 & 0.449 & 0.410 & 0.445 & 0.463 & 0.459 \\
& 2 & 0.063 & 0.049 & 0.038 & 0.051 & 0.065 & 0.068 & & 0.434 & 0.450 & 0.410 & 0.442 & 0.459 & 0.459 \\
& 4 & 0.064 & 0.050 & 0.038 & 0.048 & 0.055 & 0.057 & & 0.425 & 0.448 & 0.400 & 0.443 & 0.457 & 0.468 \\
& 6 & 0.066 & 0.052 & 0.045 & 0.048 & 0.054 & 0.055 & & 0.430 & 0.437 & 0.409 & 0.436 & 0.437 & 0.463 \\
& 9 & 0.063 & 0.042 & 0.050 & 0.033 & 0.054 & 0.048 & & 0.417 & 0.439 & 0.420 & 0.433 & 0.433 & 0.448 \\
\\
\multirow{5}{*}{\textbf{Large-4}} & 1 & 0.050 & 0.044 & 0.059 & 0.061 & 0.053 & 0.057 & & 0.685 & 0.699 & 0.701 & 0.683 & 0.730 & 0.770 \\
& 2 & 0.046 & 0.050 & 0.043 & 0.052 & 0.044 & 0.065 & & 0.560 & 0.564 & 0.575 & 0.585 & 0.582 & 0.634 \\
& 4 & 0.053 & 0.064 & 0.039 & 0.059 & 0.056 & 0.062 & & 0.497 & 0.490 & 0.486 & 0.527 & 0.521 & 0.577 \\
& 6 & 0.055 & 0.053 & 0.049 & 0.057 & 0.059 & 0.071 & & 0.462 & 0.444 & 0.495 & 0.519 & 0.520 & 0.553 \\
& 9 & 0.044 & 0.041 & 0.056 & 0.051 & 0.049 & 0.076 & & 0.457 & 0.451 & 0.493 & 0.490 & 0.511 & 0.571 \\
\toprule
\end{tabular}
\end{adjustbox}
\caption{Rejection probabilities when testing $H_0: \Delta_{\nu_1} = 0$ under the null and alternative hypothesis}
\label{table:more-factor-x}
\end{table}

\begin{figure}[ht!]
\centering
\includegraphics[width=\textwidth]{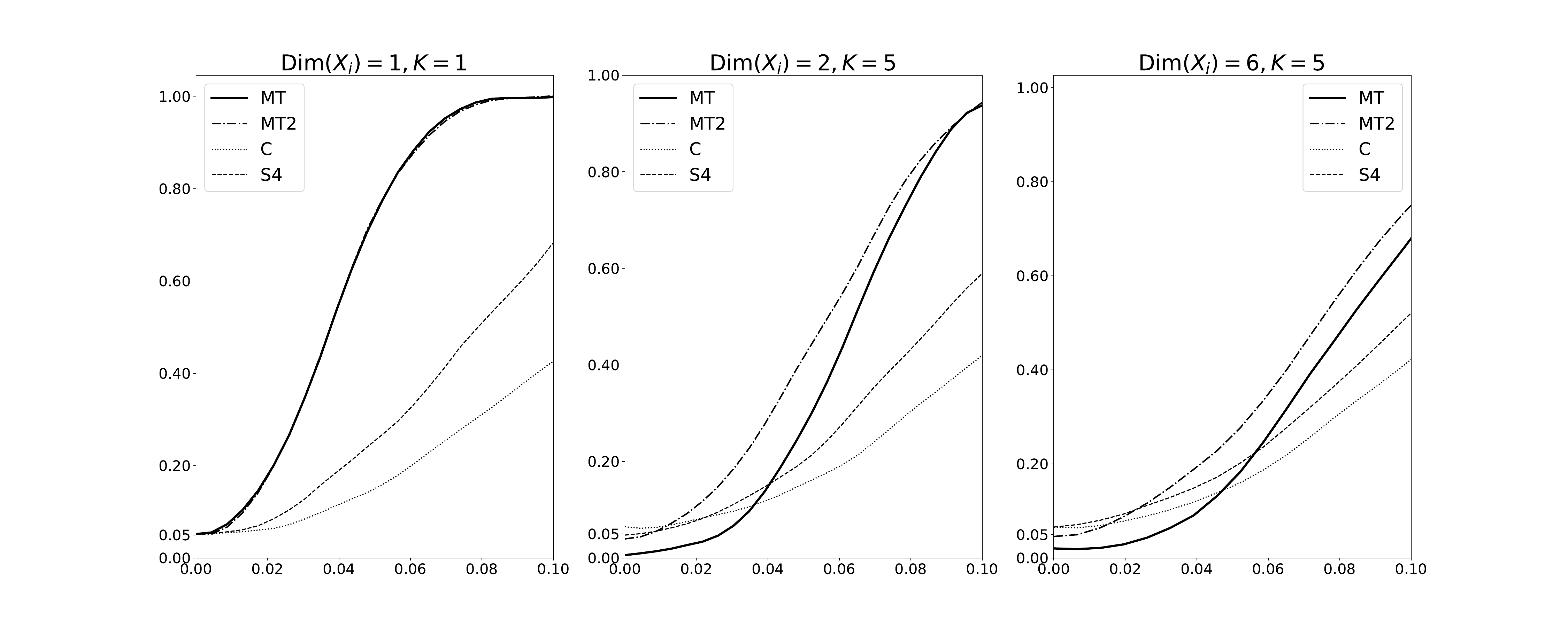}
\caption{Rejection probability under various choices of $\tau$}
\label{fig:power_plots}
\end{figure}

\section{Empirical Application}\label{sec:application}
In this section, we illustrate the inference procedures introduced in Section \ref{sec:main} using the data collected  in \cite{McKenzie2014}\footnote{The original paper features six rounds of surveys which were pooled in the final analysis. We perform our analysis exclusively on the data obtained in the sixth round in order to avoid complications related to time-series dependence across rounds. For simplicity, we additionally drop quadruplets with missing values, and 4 ``leftover'' groups whose sizes range from 5 to 8 firms. This results in a final sample of 120 quadruplets, or $4n = 480$. Further results on the long-run effects (collected in a seventh survey wave) are contained in Table \ref{table:application-wave7} in Section \ref{sec:additional-simulations} of the appendix.}. \cite{McKenzie2014} conduct a randomized experiment in order to investigate the effects of several capital aid programs on the profits of small businesses in Ghana. In their experiment, there are three treatment arms, where (in our notation) $D_i = 1$ indicates that the $i$th firm is untreated, $D_i = 2$ indicates being offered cash, and $D_i = 3$ indicates being offered in-kind grants. The null hypotheses of interest are
\begin{equation} \label{eq:h0-app}
    H_0^d: E[Y_i(1)] = E[Y_i(d)] \text{ versus } H_1: E[Y_i(1)] \neq E[Y_i(d)]
\end{equation}
for $d \in \{2, 3\}$, as well as
\begin{equation} \label{eq:h0-app-23}
H_0^{2, 3}: E[Y_i(2)] = E[Y_i(3)] \text{ versus } H_1: E[Y_i(2)] \neq E[Y_i(3)]~.
\end{equation}

In their experimental design, blocks are defined by quadruplets, where each quadruplet contains \emph{two} untreated units with $D_i = 1$, one treated unit with $D_i = 2$, and one treated unit with $D_i = 3$. Despite the slight departure from the framework presented in Sections \ref{sec:setup}--\ref{sec:main}, in that there are two untreated units in each quadruplet, we show in Appendix \ref{sec:app_details} that a slight modification of the variance estimator in Theorem \ref{thm:V_const} produces a valid test for \eqref{eq:h0-app}--\eqref{eq:h0-app-23}. Specifically, we pretend that there are four treatment levels in each quadruplet, while the first two are in fact controls. Then, by setting generating vectors $\nu^{2}=(-1/2,-1/2,1,0)$, $\nu^{3} = (-1/2, -1/2, 0, 1)$, and $\nu^{2,3}=(0,0,-1,1)$ and proceeding with the testing procedure in Theorem \ref{thm:V_const}, we obtain valid tests for $H_0^d$ and $H_0^{2,3}$. For each of the hypotheses in \eqref{eq:h0-app}--\eqref{eq:h0-app-23}, we implement the following tests:
\begin{itemize}
\item[---] A $t$-test based on the OLS estimator in a linear regression of $Y$ on $1$, $I \{D_i = 2\}$, and $I \{D_i = 3\}$, together with the usual heteroskedasticity-robust variance estimator.
\item[---] The test introduced in Proposition \ref{prop:application}, which implements the test from Theorem \ref{thm:V_const} as described above to accommodate for the fact that there are two untreated units in each block.
\end{itemize}
We note that \cite{McKenzie2014} test \eqref{eq:h0-app} and \eqref{eq:h0-app-23} using a $t$-test obtained from a linear regression of outcomes on treatment indicators and block fixed effects. However, as was shown in Theorem \ref{thm:sfe}, such a procedure is not guaranteed to be valid. On the other hand, we expect that the $t$-test obtained from a linear regression without block fixed effects should be conservative for testing \eqref{eq:h0-app}--\eqref{eq:h0-app-23}  in light of the observations made in Example \ref{ex:matched-triples} and the fact that this test coincides with a standard two-sample $t$-test.

Our results are presented in Table \ref{table:application-wave6}. The point estimates of the two methods are identical because the OLS estimator coincides with the difference-in-means estimator. However, the standard errors obtained from our variance estimator are always smaller than the heteroskedasticy-robust standard errors. For example, when testing \eqref{eq:h0-app} for $d = 3$ among the female subsample, the standard error produced from our variance estimator is 15.21 whereas the heteroskedasticy robust standard error is 18.13. We note that overall the improvements are modest; this suggests that the conditional expectation of the outcomes does not vary substantially with the observable characteristics in this survey wave. This is further corroborated by the calibrated simulations presented in Table \ref{table:finite_population_empirical} in Appendix \ref{sec:additional-simulations}.

\begin{table}[ht!]
\centering
\setlength{\tabcolsep}{4pt}
\caption{Point estimates and standard errors for testing the treatment effects of cash and in-kind grants using different methods (wave 6)}
\begin{adjustbox}{max width=\linewidth,center}
\begin{tabular}{cccccccccccc}
\toprule
& & &   \multicolumn{5}{c}{} & & High Initial & & Low Initial \\ \cmidrule{4-8}  \cmidrule{10-10} \cmidrule{12-12} 
&  &  & All Firms & & Males &  & Females & & Profit Women & & Profit Women \\ \cmidrule{4-4} \cmidrule{6-6} \cmidrule{8-8} \cmidrule{10-10} \cmidrule{12-12}
&  &  & (1) & & (2) &  & (3) & & (4) & & (5) \\
\midrule
& &Cash treatment & 19.64  & & 24.84  & & 16.30  & & 33.09  & & 7.01 \\
OLS& &  & (15.42)  & & (27.29)  & & (18.13)  & & (42.56)  & & (11.58) \\
(standard $t$-test) & & In-kind treatment & 20.26  & & 4.48  & & 30.42  & & 65.36  & & 11.10 \\
& &  & (15.67)  & & (18.42)  & & (22.83)  & & (53.28)  & & (15.31) \\
&& Cash$=$in-kind ($p$-val) & 0.975  & & 0.493  & & 0.600  & & 0.610  & & 0.817 \\
\\
\multirow{5}{*}{} & & Cash treatment & 19.64  & & 24.84  & & 16.30  & & 33.09  & & 7.01 \\
Difference-in-means& &  & (14.24)  & & (26.05)  & & (15.21)  & & (39.27)  & & (11.15) \\
(adjusted $t$-test) & & In-kind treatment & 20.26  & & 4.48  & & 30.42  & & 65.36  & & 11.10 \\
& &  & (15.24)  & & (17.79)  & & (21.97)  & & (48.27)  & & (14.99) \\
&& Cash$=$in-kind ($p$-val) & 0.974  & & 0.468  & & 0.567  & & 0.576  & & 0.815 \\
\toprule
\end{tabular}
\end{adjustbox}
\label{table:application-wave6}
\begin{tablenotes}
\item Note: The results in this table are based on the data from the sixth wave of data collection. For each treatment and each subsample, the number in the first row is the point estimate and that in the second row is the standard error. For testing the equality of the average potential outcomes under the two values of treatment, we report the $p$-values as in \cite{McKenzie2014}.
\end{tablenotes}
\end{table}

\section{Recommendations for Empirical Practice} \label{sec:rec}
We conclude with some recommendations for empirical practice based on our theoretical results as well as the simulation study above. For inference about the linear contrast of expected outcomes given by $\Delta_{\nu}$ in a matched tuples design, we recommend the test $\phi_n^{\nu}$ defined in Section \ref{sec:main_tuple}: our simulations results show that this test does a good job of controlling size in large samples (approximately 80 blocks). We have shown that tests based on the heteroskedasticity-robust variance estimator from a linear regression of outcomes on treatment and block fixed effects may be \emph{invalid}, in the sense of having rejection probability strictly greater than the nominal level under the null hypothesis. Tests based on the heteroskedasticity-robust variance or block-cluster variance estimators from a linear regression of outcomes on treatment are valid but potentially conservative, which would result in a loss of power relative to our proposed test. 

We also find that matched tuples designs have favorable efficiency properties relative to other popular designs (with a specific illustration in the setting of $2^K$ factorial designs). However, this comes with the caveat that when dealing with a large number of treatments (in our simulations, this translated to having fewer than 80 blocks) and/or large number of covariates, practitioners may want to consider the replicated matched tuples design introduced in Section \ref{sec:replicate}, as our simulations suggest that this design may have more robust size control, which translates to better power in such cases.

\clearpage
\appendix
\small

\section{Additional Details}

\subsection{Details for Section \ref{sec:application}}\label{sec:app_details}
\begin{proposition}\label{prop:application}
Consider the setting with three treatment statuses $\{1, 2, 3\}$, where $1$ corresponds to being untreated and $2$ and $3$ correspond to two treatments. In a matched quadruplets design where each quadruplet has two untreated units and one unit for each treatment, the test introduced in Section \ref{sec:main_tuple} with $\mathcal{D}^\prime = \{1,2,3,4\}$ and 
\begin{equation*}
    \nu = 
\begin{pmatrix}
-1/2 & -1/2 & 1 & 0  \\
-1/2 & -1/2 & 0 & 1  \\
0 & 0 & -1 & 1
\end{pmatrix}
\end{equation*}
is valid for testing \eqref{eq:h0-app}--\eqref{eq:h0-app-23} at level $\alpha \in (0, 1)$.
\end{proposition}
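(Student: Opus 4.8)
The plan is to embed the matched quadruplets design into the matched tuples framework of Section~\ref{sec:main_tuple} with $|\mathcal{D}'| = 4$ and then apply Theorem~\ref{thm:V_const} directly. The device is to split the two untreated units of each block into two fictitious arms. Writing $Y_i(1), Y_i(2), Y_i(3)$ for the untreated, cash, and in-kind potential outcomes of \eqref{eq:h0-app}, I would define four-arm potential outcomes $\tilde Y_i(1) = \tilde Y_i(2) := Y_i(1)$, $\tilde Y_i(3) := Y_i(2)$, and $\tilde Y_i(4) := Y_i(3)$, and within each quadruplet augment the real assignment with an independent uniform labeling of its two untreated units as arms $1$ and $2$. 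Because arms $1$ and $2$ share the same potential outcome, this relabeling leaves the observed data unchanged and is an innocuous internal randomization. First I would check that the augmented mechanism satisfies Assumption~\ref{as:D} for $\mathcal{D}' = \{1,2,3,4\}$: the real design draws uniformly over the $4!/2! = 12$ allocations of (untreated, untreated, cash, in-kind) per block, and composing this with the uniform labeling yields a distribution that is uniform over all $24$ permutations of $(1,2,3,4)$ within each block, independently across blocks, with $\{\tilde Y^{(n)}(d)\} \independent D^{(n)} \mid X^{(n)}$ preserved since the extra draw is independent of everything. Assumption~\ref{as:Q}(b)--(c) are inherited from the three original arms, and Assumption~\ref{as:Q}(a) holds marginally for each arm, so the degeneracy $\tilde Y_i(1) = \tilde Y_i(2)$ is harmless; Assumptions~\ref{as:close}--\ref{as:close-4} concern only covariate homogeneity of blocks and carry over unchanged.

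Next I would identify the estimand and hypotheses. With $\Gamma(Q) = (E[Y_i(1)], E[Y_i(1)], E[Y_i(2)], E[Y_i(3)])'$, the stated $\nu$ gives
\[
\Delta_\nu(Q) = \big(E[Y_i(2)] - E[Y_i(1)],\ E[Y_i(3)] - E[Y_i(1)],\ E[Y_i(3)] - E[Y_i(2)]\big)'~,
\]
whose three entries are exactly the cash-versus-control, in-kind-versus-control, and cash-versus-in-kind contrasts. Hence \eqref{eq:h0-app}--\eqref{eq:h0-app-23} are instances of \eqref{eq:nu_test} with $\Delta_0 = 0$ and $\Psi = (1,0,0)$, $(0,1,0)$, and $(0,0,1)$, respectively. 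Each such $\Psi\mathbb V_\nu\Psi'$ is a positive scalar because, by Theorem~\ref{thm:main_delta} and Assumption~\ref{as:Q}(a), the asymptotic variance of each contrast is bounded below by a strictly positive term such as $E[\var[Y_i(1)\mid X_i]]$. Theorem~\ref{thm:V_const} then applies verbatim to the four-arm embedding and yields $\hat{\mathbb V}_n \stackrel{P}{\to} \mathbb V$ and $\lim_n E[\phi_n^\nu(Z^{(n)})] = \alpha$ under each null, which is the claimed validity.

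The main obstacle is making the embedding airtight. One must verify precisely that the real assignment composed with the auxiliary labeling reproduces the uniform-over-permutations condition of Assumption~\ref{as:D}, and confirm that the degeneracy $\tilde Y_i(1) = \tilde Y_i(2)$ does not disturb the consistency of the variance estimator guaranteed by Theorem~\ref{thm:V_const}. On the latter point I would note that $\hat\rho_n(1,2)$ is the within-block product of the two untreated outcomes, so it is symmetric in their labels and consistent for $E[E[Y_i(1)\mid X_i]^2]$, while the labeling enters only the pairs-of-pairs terms $\hat\rho_n(1,1)$ and $\hat\rho_n(2,2)$; since the labeling is uniform and independent, Theorem~\ref{thm:V_const} delivers the correct joint probability limit regardless of how the labels are drawn.
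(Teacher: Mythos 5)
Your proof is correct, but it takes a genuinely different route from the paper's. You formalize the ``pretend four arms'' device as an exact reduction: you compose the actual assignment (uniform over the $4!/2!=12$ allocations per quadruplet) with an independent uniform labeling of the two controls, note that $12\times 2=24$ gives the uniform distribution over all permutations of $(1,2,3,4)$ required by Assumption \ref{as:D}, check that Assumptions \ref{as:Q} and \ref{as:close}--\ref{as:close-4} are inherited by the degenerate potential outcomes $\tilde Y_i(1)=\tilde Y_i(2)=Y_i(1)$, and then invoke Theorems \ref{thm:main_delta} and \ref{thm:V_const} verbatim. The paper instead argues in two steps: it first derives the limiting distribution of the difference-in-means estimator directly under the actual quadruplets design, via the decomposition into the conditional fluctuation terms $A_{n,d}$ and conditional-mean terms $C_{n,d}$ (with the factor $1/2$ on the control arm), obtaining a variance $\mathbb{V}_2$; it then computes the variance $\tilde{\mathbb{V}}_2=\nu\mathbb{V}\nu'$ that Theorem \ref{thm:main_delta} assigns to the fictitious four-arm design and verifies algebraically that $\tilde{\mathbb{V}}_2=\mathbb{V}_2$ once arms $1$ and $2$ are identified with the control, handling the contrast $(0,0,-1,1)$ by a separate remark. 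Your reduction buys economy: no separate asymptotic derivation or variance algebra is needed, all three rows of $\nu$ are covered by one argument, and the consistency of the label-dependent pieces $\hat\rho_n(1,1)$, $\hat\rho_n(2,2)$ comes for free from Theorem \ref{thm:V_const}. The one point to keep explicit is that your argument requires the auxiliary labeling to be drawn uniformly and independently -- a deterministic labeling of the two controls would not literally satisfy Assumption \ref{as:D}, even though it would not change any of the probability limits -- whereas the paper's direct computation makes the equality of the true and estimated variances explicit without leaning on that device. Both arguments establish the proposition.
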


\begin{proof}[\sc Proof of Proposition \ref{prop:application}]
Consider a design of matched-quadruplets with two treatments $d=2,3$ and two controls $d=1$, i.e a quadruplet consisting of $(1,1,2,3)$. The difference-in-mean estimator for the effect of the first treatment $d=2$ is
\begin{equation*}
    \hat{\Delta}_2 = \frac{1}{n}\sum_{i=1}^{4n} I\{D_i=2\} Y_i - \frac{1}{2 n}\sum_{i=1}^{4n} I\{D_i=1\} Y_i
\end{equation*}
Note that
\[ \sqrt n \left(\hat{\Delta}_2 - \Delta_2(Q)\right) = A_{n,2} + C_{n,2} - (A_{n,1} + C_{n,1})~, \]
where
\begin{align*}
A_{n, 2} & = \frac{1}{\sqrt n} \sum_{1 \leq i \leq 4 n} I \{D_i = 2\} (Y_i(2) - E[Y_i(2) | X^{(n)}, D^{(n)}]) \\
C_{n, 2} & = \frac{1}{\sqrt n} \sum_{1 \leq i \leq 4 n} I \{D_i = 2\} (E[Y_i(2) | X^{(n)}, D^{n}] - E[Y_i(2)])~.
\end{align*}
and
\begin{align*}
A_{n, 1} & = \frac{1}{2\sqrt n} \sum_{1 \leq i \leq 4 n} I \{D_i = 1\} (Y_i(1) - E[Y_i(1) | X^{(n)}, D^{(n)}]) \\
C_{n, 1} & = \frac{1}{2\sqrt n} \sum_{1 \leq i \leq 4 n} I \{D_i = 1\} (E[Y_i(1) | X^{(n)}, D^{n}] - E[Y_i(1)])~.
\end{align*}
Let $I_{j}$ denote the set of indices for the two untreated units in the $j$-th tuple. Note
\begin{align*}
& \var[A_{n, 1} | X^{(n)}, D^{(n)}] \\
& = \frac{1}{2\cdot 2n} \sum_{1 \leq i \leq 4 n} I \{D_i = 1\} \var[Y_i(1) | X_i] \\
& = \frac{1}{2\cdot 4n} \sum_{1 \leq i \leq 4 n} \var[Y_i(1) | X_i] - \frac{1}{8 n} \sum_{1 \leq j \leq n} \frac{1}{2} \sum_{i_j \in I_j} \sum_{k \in \lambda_j: k \not\in I_j} (\var[Y_k(1) | X_k] - \var[Y_{i_j} | X_{i_j}])
\end{align*}
It follows from similar arguments as in the proof of Theorem \ref{thm:main_delta} that the second term goes to zero. Therefore,
\begin{equation*}
    \var[A_{n, 1} | X^{(n)}, D^{(n)}] \stackrel{P}{\to} \frac{1}{2} E[\var[Y_i(1) | X_i]] ~.
\end{equation*}
It therefore follows from Lemma S.1.2 of \cite{bai2021inference} that
\begin{equation*}
    \gamma\left(\left(\left(A_{n, 2}, A_{n,1}\right)' | X^{(n)}, D^{(n)}\right), N\left(0,  \begin{bmatrix} E[\var[Y_i(2) | X_i]] & 0\\ 0 & \frac{1}{2} E[\var[Y_i(1) | X_i]] \end{bmatrix}\right)
    \right) \stackrel{P}{\to} 0~,
\end{equation*}
where $\gamma$ is any metric that metrizes weak convergence. 

Next, note
\begin{align*}
    E[C_{n, 2} | X^{(n)}] &= \frac{1}{\sqrt n} \sum_{1 \leq i \leq 4 n} \frac{1}{4} (E[Y_i(2) | X^{(n)}] - E[Y_i(2)]) = \frac{1}{4 \sqrt n} \sum_{1 \leq i \leq 4 n} (\Gamma_2(X_i) - \Gamma_2) \\
    E[C_{n, 1} | X^{(n)}] &= \frac{1}{2\sqrt n} \sum_{1 \leq i \leq 4 n} \frac{1}{2} (E[Y_i(1) | X^{(n)}] - E[Y_i(1)]) = \frac{1}{4 \sqrt n} \sum_{1 \leq i \leq 4 n} (\Gamma_1(X_i) - \Gamma_1) ~.
\end{align*}
Therefore,
\[ (C_{n, 2}, C_{n,1})^\prime \stackrel{d}{\to} N\left(0, \frac{1}{4} \begin{bmatrix} \var(\Gamma_2(X_i)) & \cov(\Gamma_2(X_i), \Gamma_{1}(X_i))\\ \cov(\Gamma_2(X_i), \Gamma_{1}(X_i)) & \var(\Gamma_1(X_i))] \end{bmatrix} \right)~. \]
It then follows from Lemma S.1.2 of \cite{bai2021inference} that
\begin{equation*}
    \sqrt n \left(\hat{\Delta}_2 - \Delta_2(Q)\right) \stackrel{d}{\to} N \left(0, \mathbb{V}_2\right)~,
\end{equation*}
where
\[ \mathbb{V}_2 = E[\var[Y_i(2) | X_i]] + \frac{1}{2} E[\var[Y_i(1) | X_i]] + \frac{1}{4} \left(\var(\Gamma_2(X_i)) + \var(\Gamma_1(X_i)) - 2 \cov(\Gamma_2(X_i), \Gamma_{1}(X_i)) \right)~. \]

Now, suppose we pretend the two untreated units are assigned to two distinct treatment levels and denote the two untreated levels and two treated levels by $d \in \{1, 2, 3, 4\}$, where $d=1,2$ actually corresponds to the untreated units. Our estimand can then be defined as
\begin{equation*}
    \Tilde{\Delta}_2(Q) = \Gamma_3(Q) - \frac{1}{2}\left(\Gamma_1(Q) +\Gamma_2(Q)\right)
\end{equation*}
Applying the existing results in Theorem \ref{thm:main_delta} with $\nu = (-1/2, -1/2,1,0)$. It follows that
\begin{equation*}
    \sqrt n \left(\hat{\Delta}_2 - \Tilde{\Delta}_2(Q)\right) \stackrel{d}{\to} N \left(0, \Tilde{\mathbb{V}}_2\right)~,
\end{equation*}
where
\begin{align*}
    \Tilde{\mathbb{V}}_2 &= E[\var[Y_i(3) | X_i]] + \frac{1}{4}\left( E[\var[Y_i(1) | X_i]] + E[\var[Y_i(2) | X_i]]\right) \\
    &+ \frac{1}{4} \bigg(\var(\Gamma_3(X_i)) + \frac{1}{4}\var(\Gamma_1(X_i)) + \frac{1}{4}\var(\Gamma_2(X_i)) + \frac{1}{2}\cov(\Gamma_2(X_i), \Gamma_{1}(X_i)) \\
    & -\cov(\Gamma_3(X_i), \Gamma_{1}(X_i)) -\cov(\Gamma_3(X_i), \Gamma_{2}(X_i))\bigg) \\
    &= \mathbb{V}_2~,
\end{align*}
where the last equality follows by setting $d=1,2,3$ to $d=1,1,2$. The same argument holds for $v=(-1/2,-1/2,0,1)$. As for $\nu=(0,0,-1,1)$, the estimation and inference of the third and fourth arms is not affected by treatment status in the first two arms. Therefore, pretending two controls are two different treatment levels yields the true asymptotic variance, meaning that the inference is still valid.
\end{proof}

\section{Proofs of Main Results}

\subsection{Proof of Theorem \ref{thm:main_delta}}
We derive the limiting distribution of $\sqrt n(\hat \Gamma_n(d) - \Gamma_d(Q): d \in \mathcal D)$, from which the conclusion of the theorem follows by an application of the continuous mapping theorem. Note that
\[ \sqrt n(\hat \Gamma_n(d) - \Gamma_d(Q): d \in \mathcal D)' = A_n + C_n~, \]
where $A_n = (A_{n, d}: d \in \mathcal D)'$, $C_n = (C_{n, d}: d \in \mathcal D)'$, and
\begin{align*}
A_{n, d} & = \frac{1}{\sqrt n} \sum_{1 \leq i \leq |\mathcal D|n} I \{D_i = d\} (Y_i(d) - E[Y_i(d) | X^{(n)}, D^{(n)}]) \\
C_{n, d} & = \frac{1}{\sqrt n} \sum_{1 \leq i \leq |\mathcal D|n} I \{D_i = d\} (E[Y_i(d) | X^{(n)}, D^{n}] - E[Y_i(d)])~.
\end{align*}
Note that conditional on $X^{(n)}, D^{(n)}$, $C_{n, d}$'s are constants, and $A_{n, d}$'s are independent. By Assumption \ref{as:D}, for $d \in \mathcal D$, $E[Y_i(d) | X^{(n)}, D^{(n)}] = E[Y_i(d) | X_i]$. Fix $d \in \mathcal D$. Let $i_j \in \lambda_j$ be such that $D_{i_j} = d$. Note
\begin{align*}
& \var[A_{n, d} | X^{(n)}, D^{(n)}] \\
& = \frac{1}{n} \sum_{1 \leq i \leq |\mathcal D| n} I \{D_i = d\} \var[Y_i(d) | X_i] \\
& = \frac{1}{|\mathcal D| n} \sum_{1 \leq i \leq |\mathcal D| n} \var[Y_i(d) | X_i] - \frac{1}{|\mathcal D| n} \sum_{1 \leq j \leq n} \sum_{k \in \lambda_j: k \neq i_j} (\var[Y_k(d) | X_k] - \var[Y_{i_j}(d) | X_{i_j}])
\end{align*}
where the first equality follows from Assumption \ref{as:D}. By Assumption \ref{as:Q}(b) and the weak law of large numbers,
\[ \frac{1}{|\mathcal D| n} \sum_{1 \leq i \leq |\mathcal D| n} \var[Y_i(d) | X_i] \stackrel{P}{\to} E[\var[Y_i(d) | X_i]]~. \]
By Assumptions \ref{as:Q}(c) and \ref{as:close}, we have
\begin{align*}
& \Big | \frac{1}{|\mathcal D| n} \sum_{1 \leq j \leq n} \sum_{k \in \lambda_j: k \neq i_j} (\var[Y_k(d) | X_k] - \var[Y_{i_j}(d) | X_{i_j}]) \Big | \\
& \leq \frac{1}{|\mathcal D| n} \sum_{1 \leq j \leq n} \sum_{k \in \lambda_j: k \neq i_j} |\var[Y_k(d) | X_k] - \var[Y_{i_j}(d) | X_{i_j}]| \\
& \lesssim \frac{1}{n} \sum_{1 \leq j \leq n} \sum_{k \in \lambda_j: k \neq i_j} \|X_k - X_{i_j}\| \\
& \leq \frac{|\mathcal D| - 1}{n} \sum_{1 \leq j \leq n} \max_{i, k \in \lambda_j} \|X_i - X_k\| \stackrel{P}{\to} 0~.
\end{align*}
Therefore, $\var[A_{n, d} | X^{(n)}, D^{(n)}] \stackrel{P}{\to} E[\var[Y_i(d) | X_i]]$. We can then verify Lindeberg's condition as in the proof of Lemma S.1.4 of \cite{bai2021inference}. It follows that
\[ \gamma(((A_{n, d}: d \in \mathcal D)' | X^{(n)}, D^{(n)})), N(0, \mathbb V_1)) \stackrel{P}{\to} 0~, \]
where $\mathbb V_1 = \diag(E[\var[Y_i(d) | X_i]]: d \in \mathcal D)$ and $\gamma$ is any metric that metrizes weak convergence.

Next,
\[ E[C_{n, d} | X^{(n)}] = \frac{1}{|\mathcal D| \sqrt n} \sum_{1 \leq i \leq |\mathcal D| n} (\Gamma_d(X_i) - \Gamma_d) \]
and
\begin{align*}
\var[C_{n, d} | X^{(n)}] & = \frac{1}{n} \sum_{1 \leq j \leq n} \sum_{i \in \lambda_j} \frac{1}{|\mathcal D|} \Big ( \Gamma_d(X_i) - \frac{1}{|\mathcal D|} \sum_{k \in \lambda_j} \Gamma_d(X_k) \Big )^2 \\
& \lesssim \frac{1}{n} \sum_{1 \leq j \leq n} \max_{i, k \in \lambda_j} \|X_i - X_k\|^2 \stackrel{P}{\to} 0
\end{align*}
by Assumption \ref{as:Q}(c) and \ref{as:close}. Therefore, by repeating the argument which establishes (S.24) in the proof of Lemma S.1.4 of \cite{bai2021inference}, it follows that
\[ C_{n, d} = \frac{1}{|\mathcal D| \sqrt n} \sum_{1 \leq i \leq |\mathcal D|n} (\Gamma_d(X_i) - \Gamma_d) + o_P(1)~. \]
Therefore,
\[ (C_{n, d}: d \in \mathcal D) \stackrel{d}{\to} N(0, \mathbb V_2)~, \]
where $(\mathbb V_2)_{d, d'} = \frac{1}{|\mathcal D|} \cov(\Gamma_d(X_i), \Gamma_{d'}(X_i))$. It then follows from Lemma S.1.2 of \cite{bai2021inference} that
\[ \sqrt n(\hat \Gamma_n(d) - \Gamma_d: d \in \mathcal D)' \stackrel{d}{\to} N(0, \mathbb V_1 + \mathbb V_2)~. \]
The conclusion now follows. \qed

\subsection{Proof of Theorem \ref{thm:V_const}}
The conclusion follows from Lemmas \ref{lem:mean}--\ref{lem:rho-dd} together with the continuous mapping theorem. \qed

\subsection{Proof of Theorem \ref{thm:sfe}}
Define
\[ C_i = (I \{D_i = 2\}, \ldots, I \{D_i = |\mathcal D|\})'~. \]
To begin, note it follows from the Frisch-Waugh-Lovell theorem and Assumption \ref{as:D} that
\[ \begin{pmatrix}
\hat \beta_n(2) \\
\vdots \\
\hat \beta_n(|\mathcal D|)
\end{pmatrix} = \left ( \sum_{1 \leq i \leq |\mathcal D| n} \tilde C_i \tilde C_i' \right )^{-1} \sum_{1 \leq i \leq |\mathcal D| n} \tilde C_i Y_i~, \]
where
\[ \tilde C_i = \left ( I \{D_i = 2\} - \frac{1}{|\mathcal D|}, \ldots, I \{D_i = |\mathcal D|\} - \frac{1}{|\mathcal D|} \right )'~. \]
Next, note for
\[ \sum_{1 \leq i \leq |\mathcal D| n} \tilde C_i \tilde C_i'~, \]
the diagonal entries are $\frac{|\mathcal D| - 1}{|\mathcal D|} n$ and the off-diagonal entries are $- \frac{1}{|\mathcal D|} n$. It follows from element calculation that the diagonal entries of $\left ( \sum_{1 \leq i \leq |\mathcal D| n} \tilde C_i \tilde C_i' \right )^{-1}$ are $\frac{2}{n}$ and the off-diagonal entries are $\frac{1}{n}$. Furthermore,
\[ \sum_{1 \leq i \leq |\mathcal D| n} \tilde C_i Y_i = \begin{pmatrix}
n \hat \Gamma_n(2) - \frac{1}{|\mathcal D|} \sum_{1 \leq i \leq |\mathcal D| n} Y_i \\
\vdots \\
n \hat \Gamma_n(|\mathcal D|) - \frac{1}{|\mathcal D|} \sum_{1 \leq i \leq |\mathcal D| n} Y_i
\end{pmatrix}~. \]
Therefore, for $d \in \mathcal D \backslash \{1\}$,
\begin{align*}
\hat \beta_n(d) & = \frac{2}{n} \Big ( n \hat \Gamma_n(d) - \frac{1}{|\mathcal D|} \sum_{1 \leq i \leq |\mathcal D| n} Y_i \Big ) + \frac{1}{|\mathcal D|} \sum_{d' \in \mathcal D \backslash \{1, d\}} \hat \Gamma_n(d') - \frac{|\mathcal D| - 2}{|\mathcal D| n} \sum_{1 \leq i \leq |\mathcal D| n} Y_i \\
& = \hat \Gamma_n(d) - \hat \Gamma_n(1)~.
\end{align*}
The first conclusion of the theorem then follows.

Next, note by the properties of the OLS estimator that
\begin{align*}
\hat \delta_{j, n} & = \left ( \sum_{1 \leq i \leq |\mathcal D| n} I \{i \in \lambda_j\} \right )^{-1} \sum_{1 \leq i \leq |\mathcal D| n} I \{i \in \lambda_j\} \left ( Y_i - \sum_{d \in \mathcal D \backslash \{1\}} \hat \beta_n(d) I \{D_i = d\} \right ) \\
& = \frac{1}{|\mathcal D|} \sum_{i \in \lambda_j} Y_i - \frac{1}{|\mathcal D|} \sum_{d \in \mathcal D \backslash \{1\}} \hat \beta_n(d)~.
\end{align*}
Therefore,
\[ \hat \epsilon_i = \begin{cases}
Y_i - \sum_{1 \leq j \leq n} I\{i\in \lambda_j \} \frac{1}{|\mathcal D|} \sum_{k \in \lambda_j} Y_k + \frac{1}{|\mathcal D|} \sum_{d' \in \mathcal D \backslash \{1\}} \hat \beta_n(d')~, \text{ if } D_i = 1 \\
Y_i - \hat \beta_n(d) - \sum_{1 \leq j \leq n} I\{i\in \lambda_j \} \frac{1}{|\mathcal D|} \sum_{k \in \lambda_j} Y_k + \frac{1}{|\mathcal D|} \sum_{d' \in \mathcal D \backslash \{1\}} \hat \beta_n(d')~, \text{ if } D_i = d \neq 1~.
\end{cases} \]
Note it follows from Lemma \ref{lem:fwl} that the heteroskedasticity-robust variance estimator of $(\hat \beta_n(2), \ldots, \hat \beta_n(|\mathcal D|))'$ equals
\[ \left ( \sum_{1 \leq i \leq |\mathcal D| n} \tilde C_i \tilde C_i' \right )^{-1} \left ( \sum_{1 \leq i \leq |\mathcal D| n} \hat \epsilon_i^2 \tilde C_i \tilde C_i' \right ) \left ( \sum_{1 \leq i \leq |\mathcal D| n} \tilde C_i \tilde C_i' \right )^{-1}~. \]
For $d \in \mathcal D \backslash \{1\}$, the corresponding $(d - 1)$-th diagonal term of $\mathbb A = \sum_{1 \leq i \leq |\mathcal D| n} \hat \epsilon_i^2 \tilde C_i \tilde C_i'$ equals
\[ \mathbb A_d = \sum_{1 \leq i \leq |\mathcal D| n} I \{D_i = 1\} \frac{1}{|\mathcal D|^2} \hat \epsilon_i^2 + \sum_{1 \leq i \leq |\mathcal D| n} I \{D_i = d\} \frac{(|\mathcal D| - 1)^2}{|\mathcal D|^2} \hat \epsilon_i^2 + \sum_{\tilde d \in \mathcal D \backslash \{1, d\}} \sum_{1 \leq i \leq |\mathcal D| n} I \{D_i = \tilde d\} \frac{1}{|\mathcal D|^2} \hat \epsilon_i^2~. \]
For $\tilde d \neq \check d \in \mathcal D \backslash \{1\}$, the correponding $(\tilde d - 1, \check d - 1)$-th term of $\sum_{1 \leq i \leq |\mathcal D| n} \hat \epsilon_i^2 \tilde C_i \tilde C_i'$ equals
\begin{multline*}
\mathbb A_{\tilde d, \check d} = \sum_{1 \leq i \leq |\mathcal D| n} I \{D_i = 1\} \frac{1}{|\mathcal D|^2} \hat \epsilon_i^2 + \sum_{1 \leq i \leq |\mathcal D| n} I \{D_i = \tilde d\} \frac{-(|\mathcal D| - 1)}{|\mathcal D|^2} \hat \epsilon_i^2 \\
+ \sum_{1 \leq i \leq |\mathcal D| n} I \{D_i = \check d\} \frac{-(|\mathcal D| - 1)}{|\mathcal D|^2} \hat \epsilon_i^2 + \sum_{d' \in \mathcal D \backslash \{1, \tilde d, \check d\}} \sum_{1 \leq i \leq |\mathcal D| n} I \{D_i = d'\} \frac{1}{|\mathcal D|^2} \hat \epsilon_i^2~.
\end{multline*}
Therefore,
\begin{align*}
\hat {\mathbb V}_n^{\rm sfe}(d, 1) & = \frac{4}{n^2} \mathbb A_d + \frac{1}{n^2} \sum_{\tilde d \in \mathcal D \backslash \{1, d\}} \mathbb A_{\tilde d} + \frac{4}{n^2} \sum_{\tilde d \in \mathcal D \backslash \{1, d\}} \mathbb A_{d, \tilde d} + \frac{2}{n^2} \sum_{\tilde d < \check d \in \mathcal D \backslash \{1, d\}} \mathbb A_{\tilde d, \check d} \\
& = \frac{4 + |\mathcal D| - 2  + 4 (|\mathcal D| - 2) + 2 (|\mathcal D| - 2) (|\mathcal D| - 3) / 2}{n^2} \sum_{1 \leq i \leq |\mathcal D| n} I \{D_i = 1\} \frac{1}{|\mathcal D|^2} \hat \epsilon_i^2 \\
& \hspace{3em} + \frac{1}{n^2} \sum_{1 \leq i \leq |\mathcal D| n} I \{D_i = d\} \frac{4 (|\mathcal D| - 1)^2 + |\mathcal D| - 2 - 4 (|\mathcal D| - 1) (|\mathcal D| - 2) + 2 (|\mathcal D| - 2) (|\mathcal D| - 3) / 2}{|\mathcal D|^2} \hat \epsilon_i^2 \\
& \hspace{3em} + \frac{1}{n^2} \sum_{\tilde d \in \mathcal D \backslash \{1, d\}} \sum_{1 \leq i \leq |\mathcal D| n} I \{D_i = \tilde d\} \\
& \hspace{3em} \times \frac{4 + (|\mathcal D| - 1)^2 + |\mathcal D| - 3 - 4(|\mathcal D| - 1) + 4(|\mathcal D| - 3) - 2 (|\mathcal D| - 1) (|\mathcal D| - 3) + 2 (|\mathcal D| - 3)(|\mathcal D| - 4) / 2}{|\mathcal D|^2} \hat \epsilon_i^2 \\
& = \frac{1}{n^2} \sum_{1 \leq i \leq |\mathcal D| n} I \{D_i = 1\} \hat \epsilon_i^2 + \frac{1}{n^2} \sum_{1 \leq i \leq |\mathcal D| n} I \{D_i = d\} \hat \epsilon_i^2 \\
& = \frac{1}{n^2} \sum_{1 \leq i \leq |\mathcal D| n} I \{D_i = 1\} \left ( Y_i - \sum_{1 \leq j \leq n} I\{i\in \lambda_j \} \frac{1}{|\mathcal D|} \sum_{k \in \lambda_j} Y_k + \frac{1}{|\mathcal D|} \sum_{d' \in \mathcal D \backslash \{1\}} \hat \beta_n(d') \right )^2 \\
& \hspace{3em} +\frac{1}{n^2} \sum_{1 \leq i \leq |\mathcal D| n} I \{D_i = d\} \left ( Y_i - \hat \beta_n(d) - \sum_{1 \leq j \leq n} I\{i\in \lambda_j \} \frac{1}{|\mathcal D|} \sum_{k \in \lambda_j} Y_k + \frac{1}{|\mathcal D|} \sum_{d' \in \mathcal D \backslash \{1\}} \hat \beta_n(d') \right )^2 \\
& = \frac{1}{n^2} \sum_{1 \leq i \leq |\mathcal D| n} I \{D_i = 1\} \left ( Y_i - \hat \Gamma_n(1) - \sum_{1 \leq j \leq n} I\{i\in \lambda_j \} \frac{1}{|\mathcal D|} \sum_{k \in \lambda_j} Y_k + \frac{1}{|\mathcal D|} \sum_{d' \in \mathcal D} \hat \Gamma_n(d') \right )^2 \\
& \hspace{3em} + \frac{1}{n^2} \sum_{1 \leq i \leq |\mathcal D| n} I \{D_i = d\} \left ( Y_i - \hat \Gamma_n(d) - \sum_{1 \leq j \leq n} I\{i \in \lambda_j\} \frac{1}{|\mathcal D|} \sum_{k \in \lambda_j} Y_k + \frac{1}{|\mathcal D|} \sum_{d' \in \mathcal D} \hat \Gamma_n(d') \right )^2 \\
& = \frac{1}{n^2} \sum_{1 \leq i \leq |\mathcal D| n} I \{D_i = 1\} \left ( Y_i - \hat \Gamma_n(1) - \sum_{1 \leq j \leq n} I\{i \in \lambda_j\} \frac{1}{|\mathcal D|} \sum_{k \in \lambda_j} Y_k + \frac{1}{|\mathcal D|} \sum_{d' \in \mathcal D} \hat \Gamma_n(d') \right )^2 \\
& \hspace{3em} + \frac{1}{n^2} \sum_{1 \leq i \leq |\mathcal D| n} I \{D_i = d\} \left ( Y_i - \hat \Gamma_n(d) - \sum_{1 \leq j \leq n} I\{i \in \lambda_j\} \frac{1}{|\mathcal D|} \sum_{k \in \lambda_j} Y_k + \frac{1}{|\mathcal D|} \sum_{d' \in \mathcal D} \hat \Gamma_n(d') \right )^2 \\
& = \frac{1}{n^2} \sum_{1 \leq j \leq n} \left ( \sum_{i \in \lambda_j} \left ( I \{D_i = 1\} - \frac{1}{|\mathcal D|} \right ) Y_i \right )^2 - \frac{1}{n} \left ( \hat \Gamma_n(1) - \frac{1}{|\mathcal D|} \sum_{d' \in \mathcal D} \hat \Gamma_n(d') \right )^2 \\
& \hspace{3em} + \frac{1}{n^2} \sum_{1 \leq j \leq n} \left ( \sum_{i \in \lambda_j} \left ( I \{D_i = d\} - \frac{1}{|\mathcal D|} \right ) Y_i \right )^2 - \frac{1}{n} \left ( \hat \Gamma_n(d) - \frac{1}{|\mathcal D|} \sum_{d' \in \mathcal D} \hat \Gamma_n(d') \right )^2~,
\end{align*}
where in the last equality we used the fact that for $d \in \mathcal D$,
\[ \sum_{1 \leq i \leq |\mathcal D| n} I \{D_i = d\} \left ( Y_i - \sum_{1 \leq j \leq n} I\{i \in \lambda_j\} \frac{1}{|\mathcal D|} \sum_{k \in \lambda_j} Y_k \right ) \left ( \hat \Gamma_n(d) - \frac{1}{|\mathcal D|} \sum_{d' \in \mathcal D} \hat \Gamma_n(d') \right ) = n \left ( \hat \Gamma_n(d) - \frac{1}{|\mathcal D|} \sum_{d' \in \mathcal D} \hat \Gamma_n(d') \right )^2~. \]

It follows from Assumptions \ref{as:Q} and \ref{as:close} as well as Lemmas \ref{lem:mean}--\ref{lem:rho-dd} that as $n \to \infty$,
\begin{align*}
    & \hat \Gamma_n(d) \stackrel{P}{\to} E[Y_i(d)] \text{ for all } d \in \mathcal D \\
    & \frac{1}{n} \sum_{1 \leq j \leq n}  \sum_{i \in \lambda_j} I \{D_i = d\} Y_i^2 \stackrel{P}{\to} E[Y_i^2(d)] \\
    & \frac{1}{n} \sum_{1 \leq j \leq n} \left ( \sum_{i \in \lambda_j} I \{D_i = d\} Y_i \right ) \left ( \sum_{i \in \lambda_j} I \{D_i = d'\} Y_i \right ) \stackrel{P}{\to} E[\Gamma_d(X_i) \Gamma_{d'}(X_i)] \text{ for all } d \neq d' \in \mathcal D~.
\end{align*}
Therefore,
\begin{align*}
    n \hat {\mathbb V}_n^{\rm sfe}(d, 1) & \stackrel{P}{\to} \var \left [ \Gamma_1(X_i) - \frac{1}{|\mathcal D|} \sum_{d' \in \mathcal D} \Gamma_{d'}(X_i) \right ] + \left ( 1 - \frac{1}{|\mathcal D|} \right )^2 E[\var[Y_i(1) | X_i]] + \frac{1}{|\mathcal D|^2} \sum_{d' \in \mathcal D \backslash \{1\}} E[\var[Y_i(d') | X_i]] \\
    & \hspace{3em} + \var \left [ \Gamma_d(X_i) - \frac{1}{|\mathcal D|} \sum_{d' \in \mathcal D} \Gamma_{d'}(X_i) \right ] + \left ( 1 - \frac{1}{|\mathcal D|} \right )^2 E[\var[Y_i(d) | X_i]] + \frac{1}{|\mathcal D|^2} \sum_{d' \in \mathcal D \backslash \{d\}} E[\var[Y_i(d') | X_i]]~.
\end{align*}
Finally, note by Theorem \ref{thm:main_delta} that the actual limiting variance for $\hat \Gamma_n(d) - \hat \Gamma_n(1)$ is
\[ E\left[\var[Y_i(d) | X_i]] + E[\var[Y_i(1) | X_i] \right] + \frac{1}{|\mathcal D|} E\left[ \left( (\Gamma_d(X_i) - \Gamma_d) - (\Gamma_1(X_i) - \Gamma_1) \right)^2 \right]~. \]
Consider the special case where $E[\var[Y_i(d') | X_i]]$ are identical across $d' \in \mathcal D$ and
\[ \Gamma_1(X_i) = \Gamma_d(X_i) = \frac{1}{|\mathcal D|} \sum_{d' \in \mathcal D} \Gamma_{d'}(X_i) \text{ with probability one}~. \]
Then, the probability limit of $n \hat {\mathbb V}_n^{\rm sfe}(d, 1)$ is clearly strictly smaller than the actual limiting variance for $\hat \Gamma_n(d) - \hat \Gamma_n(1)$.

For variance estimator HC $1$, consider the special case where $\Gamma_d(X_i)$ are identical across $d \in \mathcal D$, $E[\var[Y_i(d) | X_i]] > 0$, $E[\var[Y_i(1) | X_i]] > 0$, and $E[\var[Y_i(d') | X_i]]$ is zero for all $d' \in \mathcal D \backslash \{1, d\}$. Then,
\begin{align*}
   n \hat{\mathbb V}_n^{\rm sfe}(d, 1) \times \frac{|\mathcal D| n}{|\mathcal D| n - (|\mathcal D| - 1 + n)} & \stackrel{P}{\to} \frac{|\mathcal D|}{|\mathcal D| - 1} \left ( \left ( 1 - \frac{1}{|\mathcal D|} \right )^2 + \frac{1}{|\mathcal D|^2} \right ) (E[\var[Y_i(d) | X_i]] + E[\var[Y_i(1) | X_i]]) \\
   & = \frac{|\mathcal D|^2 - 2 |\mathcal D| + 2}{|\mathcal D|^2 - |\mathcal D|} (E[\var[Y_i(d) | X_i]] + E[\var[Y_i(1) | X_i]])~.
\end{align*}
Note that
\[ \frac{|\mathcal D|^2 - 2 |\mathcal D| + 2}{|\mathcal D|^2 - |\mathcal D|} < 1 \]
if and only if $|\mathcal D| > 2$. By a continuity argument, the result then follows for the case where $E[\var[Y_i(d') | X_i]]$ is sufficiently close to zero for all $d' \in \mathcal D \backslash \{1, d\}$. \qed

\subsection{Proof of Theorem \ref{thm:bcve}}
First, note that
\begin{equation*}
    \left(\frac{1}{n }\sum_{1\leq j \leq n} \sum_{i \in \lambda_j} C_i C_i'\right)^{-1}
= \begin{pmatrix}
|\mathcal{D}| & 1 & 1 & \dots & 1 \\
1 & 1 & 0 &\dots & 0 \\
1 & 0 & 1 &\dots & 0 \\
\vdots & \vdots & \vdots &\ddots &\vdots \\
1 & 0 & 0 & \dots & 1 \\
\end{pmatrix}^{-1} 
= \begin{pmatrix}
1 & -1 & -1 & \dots & -1 \\
-1 & 2 & 1 &\dots & 1 \\
-1 & 1 & 2 &\dots & 1 \\
\vdots & \vdots & \vdots &\ddots &\vdots \\
-1 & 1 & 1 & \dots & 2 \\
\end{pmatrix}~,
\end{equation*}
and note that
\begin{equation*}
\sum_{i\in\lambda_j} \hat \epsilon_i C_i = 
    \begin{pmatrix}
\sum_{i \in \lambda_j}\sum_{d \in \mathcal{D}\backslash \{1\}} (Y_i - \hat \gamma_n(d)) I\{D_i = d\}+ Y_i I\{D_i=1\}- \hat\gamma_n(1)  &  \\
\sum_{i \in \lambda_j} (Y_i - \hat \gamma_n(2)- \hat\gamma_n(1)) I\{D_i = 2\} &  \\
\sum_{i \in \lambda_j} (Y_i - \hat \gamma_n(3)- \hat\gamma_n(1)) I\{D_i = 3\} &  \\
\vdots &  \\
\sum_{i \in \lambda_j} (Y_i - \hat \gamma_n(|\mathcal{D}|)- \hat\gamma_n(1)) I\{D_i = |\mathcal{D}|\} &  \\
\end{pmatrix}~.
\end{equation*}
Combining these expressions, it follows that the $d$-th diagonal element of $n\cdot\hat{\mathbb{V}}_n^{\rm bcve}$ is equal to
\begin{align*}
n\cdot\hat{\mathbb{V}}^{\rm bcve}_n(d) &= 
    \frac{1}{n}\sum_{1\leq j \leq n} \left ( \sum_{i \in \lambda_j} (Y_i - \hat \gamma_n(d) - \hat\gamma_n(1)) I\{D_i = d\} -  \sum_{i \in \lambda_j} (Y_i - \hat \gamma_n(1)) I\{D_i = 1\} \right )^2 \\
    &= \frac{1}{n} \sum_{1 \leq j \leq n} \left ( \sum_{i \in \lambda_j} Y_i I \{D_i = d\} - \sum_{i \in \lambda_j} Y_i I \{D_i = 1\} \right )^2 - (\hat \Gamma_n(d) - \hat \Gamma_n(1))^2~.
\end{align*}
Where the second equality exploits the fact that $\hat \gamma_n(d) = \hat \Gamma_n(d) - \hat \Gamma_n(1)$ for $d \in \mathcal{D} \backslash \{1\}$ and $\hat \gamma_n(1) =\hat\Gamma_n(1)$. It thus follows from Lemmas \ref{lem:mean}--\ref{lem:rho-dd'} and the continuous mapping theorem that
\[n\cdot\hat{\mathbb{V}}^{\rm bcve}_n(d) \xrightarrow{p} E[\var[Y_i(d) | X_i]] + E[\var[Y_i(1) | X_i]] + E\left[ \left( (\Gamma_d(X_i) - \Gamma_d) - (\Gamma_1(X_i) - \Gamma_1) \right)^2 \right]~.\]
Next, note that by Theorem \ref{thm:main_delta}, the actual limiting variance of $\hat{\Gamma}_n(d) - \hat{\Gamma}_n(1)$ is given by
\[ E\left[\var[Y_i(d) | X_i]] + E[\var[Y_i(1) | X_i] \right] + \frac{1}{|\mathcal D|} E\left[ \left( (\Gamma_d(X_i) - \Gamma_d) - (\Gamma_1(X_i) - \Gamma_1) \right)^2 \right]~. \]
Therefore, the test defined in \eqref{eq:stat-bcve} is conservative unless 
\[E\left[ \left( (\Gamma_d(X_i) - \Gamma_d) - (\Gamma_1(X_i) - \Gamma_1) \right)^2 \right] = 0~,\]
as desired. \qed

\subsection{Proof of Theorem \ref{thm:replicate-delta}}
The proof is similar to the proof of Theorem \ref{thm:main_delta}, with the difference being that two units are assigned to each treatment status in each block. The necessary modification follows from arguing similarly as in Lemma B.3 of \cite{bai2022optimality} and is omitted. \qed

\subsection{Proof of Theorem \ref{thm:replicate-rho}}
First note
\begin{align*}
    & E[\tilde \rho_n(d, d) | X^{(n)}] \\
    & = \frac{2}{n} \sum_{1 \leq j \leq \lfloor n / 2 \rfloor} \frac{1}{\binom{2 |\mathcal D|}{2}} \sum_{i < l, i, l \in \lambda_j} E[Y_i(d) Y_l(d) | X^{(n)}] \\
    & = \frac{2}{n} \sum_{1 \leq j \leq \lfloor n / 2 \rfloor} \frac{1}{\binom{2 |\mathcal D|}{2}} \sum_{i < l, i, l \in \lambda_j} E[Y_i(d) | X_i] E[Y_l(d) | X_l]~,
\end{align*}
where the first equality follows from the conditional independence assumption in Assumption \ref{as:D} and the fact that in each block, there are $\binom{2 |\mathcal D|}{2}$ ways to choose $2$ units out of $2 |\mathcal D|$ units and assign them to treatment arm $d$, and the second equality follows from the fact that conditional on $X^{(n)}$, $Y^{(n)}(d)$ are i.i.d.\ across units. \eqref{eq:replicate-consistent} then follows by arguing similarly as in the proof of Lemma \ref{lem:rho-dd} below (see also Section 4.7 of \cite{bai2022optimality}). \qed

\subsection{Proof of Theorem \ref{thm:block_factorial}}
First we show that
\[\sqrt{n}(\hat{\Delta}_{\nu,n} - \Delta_{\nu}(Q)) \xrightarrow{d} N(0, \sigma^2_{h,\nu})~,\]
under the stratified factorial design defined by $h(\cdot)$. To show this, we derive the limiting distribution of $\sqrt n(\hat \Gamma_n(d) - \Gamma_d(Q): d \in \mathcal D)$.
To that end, note that
\[\sqrt{n}(\hat{\Gamma}_n(d) - \Gamma_d(Q):d \in \mathcal{D})' = A_n + B_n + C_n + o_P(1)~,\]
where $A_n = (A_{n,d}: d \in \mathcal{D})'$, $B_n = (B_{n,d}: d \in \mathcal{D})'$, $C_n = (C_{n,d}: d \in \mathcal{D})'$, with
\begin{align*}
A_{n,d} &= \sqrt{|\mathcal{D}|}\frac{1}{\sqrt{J_n}}\sum_{1\le i\le J_n}(Y_i(d) - E[Y_i(d)|h(X_i)])I\{D_i = d\}\\
B_{n,d} &=  \sqrt{|\mathcal{D}|}\frac{1}{\sqrt{J_n}}\sum_{1 \le i \le J_n}(I\{D_i = d\} - \pi)(E[Y_i(d)|h(X_i)] - E[Y_i(d)]) \\
C_{n,d} &=  \sqrt{|\mathcal{D}|}\frac{1}{\sqrt{J_n}}\sum_{1 \le i \le J_n}\pi(E[Y_i(d)|h(X_i)] - E[Y_i(d)])~,
\end{align*}
where $\pi := \frac{1}{|\mathcal{D}|}$. Re-writing each of these terms using the fact that \[E[Y_i(d)|h(X_i)] = \sum_{1 \le s \le S}E[Y_i(d)|h(X_i)]I\{h(X_i) = s\} = \sum_{1 \le s \le S}E[Y_i(d)|h(X_i) = s]I\{h(X_i) = s\}~,\] we obtain
\begin{align*}
A_{n,d} &= \sqrt{|\mathcal{D}|}\sum_{1 \le s \le S}\frac{1}{\sqrt{J_n}}\sum_{1\le i\le J_n}(E[Y_i(d)|h(X_i)] - E[Y_i(d)])I\{D_i = d, h(X_i) = s\} \\
B_{n,d} &=  \sqrt{|\mathcal{D}|}\sum_{1 \le s \le S}(E[Y_i(d)|h(X_i) = s] - E[Y_i(d)])\frac{J_{n}(s)}{J_n}\sqrt{J_n}\left(\frac{J_{n,d}(s)}{J_{n}(s)} - \pi\right)\\
C_{n,d} &= \sqrt{|\mathcal{D}|} \sum_{1 \le s \le S}\pi(E[Y_i(d)|h(X_i) = s] - E[Y_i(d)])\sqrt{J_n}\left(\frac{J_{n}(s)}{J_n} - p(s)\right)~,
\end{align*}
where $J_{n}(s) = \sum_{1 \le i \le J_n}I\{h(X_i) = s\}$, $J_{n,d}(s) = \sum_{1 \le i \le J_n}I\{h(X_i) = s, D_i = d\}$, $p(s) = P(h(X_i) = s)$, and importantly for $C_{n,d}$ we have used the fact that
\[\sum_{1 \le s \le S}(E[Y_i(d)|h(X_i) = s] - E[Y_i(d)])p(s) = 0~,\]
which follows by the law of iterated expectations. By the law of large numbers, $J_n(s)/J_n \xrightarrow{p} p(s)$, and by the properties of stratified block randomization (see Example 3.4 in \cite{bugni2018}),
\[\sqrt{J_n}\left(\frac{J_{n,d}(s)}{J_{n}(s)} - \pi\right) \xrightarrow{p} 0~,\]
and hence we can conclude that $B_{n,d} \xrightarrow{p} 0 $ for every $d \in \mathcal{D}$. Using Lemma C.1. in \cite{bugni2019inference}, it can then be shown that 
\[\left(\begin{array}{c}
A_n \\ 
C_n
\end{array}\right) \xrightarrow{d} N\left(0, \begin{bmatrix} \mathbb V_{h,1} & 0\\ 0 & \mathbb V_{h,2} \end{bmatrix}\right)~,
\]
and hence the first result follows. Next, let $\nu$ be a $1 \times |\mathcal D|$ vector of constants, then it can be shown that
\[ \nu \mathbb V \nu' = \sum_{d \in \mathcal D} \nu_d^2 \mathrm{Var}[Y_i(d)] - \sum_{d \neq d' \in \mathcal D} \frac{1}{|\mathcal D|} \var[\nu_d E[Y_i(d) | X_i] - \nu_{d'} E[Y_i(d') | X_i]]~, \]
and
\[\nu \mathbb V_h \nu' = \sum_{d \in \mathcal D} \nu_d^2 \mathrm{Var}[Y_i(d)] - \sum_{d \neq d' \in \mathcal D} \frac{1}{|\mathcal D|} \var[\nu_d E[Y_i(d) | h(X_i)] - \nu_{d'} E[Y_i(d') | h(X_i)]]~.\]
It then follows from similar arguments to those used in the proof of Theorem C.2 of \cite{bai2022optimality} that $\nu \mathbb V \nu' \leq \nu \mathbb V_h \nu'$. In particular, note that 
\begin{align*}
& \var[\nu_d E[Y_i(d) | X_i] - \nu_{d'} E[Y_i(d') | X_i]] \\
& = E[(\nu_d E[Y_i(d) | X_i] - \nu_{d'} E[Y_i(d') | X_i] - (\nu_d E[Y_i(d)] - \nu_{d'} E[Y_i(d')]))^2] \\
& = E[(\nu_d E[Y_i(d) | X_i] - \nu_{d'} E[Y_i(d') | X_i] - (\nu_d E[Y_i(d) | h(X_i)] - \nu_{d'} E[Y_i(d') | h(X_i)]) \\
& \hspace{3em} + (\nu_d E[Y_i(d) | h(X_i)] - \nu_{d'} E[Y_i(d') | h(X_i)]) - (\nu_d E[Y_i(d)] - \nu_{d'} E[Y_i(d')]))^2] \\
& =  E[(\nu_d E[Y_i(d) | X_i] - \nu_{d'} E[Y_i(d') | X_i] - (\nu_d E[Y_i(d) | h(X_i)] - \nu_{d'} E[Y_i(d') | h(X_i)]))^2] \\
& \hspace{3em} + E[(\nu_d E[Y_i(d) | h(X_i)] - \nu_{d'} E[Y_i(d') | h(X_i)]) - (\nu_d E[Y_i(d)] - \nu_{d'} E[Y_i(d')]))^2]~,
\end{align*}
where the last equality follows because
\begin{align*}
    & E[(\nu_d E[Y_i(d) | X_i] - \nu_{d'} E[Y_i(d') | X_i] - (\nu_d E[Y_i(d) | h(X_i)] - \nu_{d'} E[Y_i(d') | h(X_i)])) \\
    & \hspace{3em} ((\nu_d E[Y_i(d) | h(X_i)] - \nu_{d'} E[Y_i(d') | h(X_i)]) - (\nu_d E[Y_i(d)] - \nu_{d'} E[Y_i(d')]))] \\
    & = E[E[(\nu_d E[Y_i(d) | X_i] - \nu_{d'} E[Y_i(d') | X_i] - (\nu_d E[Y_i(d) | h(X_i)] - \nu_{d'} E[Y_i(d') | h(X_i)])) \\
    & \hspace{3em} ((\nu_d E[Y_i(d) | h(X_i)] - \nu_{d'} E[Y_i(d') | h(X_i)]) - (\nu_d E[Y_i(d)] - \nu_{d'} E[Y_i(d')])) | h(X_i)]] \\
    & = E[E[(\nu_d E[Y_i(d) | X_i] - \nu_{d'} E[Y_i(d') | X_i] - (\nu_d E[Y_i(d) | h(X_i)] - \nu_{d'} E[Y_i(d') | h(X_i)])) | h(X_i)] \\
    & \hspace{3em} ((\nu_d E[Y_i(d) | h(X_i)] - \nu_{d'} E[Y_i(d') | h(X_i)]) - (\nu_d E[Y_i(d)] - \nu_{d'} E[Y_i(d')])) ] \\
    & = 0~,
\end{align*}
where the last equality follows from the law of iterated expectations. We can thus conclude that the matched tuples design is asymptotically more efficient than the large stratum design, in the sense that the difference in variances between the large stratum and matched tuples designs, $\mathbb V_h - \mathbb V$, is positive semidefinite. \qed


\subsection{Proof of Theorem \ref{thm:matched-pair}}
To begin, note that
\[ \hat\Delta_{\nu_k,n} = \frac{1}{n}\sum_{1 \leq i \leq J_n}\sum_{d \in \mathcal{D}}I\{\iota_k(d) = +1\}  I \{D_i = d\} Y_i(d) - \frac{1}{n}\sum_{1 \leq i \leq J_n}\sum_{d \in \mathcal{D}}I\{\iota_k(d) = -1\}  I \{D_i = d\} Y_i(d)~. \]
Let $A_i, 1 \leq i \leq J_n$ denote a sequence of i.i.d.\ random vectors, each of which is a $K - 1$ vector of i.i.d.\ Rademacher random variables. Further assume they are independent of $Y^{(n)}(d), d \in \mathcal D$, $D^{(n)}$, and $X^{(n)}$. Define $\iota_{-k}(d)$ as the vector of all entries of $\iota(d)$ except the $k$th entry. Then, we consider the following ``averaged" potential outcomes over these $K-1$ factors defined as follows:
\begin{equation*}
\tilde{Y}_i(+1) := \sum_{d \in \mathcal{D}}I\{\iota_k(d) = + 1\}I\{\iota_{-k}(d) = A_i\} Y_i(d)
\end{equation*}
\begin{equation*}
\tilde{Y}_i(-1) := \sum_{d \in \mathcal{D}}I\{\iota_k(d) = -1\}I\{\iota_{-k}(d) = A_i\}Y_i(d)~.
\end{equation*}
With this notation, define
\[ \tilde\Delta_{\nu_k,n} = \frac{1}{n} \sum_{1 \leq i \leq J_n}I\{\iota_k(D_i)= +1\} \tilde Y_i(+1) - \frac{1}{n} \sum_{1 \leq i \leq J_n}I\{\iota_k(D_i) = -1\} \tilde Y_i(-1) ~.\]
It then follows from the definition of the factor $k$-specific design that $\tilde{\Delta}_{\nu_k,n}$ has the same distribution as $\hat \Delta_{\nu_k, n}$. To see it, note
\[ \frac{1}{n}\sum_{1 \leq i \leq J_n}\sum_{d \in \mathcal{D}} I\{\iota_k(d) = +1\} I \{D_i = d\} Y_i(d) = \frac{1}{n}\sum_{1 \leq i \leq J_n} \sum_{d \in \mathcal{D}} I\{\iota_k(D_i) = +1\} I\{\iota_k(d) = + 1\} I \{\iota_{-k}(D_i) = \iota_{-k}(d)\} Y_i(d) \]
and
\[ \frac{1}{n} \sum_{1 \leq i \leq J_n}I\{\iota_k(D_i)= +1\} \tilde Y_i(+1) = \frac{1}{n} \sum_{1 \leq i \leq J_n} \sum_{d \in \mathcal{D}} I\{\iota_k(D_i) = +1\} I\{\iota_k(d)= +1\}I\{\iota_{-k}(d) = A_i\} Y_i(d)\]
and $\iota_{-k}(D_i)$ and $A_i$ follow the same distribution independently of everything else.

Note $\tilde \Delta_{\nu_k, n} / 2^{K - 1}$ can be thought of as the difference-in-means estimator where the treatment has two levels $+1$ and $-1$ and the potential outcomes are $\tilde Y_i(+1)$ and $\tilde Y_i(-1)$. The conditions in Lemma S.1.4 in \cite{bai2021inference} can be verified straightforwardly and therefore we have
\[\sqrt{2^{K-1}n} \left( \frac{\hat{\Delta}_{\nu_k, n}}{2^{K-1}} - \frac{\Delta_{\nu_k}(Q)}{2^{K-1}}  \right)\stackrel{d}{\to} N(0, \mathbb V_{\nu_k, mp})~,\]
where
\begin{align*}
\mathbb V_{\nu_k, mp} &:=  E[\var[\tilde{Y}_i(+1) | X_i]] +  E[\var[\tilde{Y}_i(-1) | X_i]]\\
&+ \frac{1}{2} E[(E[\tilde{Y}_i(+1) | X_i] - E[\tilde{Y}_i(+1)] - (E[\tilde{Y}_i(-1) | X_i] - E[\tilde{Y}_i(-1)]))^2]~.
\end{align*}
Note that by Assumption \ref{as:D},
\begin{align*}
E[\tilde{Y}_i(+1) | X_i] &= E\left[\sum_{d \in \mathcal{D}}I\{\iota_k(d) = +1\}I\{\iota_{-k}(d) = A_i\} Y_i(d) \Bigg | X_i\right] \\
&= \frac{1}{2^{K-1}}\sum_{d \in \mathcal{D}}I\{\iota_k(d) = +1\} \Gamma_d(X_i)~.
\end{align*}
Therefore,
\begin{align*}
&\frac{1}{2}E[(E[\tilde{Y}_i(+1) | X_i] - E[\tilde{Y}_i(+1)] - (E[\tilde{Y}_i(-1) | X_i] - E[\tilde{Y}_i(0)]))^2] \\
&=\frac{1}{2}\cdot\frac{1}{2^{2(K-1)}} E\left[\left(\sum_{d \in \mathcal{D}}I\{\iota_k(d) = +1\} (\Gamma_d(X_i) -\Gamma_d) -  \sum_{d \in \mathcal{D}}I\{\iota_k(d) = -1\} (\Gamma_d(X_i) -\Gamma_d)\right)^2\right] \\
&=\frac{1}{2}\cdot \frac{1}{2^{2(K-1)}} E\left[\left(\nu_k^\prime (\Gamma_d(X_i)-\Gamma_d: d\in\mathcal{D}))\right)^2\right] \\
&= \frac{2^{K-1}}{2^{2(K-1)}} \nu_k^\prime E\left[\frac{1}{2^K} \cov[\Gamma_d(X_i), \Gamma_{d'}(X_i)]\right]_{d,d'\in \mathcal{D}} \nu_k \\
&=  \frac{1}{2^{(K-1)}} \nu_k^\prime \mathbb{V}_2 \nu_k~.
\end{align*}
Moreover,
\begin{align*}
&\var[\tilde{Y}_i(+1) | X_i] \\
& = \var\left[\sum_{d \in \mathcal{D}}I\{\iota_k(d) = + 1\}I\{\iota_{-k}(d) = A_i\} Y_i(d) \Bigg| X_i\right]\\
& = E\left[\var\left[\sum_{d \in \mathcal{D}}I\{\iota_k(d) = + 1\}I\{\iota_{-k}(d) = A_i\} Y_i(d) \Bigg| X_i, A_i\right]\Bigg|X_i\right] \\
& \hspace{3em} + \var\left[E\left[\sum_{d \in \mathcal{D}}I\{\iota_k(d) = + 1\}I\{\iota_{-k}(d) = A_i\} Y_i(d) \Bigg| X_i, A_i\right]\Bigg|X_i\right] \\
&= E\left[\sum_{d \in \mathcal{D}}I\{\iota_k(d) = + 1\}I\{\iota_{-k}(d) = A_i\} \var[Y_i(d) | X_i] \Bigg|X_i\right] + \var\left[\sum_{d \in \mathcal{D}}I\{\iota_k(d) = + 1\}I\{\iota_{-k}(d) = A_i\}\Gamma_{d}(X_i) \Bigg|X_i\right] \\
&= \frac{1}{2^{K - 1}}\sum_{d \in \mathcal{D}:\iota_k(d) = + 1} \var[Y_i(d) | X_i] + \frac{1}{2^{K - 1}} \sum_{d \in \mathcal D: \iota_k(d) = +1} \left (\Gamma_d(X_i) -  \frac{1}{2^{K - 1}} \sum_{d' \in \mathcal D: \iota_k(d') = +1} \Gamma_{d'}(X_i) \right )^2 \\
&=\frac{1}{2^{K-1}} \sum_{d \in \mathcal{D}:\iota_k(d) = +1} \left(\var[Y_i(d) | X_i] +\left( \Gamma_{d}(X_i) - \frac{1}{2^{K-1}}\sum_{d' \in\mathcal{D}:\iota_k(d') = +1} \Gamma_{d'}(X_i)\right)^2 \right)~.
\end{align*}
A similar calculation applies to $\var[\tilde{Y}_i(-1) | X_i]$. Finally,
\begin{align*}
\mathbb V_{\nu_k, mp} & = \frac{1}{2^{K-1}} \sum_{d \in \mathcal{D}} E[\var[Y_i(d) | X_i]] + \frac{1}{2^{K-1}} \nu_k' \mathbb{V}_2 \nu_k \\
& \hspace{3em} +\frac{1}{2^{K-1}} E \left [ \sum_{d \in \mathcal{D}:\iota_k(d) = + 1}\left(\Gamma_{d}(X_i) - \frac{1}{2^{K-1}}\sum_{d \in \mathcal{D}:\iota_k(d) = + 1} \Gamma_d(X_i)\right)^2 \right ] \\
& \hspace{3em} +\frac{1}{2^{K-1}} E \left [ \sum_{d \in \mathcal{D}:\iota_k(d') = -1 }\left(\Gamma_{d'}(X_i) - \frac{1}{2^{K-1}}\sum_{d \in\mathcal{D}::\iota_k(d) = - 1} \Gamma_d(X_i)\right)^2 \right ]~.
\end{align*}
The conclusion therefore follows.
\qed

\section{Auxiliary Lemmas}
\begin{lemma} \label{lem:mean}
Suppose Assumptions \ref{as:Q}--\ref{as:close} hold. Then, for $r = 1, 2$,
\[ \frac{1}{n} \sum_{1 \leq i \leq |\mathcal D| n} Y_i^r(d) I \{D_i = d\} \stackrel{P}{\to} E[Y_i^r(d)]~. \]
\end{lemma}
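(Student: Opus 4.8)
The plan is to mimic the structure used in the proof of Theorem~\ref{thm:main_delta}: decompose each summand into its conditional mean given the covariates plus a conditionally mean-zero residual, handle the two pieces separately, and verify convergence by combining a law of large numbers for the ``main'' term with the smoothness-plus-closeness arguments that kill the block-level discrepancies. Writing $g_r(X_i) := E[Y_i^r(d) \mid X_i]$ and $U_i := Y_i^r(d) - g_r(X_i)$, I would split
\[ \frac{1}{n}\sum_{1 \leq i \leq |\mathcal D| n} Y_i^r(d) I\{D_i = d\} = \frac{1}{n}\sum_{1 \leq i \leq |\mathcal D| n} g_r(X_i) I\{D_i = d\} + \frac{1}{n}\sum_{1 \leq i \leq |\mathcal D| n} U_i I\{D_i = d\}~. \]

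First I would treat the conditional-mean term. Letting $i_j$ denote the unique index in block $\lambda_j$ with $D_{i_j} = d$, the same algebraic identity used in the proof of Theorem~\ref{thm:main_delta} gives
\[ \frac{1}{n}\sum_{1 \leq i \leq |\mathcal D| n} g_r(X_i) I\{D_i = d\} = \frac{1}{|\mathcal D| n}\sum_{1 \leq i \leq |\mathcal D| n} g_r(X_i) - \frac{1}{|\mathcal D| n}\sum_{1 \leq j \leq n}\sum_{k \in \lambda_j : k \neq i_j}\big(g_r(X_k) - g_r(X_{i_j})\big)~. \]
The first term converges in probability to $E[g_r(X_i)] = E[Y_i^r(d)]$ by the weak law of large numbers, which applies because $E|g_r(X_i)| \leq E[Y_i^2(d)] < \infty$ by Assumption~\ref{as:Q}(b). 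For the second term I would use the Lipschitz continuity of $x \mapsto E[Y_i(d) \mid X_i = x]$ and $x \mapsto E[Y_i^2(d) \mid X_i = x]$ from Assumption~\ref{as:Q}(c) to bound it, up to a constant, by $\frac{1}{n}\sum_{1 \leq j \leq n}\max_{i,k \in \lambda_j}|X_i - X_k|$, and then invoke Assumption~\ref{as:close} together with the Cauchy--Schwarz inequality to conclude that it vanishes.

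It remains to show the residual term $\frac{1}{n}\sum_i U_i I\{D_i = d\}$ is $o_P(1)$. Conditionally on $(X^{(n)}, D^{(n)})$ the summands are independent with mean zero, so for $r = 1$ it suffices to bound the conditional variance by $\frac{1}{n^2}\sum_i I\{D_i = d\}\var[Y_i(d) \mid X_i]$, whose expectation is $O(1/n)$ since $E[\var[Y_i(d) \mid X_i]] \leq E[Y_i^2(d)] < \infty$; a conditional Chebyshev argument then finishes this case. The main obstacle is the case $r = 2$, where $\var[Y_i^2(d) \mid X_i]$ need not be integrable under only the second-moment Assumption~\ref{as:Q}(b), so the direct variance bound is unavailable. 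Here I would employ a truncation argument: fix $M > 0$ and split $Y_i^2(d) = Y_i^2(d) I\{Y_i^2(d) \leq M\} + Y_i^2(d) I\{Y_i^2(d) > M\}$, recentering each piece by its conditional expectation.

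The bounded piece has summands bounded by $2M$, so its contribution has conditional variance $O(M^2/n) \to 0$ for fixed $M$. The tail piece I would control in $L^1$: using $\{Y^{(n)}(d) : d \in \mathcal D\} \independent D^{(n)} \mid X^{(n)}$ (Assumption~\ref{as:D}) to integrate out the treatment indicator, so that $P\{D_i = d \mid X^{(n)}\} = 1/|\mathcal D|$, yields the uniform-in-$n$ bound $2 E[Y_i^2(d) I\{Y_i^2(d) > M\}]$, which tends to $0$ as $M \to \infty$ by dominated convergence. A standard $\varepsilon$-$M$ argument then delivers $\frac{1}{n}\sum_i U_i I\{D_i = d\} \stackrel{P}{\to} 0$, completing both cases and hence the proof.
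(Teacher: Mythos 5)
Your proof is correct, and the overall architecture matches the paper's: both decompose the sum into the conditional-mean part $\frac{1}{n}\sum_i E[Y_i^r(d)\mid X_i]I\{D_i=d\}$, which is handled by the block identity plus Lipschitz continuity (Assumption \ref{as:Q}(c)), Assumption \ref{as:close}, Cauchy--Schwarz, and the WLLN, and a conditionally centered residual. Where you diverge is in how the residual is killed. The paper treats $r=1$ and $r=2$ uniformly by verifying the Ces\`{a}ro uniform-integrability condition of Lemma S.1.3 of \cite{bai2021inference} conditionally on $(X^{(n)},D^{(n)})$ --- bounding the conditional truncated first moments by $E[|Y_i^r(d)-E[Y_i^r(d)\mid X_i]|\,I\{\cdot>m\}]$, which vanishes as $m\to\infty$ under Assumption \ref{as:Q}(b) --- and then closes with a subsequencing argument. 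You instead use conditional Chebyshev for $r=1$ (valid, since $E[\var[Y_i(d)\mid X_i]]\le E[Y_i^2(d)]<\infty$ gives an $O(1/n)$ conditional variance) and, for $r=2$, correctly diagnose that the variance route fails under second moments only and substitute a truncation at level $M$: Chebyshev on the bounded piece and an $L^1$ bound of $2E[Y_i^2(d)I\{Y_i^2(d)>M\}]$ on the tail, using Assumption \ref{as:D} to integrate out $I\{D_i=d\}$ at rate $1/|\mathcal D|$. Your truncation argument is essentially a self-contained, elementary re-derivation of the uniform-integrability WLLN that the paper imports as a black box; it buys independence from the external lemma at the cost of a case split, while the paper's route is shorter and handles both powers in one stroke. (One cosmetic slip: for $r=1$ the correct bound is $E|g_1(X_i)|\le E|Y_i(d)|\le (E[Y_i^2(d)])^{1/2}$ rather than $\le E[Y_i^2(d)]$, but finiteness --- which is all you need --- holds either way.)
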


\begin{proof}[\sc Proof of Lemma \ref{lem:mean}]
We prove the conclusion for $r = 1$ only and the proof for $r = 2$ follows similarly. To this end, write
\begin{multline*}
\frac{1}{n} \sum_{1 \leq i \leq |\mathcal D| n} Y_i(d) I \{D_i = d\} = \frac{1}{n} \sum_{1 \leq i \leq |\mathcal D| n} (Y_i(d) I \{D_i = d\} - E[Y_i(d) I \{D_i = d\} | X^{(n)}, D^{(n)}]) \\
+ \frac{1}{n} \sum_{1 \leq i \leq |\mathcal D| n} E[Y_i(d) I \{D_i = d\} | X^{(n)}, D^{(n)}]~.
\end{multline*}
Note
\[ \frac{1}{n} \sum_{1 \leq i \leq |\mathcal D| n} E[Y_i(d) I \{D_i = d\} | X^{(n)}, D^{(n)}] = \frac{1}{n} \sum_{1 \leq i \leq |\mathcal D| n} I \{D_i = d\} E[Y_i(d) | X_i] \stackrel{P}{\to} E[E[Y_i(d) | X_i]] = E[Y_i(d)]~, \]
where the equality follows from Assumption \ref{as:D} and the convergence in probability follows from Assumption \ref{as:close} and similar arguments to those used in the proof of Theorem \ref{thm:main_delta}. To complete the proof, we argue
\[ \frac{1}{|\mathcal D| n} \sum_{1 \leq i \leq |\mathcal D| n} (Y_i(d) I \{D_i = d\} - E[Y_i(d) I \{D_i = d\} | X^{(n)}, D^{(n)}]) \stackrel{P}{\to} 0~. \]
For this purpose, we proceed by verifying the uniform integrability condition in Lemma S.1.3 of \cite{bai2021inference} conditional on $X^{(n)}$ and $D^{(n)}$. Note for any $m > 0$ that
\begin{align*}
& \frac{1}{|\mathcal D| n} \sum_{1 \leq i \leq |\mathcal D| n} E[|Y_i(d) I \{D_i = d\} - E[Y_i(d) I \{D_i = d\} | X^{(n)}, D^{(n)}]) \\
&  \hspace{10em} \times I \{|Y_i(d) I \{D_i = d\} - E[Y_i(d) I \{D_i = d\} | X^{(n)}, D^{(n)}]| > m\} | X^{(n)}, D^{(n)}] \\
& = \frac{1}{|\mathcal D| n} \sum_{1 \leq i \leq |\mathcal D| n} E[|Y_i(d) I \{D_i = d\} - E[Y_i(d) | X_i] I \{D_i = d\}| \\
& \hspace{10em} \times I \{|Y_i(d) I \{D_i = d\} - E[Y_i(d) | X_i] I \{D_i = d\}| > m\} | X^{(n)}, D^{(n)}] \\
& \leq \frac{1}{|\mathcal D| n} \sum_{1 \leq i \leq |\mathcal D| n} E[|Y_i(d) - E[Y_i(d) | X_i]| I \{|Y_i(d) - E[Y_i(d) | X_i]| > m\} | X^{(n)}, D^{(n)}] \\
& = \frac{1}{|\mathcal D| n} \sum_{1 \leq i \leq |\mathcal D| n} E[|Y_i(d) - E[Y_i(d) | X_i]| I \{|Y_i(d) - E[Y_i(d) | X_i]| > m\} | X_i] \\
& \stackrel{P}{\to} E[|Y_i(d) - E[Y_i(d) | X_i]| I \{|Y_i(d) - E[Y_i(d) | X_i]| > m\}]~,
\end{align*}
where the first equality holds because of Assumption \ref{as:D}, the inequality holds because $0 \leq I \{D_i = d\} \leq 1$, the second equality holds because of Assumption \ref{as:D} again, and the convergence in probability follows from the weak law of large numbers because
\begin{multline*}
E[|Y_i(d) - E[Y_i(d) | X_i]| I \{|Y_i(d) - E[Y_i(d) | X_i]| > m\}] \leq E[|Y_i(d) - E[Y_i(d) | X_i]|] \\
\leq E[|Y_i(d)|] + E[|E[Y_i(d) | X_i]|] \leq E[|Y_i(d)|] + E[E[|Y_i(d)| | X_i]] = 2 E[|Y_i(d)|]~.
\end{multline*}
The proof could then be completed using the subsequencing argument as in (S.29) of the proof of Lemma S.1.5 of \cite{bai2021inference}.
\end{proof}

\begin{lemma} \label{lem:rho-dd'}
Suppose Assumptions \ref{as:Q}--\ref{as:close} hold. Then, $\hat \rho_n(d, d') \stackrel{P}{\to} E[\Gamma_d(X_i) \Gamma_{d'}(X_i)]$ as $n \to \infty$.
\end{lemma}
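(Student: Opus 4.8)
The plan is to mirror the structure of the proof of Lemma \ref{lem:mean} and of Theorem \ref{thm:main_delta}, working conditionally on $(X^{(n)}, D^{(n)})$. Since each treatment is assigned exactly once per block, within block $\lambda_j$ there is a unique index $i_j(d)$ with $D_{i_j(d)} = d$ and a unique index $i_j(d')$ with $D_{i_j(d')} = d'$; by \eqref{eq:PO} we may write $\hat\rho_n(d,d') = \frac1n \sum_{1 \le j \le n} Y_{i_j(d)}(d)\, Y_{i_j(d')}(d')$. First I would split this as
\[ \hat\rho_n(d,d') = \frac1n\sum_{1 \le j \le n} \big(Y_{i_j(d)}(d)Y_{i_j(d')}(d') - m_{n,j}\big) + \frac1n\sum_{1 \le j \le n} m_{n,j}~, \]
where $m_{n,j} := E[Y_{i_j(d)}(d)Y_{i_j(d')}(d') \mid X^{(n)}, D^{(n)}]$, and handle the deviation part and the mean part separately.

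For the mean part, note $i_j(d) \neq i_j(d')$, so by Assumption \ref{as:D} and the fact that the $W_i$ are i.i.d.\ (hence the potential outcomes are independent across units given $X^{(n)}$), $m_{n,j} = \Gamma_d(X_{i_j(d)})\Gamma_{d'}(X_{i_j(d')})$. I would then replace $\Gamma_{d'}(X_{i_j(d')})$ by $\Gamma_{d'}(X_{i_j(d)})$: by the Lipschitz property in Assumption \ref{as:Q}(c) the incurred error is at most $L\,|\Gamma_d(X_{i_j(d)})| \max_{i,k\in\lambda_j}|X_i - X_k|$, and averaging over $j$ and applying Cauchy--Schwarz bounds the total error by $L\,(\tfrac1n\sum_j \Gamma_d(X_{i_j(d)})^2)^{1/2}(\tfrac1n\sum_j \max_{i,k\in\lambda_j}|X_i-X_k|^2)^{1/2}$, which is $o_P(1)$ because the first factor is $O_P(1)$ (since $E[\Gamma_d(X_i)^2] \le E[Y_i^2(d)] < \infty$ by conditional Jensen and Assumption \ref{as:Q}(b)) and the second factor vanishes by Assumption \ref{as:close}. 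This is exactly why Assumption \ref{as:close} is stated with squared distances. Writing the resulting block-selected average $\frac1n\sum_j \Gamma_d(X_{i_j(d)})\Gamma_{d'}(X_{i_j(d)})$ as the full-sample average $\frac1{|\mathcal D|n}\sum_{1 \le i \le |\mathcal D|n} \Gamma_d(X_i)\Gamma_{d'}(X_i)$ plus the remainder $\frac1{|\mathcal D|n}\sum_j\sum_{i\in\lambda_j}(\Gamma_d(X_{i_j(d)})\Gamma_{d'}(X_{i_j(d)}) - \Gamma_d(X_i)\Gamma_{d'}(X_i))$, and controlling the remainder by the same Cauchy--Schwarz/Assumption \ref{as:close} argument, reduces the mean part to the full-sample average, which converges to $E[\Gamma_d(X_i)\Gamma_{d'}(X_i)]$ by the weak law of large numbers.

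For the deviation part, the summands are conditionally mean zero and, since distinct blocks involve disjoint units with conditionally independent potential outcomes, conditionally independent across $j$. I would therefore apply a conditional weak law for triangular arrays (Lemma S.1.3 of \cite{bai2021inference}) given $(X^{(n)}, D^{(n)})$, exactly as in the proof of Lemma \ref{lem:mean}, which reduces to verifying a uniform integrability condition. The key device is that, writing $A = Y_{i_j(d)}(d)$ and $B = Y_{i_j(d')}(d')$, one has $\{|AB| > m\} \subseteq \{A^2 > m\}\cup\{B^2>m\}$, so that $E[|AB|\,I\{|AB|>m\}\mid X^{(n)},D^{(n)}]$ factorizes, using conditional independence of $A$ and $B$, into terms such as $E[|Y(d)|\,I\{Y(d)^2>m\}\mid X_{i_j(d)}]\,E[|Y(d')|\mid X_{i_j(d')}]$. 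Averaging over $j$ and dominating by $\tfrac12(E[Y^2(d)\mid X]+E[Y^2(d')\mid X])$-type quantities, the truncation is controlled using only the second moments in Assumption \ref{as:Q}(b) together with dominated convergence as $m\to\infty$; the subsequencing argument at the end of the proof of Lemma \ref{lem:mean} then completes the verification.

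I expect the deviation part to be the main obstacle. A naive Chebyshev/variance bound would require the average of $E[Y^2(d)\mid X]\,E[Y^2(d')\mid X]$ to be finite, i.e.\ fourth moments, which Assumption \ref{as:Q} does not provide; the point is to route around this through the $L^1$ uniform-integrability statement and the inclusion $\{|AB|>m\}\subseteq\{A^2>m\}\cup\{B^2>m\}$, which keeps every moment requirement at the level of the second moments already assumed. The mean part is comparatively routine once one observes that products of Lipschitz but possibly unbounded functions are handled by Cauchy--Schwarz against the squared-closeness in Assumption \ref{as:close} rather than by a global Lipschitz bound.
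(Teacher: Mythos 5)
Your proposal is correct and follows essentially the same route as the paper's proof: it decomposes $\hat\rho_n(d,d')$ into its conditional mean, handled via the Lipschitz condition in Assumption \ref{as:Q}(c) together with Assumption \ref{as:close} and the weak law of large numbers, plus a conditionally mean-zero, conditionally independent-across-blocks deviation, handled by verifying the uniform integrability condition of Lemma S.1.3 of \cite{bai2021inference} so that only the second moments in Assumption \ref{as:Q}(b) are ever needed. The one place to tighten is the uniform-integrability step: the literal domination by $\tfrac12\left(E[Y_i^2(d)\mid X_i]+E[Y_i^2(d')\mid X_i]\right)$ does not vanish as $m\to\infty$, so you should instead use the pointwise bound $|ab|\,I\{|ab|>\lambda\}\le a^2 I\{|a|>\sqrt{\lambda}\}+b^2 I\{|b|>\sqrt{\lambda}\}$ (an immediate consequence of the inclusion you already state, and exactly the device the paper uses), after which each term involves a single treatment arm and tends to zero by dominated convergence.
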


\begin{proof}[\sc Proof of Lemma \ref{lem:rho-dd'}]
To begin with, note
\begin{align*}
& E[\hat \rho_n(d, d') | X^{(n)}] \\
& = \frac{1}{n} \sum_{1 \leq j \leq n} \frac{1}{|\mathcal D| (|\mathcal D| - 1)} \sum_{\{i, k\} \subset \lambda_j} (\Gamma_d(X_i) \Gamma_{d'}(X_k) + \Gamma_d(X_k) \Gamma_{d'}(X_i)) \\
& = \frac{1}{n} \sum_{1 \leq j \leq n} \frac{1}{|\mathcal D| (|\mathcal D| - 1)} \sum_{\{i, k\} \subset \lambda_j} (\Gamma_d(X_i) \Gamma_{d'}(X_i) + \Gamma_d(X_k) \Gamma_{d'}(X_k) - (\Gamma_d(X_i) - \Gamma_d(X_k)) (\Gamma_{d'}(X_i) - \Gamma_{d'}(X_k)) ) \\
& = \frac{1}{|\mathcal D| n} \sum_{1 \leq i \leq |\mathcal D| n} \Gamma_d(X_i) \Gamma_{d'}(X_i) - \frac{1}{n} \sum_{1 \leq j \leq n} \frac{1}{|\mathcal D| (|\mathcal D| - 1)} \sum_{\{i, k\} \subset \lambda_j} (\Gamma_d(X_i) - \Gamma_d(X_k)) (\Gamma_{d'}(X_i) - \Gamma_{d'}(X_k)) \\
& \stackrel{P}{\to} E[\Gamma_d(X_i) \Gamma_{d'}(X_i)]~,
\end{align*}
where the convergence in probability follows from Assumptions \ref{as:Q}(c) and \ref{as:close}. To conclude the proof, we show
\begin{equation} \label{eq:rho-dd'0}
\hat \rho_n(d, d') - E[\hat \rho_n(d, d') | X^{(n)}] \stackrel{P}{\to} 0~.
\end{equation}
In order for this, we proceed to verify the uniform integrability condition in Lemma S.1.3 of \cite{bai2021inference} conditional on $X^{(n)}$. Define
\[ \hat \rho_{n, j}(d, d') = \Big ( \sum_{i \in \lambda_j} Y_i I \{D_i = d\} \Big ) \Big ( \sum_{i \in \lambda_j} Y_i I \{D_i = d'\} \Big )~. \]
In what follows, we repeatedly use the following inequalities:
\begin{align*}
I \Big \{ \Big |\sum_{1 \leq j \leq k} a_j \Big | > \lambda \Big \} & \leq \sum_{1 \leq j \leq k} I \Big \{ |a_j| > \frac{\lambda}{k} \Big \} \\
\Big |\sum_{1 \leq j \leq k} a_j \Big | I \Big \{ \Big |\sum_{1 \leq j \leq k} a_j \Big | > \lambda \Big \} & \leq \sum_{1 \leq j \leq k} k |a_j| I \Big \{ |a_j| > \frac{\lambda}{k} \Big \} \\
|a b| I \{|ab| > \lambda\} & \leq a^2 I \{|a| > \sqrt \lambda\} + b^2 I \{|b| > \sqrt \lambda\}~.
\end{align*}
We will also repeatedly use the facts that $0 \leq I \{D_i = d\} \leq 1$ and $I \{D_i = d\} I \{D_k = d\} = 0$ for $i \neq k$ in the same stratum. Note
\begin{align*}
& E[|\hat \rho_{n, j}(d, d') - E[\hat \rho_{n, j}(d, d') | X^{(n)}]| I \{|\hat \rho_{n, j}(d, d') - E[\hat \rho_{n, j}(d, d') | X^{(n)}]| > \lambda\} | X^{(n)}] \\
& \leq E \Big [ |\hat \rho_{n, j}(d, d')| I \Big \{ |\hat \rho_{n, j}(d, d')| > \frac{\lambda}{2} \Big \} \Big | X^{(n)} \Big ] + E \Big [ |E[\hat \rho_{n, j}(d, d') | X^{(n)}]| I \Big \{ |E[\hat \rho_{n, j}(d, d') | X^{(n)}]| > \frac{\lambda}{2} \Big \} \Big | X^{(n)} \Big ] \\
& = E \Big [ |\hat \rho_{n, j}(d, d')| I \Big \{ |\hat \rho_{n, j}(d, d')| > \frac{\lambda}{2} \Big \} \Big | X^{(n)} \Big ] + |E[\hat \rho_{n, j}(d, d') | X^{(n)}]| I \Big \{ |E[\hat \rho_{n, j}(d, d') | X^{(n)}]| > \frac{\lambda}{2} \Big \} \\
& \leq E \Big [ \Big | \sum_{i \in \lambda_j} Y_i(d) I \{D_i = d\} \sum_{i \in \lambda_j} Y_i(d') I \{D_i = d'\} \Big | I \Big \{ \Big | \sum_{i \in \lambda_j} Y_i(d) I \{D_i = d\} \sum_{i \in \lambda_j} Y_i(d') I \{D_i = d'\} \Big | > \frac{\lambda}{2} \Big \} \Big | X^{(n)} \Big ] \\
& \hspace{1.5em} + \Big | \frac{1}{|\mathcal D| (|\mathcal D| - 1)} \sum_{\{i, k\} \subset \lambda_j} (\Gamma_d(X_i) \Gamma_{d'}(X_k) + \Gamma_d(X_k) \Gamma_{d'}(X_i)) \\
& \hspace{1.5em} \times \Big | I \Big \{ \Big | \frac{1}{|\mathcal D| (|\mathcal D| - 1)} \sum_{\{i, k\} \subset \lambda_j} (\Gamma_d(X_i) \Gamma_{d'}(X_k) + \Gamma_d(X_k) \Gamma_{d'}(X_i)) \Big | > \frac{\lambda}{2} \Big \} \\
& \lesssim E \Big [ \sum_{i \in \lambda_j} Y_i^2(d) I \{D_i = d\} I \Big \{ \Big | \sum_{i \in \lambda_j} Y_i(d) I \{D_i = d\} \Big | > \sqrt{\frac{\lambda}{2}} \Big \} \Big | X^{(n)} \Big ] \\
& \hspace{1.5em} + E \Big [ \sum_{i \in \lambda_j} Y_i^2(d') I \{D_i = d'\} I \Big \{ \Big | \sum_{i \in \lambda_j} Y_i(d') I \{D_i = d'\} \Big | > \sqrt{\frac{\lambda}{2}} \Big \} \Big | X^{(n)} \Big ] \\
& \hspace{1.5em} + \sum_{\{i, k\} \subset \lambda_j} \Big ( |\Gamma_d(X_i) \Gamma_{d'}(X_k)| I \Big \{ |\Gamma_d(X_i) \Gamma_{d'}(X_k)| > \frac{\lambda}{2} \Big \} + |\Gamma_d(X_k) \Gamma_{d'}(X_i)| I \Big \{ |\Gamma_d(X_k) \Gamma_{d'}(X_i)| > \frac{\lambda}{2} \Big \} \Big ) \\
& \leq \sum_{i \in \lambda_j}  E \Big [ Y_i^2(d) I \Big \{ |Y_i(d)| > \sqrt{\frac{\lambda}{4}} \Big \} \Big | X_i \Big ] + \sum_{i \in \lambda_j} E \Big [ Y_i^2(d') I \Big \{ |Y_i(d')| > \sqrt{\frac{\lambda}{4}} \Big \} \Big | X_i \Big ] \\
& \hspace{1.5em} + \sum_{i \in \lambda_j} \Gamma_d^2(X_i) I \Big \{ |\Gamma_d(X_i)| > \sqrt{\frac{\lambda}{4}} \Big \} + \sum_{i \in \lambda_j} \Gamma_{d'}^2(X_i) I \Big \{ |\Gamma_{d'}(X_i)| > \sqrt{\frac{\lambda}{4}} \Big \}~.
\end{align*}
Therefore,
\begin{align*}
& \frac{1}{|\mathcal D| n} \sum_{1 \leq j \leq n} E[|\hat \rho_{n, j}(d, d') - E[\hat \rho_{n, j}(d, d') | X^{(n)}]| I \{|\hat \rho_{n, j}(d, d') - E[\hat \rho_{n, j}(d, d') | X^{(n)}]| > \lambda\} | X^{(n)}] \\
& \lesssim \frac{1}{|\mathcal D| n} \sum_{1 \leq i \leq |\mathcal D| n} E \Big [ Y_i^2(d) I \Big \{ |Y_i(d)| > \sqrt{\frac{\lambda}{4}} \Big \} \Big | X_i \Big ] + \frac{1}{|\mathcal D| n} \sum_{1 \leq i \leq |\mathcal D| n} E \Big [ Y_i^2(d') I \Big \{ |Y_i(d')| > \sqrt{\frac{\lambda}{4}} \Big \} \Big | X_i \Big ] \\
& \hspace{3em} + \frac{1}{|\mathcal D| n} \sum_{1 \leq i \leq |\mathcal D| n} \Gamma_d^2(X_i) I \Big \{ |\Gamma_d(X_i)| > \sqrt{\frac{\lambda}{4}} \Big \} + \frac{1}{|\mathcal D| n} \sum_{1 \leq i \leq |\mathcal D| n} \Gamma_{d'}^2(X_i) I \Big \{ |\Gamma_d(X_i)| > \sqrt{\frac{\lambda}{4}} \Big \} \\
& \stackrel{P}{\to} E \Big [ Y_i^2(d) I \Big \{ |Y_i(d)| > \sqrt{\frac{\lambda}{4}} \Big \} \Big ] + E \Big [ Y_i^2(d') I \Big \{ |Y_i(d')| > \sqrt{\frac{\lambda}{4}} \Big \} \Big ] \\
& \hspace{3em} + E \Big [ \Gamma_d^2(X_i) I \Big \{ |\Gamma_d(X_i)| > \sqrt{\frac{\lambda}{4}} \Big \} \Big ] + E \Big [ \Gamma_{d'}^2(X_i) I \Big \{ |\Gamma_d(X_i)| > \sqrt{\frac{\lambda}{4}} \Big \} \Big ]~,
\end{align*}
where the convergence in probability follows from the weak law or large numbers. Because $E[Y_i^2(d)] < \infty$, $E[Y_i^2(d')] < \infty$, $E[\Gamma_d^2(X_i)] \leq E[Y_i^2(d)] < \infty$, and $E[\Gamma_{d'}^2(X_i)] \leq E[Y_i^2(d')] < \infty$, we have
\begin{align*}
\lim_{\lambda \to \infty} E \Big [ Y_i^2(d) I \Big \{ |Y_i(d)| > \sqrt{\frac{\lambda}{4}} \Big \} \Big ] & = 0 \\
\lim_{\lambda \to \infty} E \Big [ Y_i^2(d') I \Big \{ |Y_i(d')| > \sqrt{\frac{\lambda}{4}} \Big \} \Big ] & = 0 \\
\lim_{\lambda \to \infty} E \Big [ \Gamma_d^2(X_i) I \Big \{ |\Gamma_d(X_i)| > \sqrt{\frac{\lambda}{4}} \Big \} \Big ] & = 0 \\
\lim_{\lambda \to \infty} E \Big [ \Gamma_{d'}^2(X_i) I \Big \{ |\Gamma_{d'}(X_i)| > \sqrt{\frac{\lambda}{4}} \Big \} \Big ] & = 0~.
\end{align*}
It follows from a subsequencing argument as in (S.29) of the proof of Lemma S.1.5 of \cite{bai2021inference} that \eqref{eq:rho-dd'0} holds. The conclusion therefore follows.
\end{proof}

\begin{lemma} \label{lem:rho-dd}
Suppose Assumptions \ref{as:Q}--\ref{as:close-4} hold. Then, $\hat \rho_n(d, d) \stackrel{P}{\to} E[\Gamma_d^2(X_i)]$ as $n \to \infty$.
\end{lemma}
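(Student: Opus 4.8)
The plan is to mirror the structure of the proof of Lemma \ref{lem:rho-dd'}: first I would compute the conditional expectation $E[\hat\rho_n(d,d)\mid X^{(n)}]$ and show it converges in probability to $E[\Gamma_d^2(X_i)]$, and then I would show that the centered quantity $\hat\rho_n(d,d)-E[\hat\rho_n(d,d)\mid X^{(n)}]$ vanishes in probability via a conditional uniform-integrability argument.

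For the conditional mean, the key structural observation is that $\hat\rho_n(d,d)$ pairs two \emph{adjacent} blocks $\lambda_{2j-1}$ and $\lambda_{2j}$, which are disjoint. By Assumption \ref{as:D} treatment assignments are independent across blocks, and since $W^{(n)}$ is i.i.d.\ the potential outcomes are conditionally independent across units given $X^{(n)}$; hence the two factors (each supported on a single block) are conditionally independent and the expectation factorizes. Using $E[Y_i I\{D_i=d\}\mid X^{(n)}]=\Gamma_d(X_i)/|\mathcal D|$, this gives
\[
E[\hat\rho_n(d,d)\mid X^{(n)}] = \frac{2}{n}\sum_{1\le j\le \lfloor n/2\rfloor}\frac{1}{|\mathcal D|^2}\Big(\sum_{i\in\lambda_{2j-1}}\Gamma_d(X_i)\Big)\Big(\sum_{k\in\lambda_{2j}}\Gamma_d(X_k)\Big)~.
\]
I would then apply the polarization identity $\Gamma_d(X_i)\Gamma_d(X_k)=\tfrac12(\Gamma_d^2(X_i)+\Gamma_d^2(X_k))-\tfrac12(\Gamma_d(X_i)-\Gamma_d(X_k))^2$ inside the double sum over $i\in\lambda_{2j-1},k\in\lambda_{2j}$. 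The quadratic remainder is controlled by $\lesssim|X_i-X_k|^2$ through the Lipschitz part of Assumption \ref{as:Q}(c), and after summing its contribution is bounded by $\frac{2}{n}\sum_j\max_{i\in\lambda_{2j-1},k\in\lambda_{2j}}|X_i-X_k|^2\stackrel{P}{\to}0$ by Assumption \ref{as:close-4}; this is precisely where adjacent-block closeness is needed, in contrast to the within-block Assumption \ref{as:close} used in Lemma \ref{lem:rho-dd'}. The remaining terms collapse (each $\Gamma_d^2(X_i)$ appears $|\mathcal D|$ times across the opposite-block sum) to $\frac{1}{|\mathcal D| n}\sum_{1\le i\le |\mathcal D|n}\Gamma_d^2(X_i)$, up to the possibly unpaired final block whose contribution is $O(1/n)$; this converges to $E[\Gamma_d^2(X_i)]$ by the weak law of large numbers, using $E[\Gamma_d^2(X_i)]\le E[Y_i^2(d)]<\infty$ from Assumption \ref{as:Q}(b).

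For the second step, I would set $\hat\rho_{n,j}(d,d)=(\sum_{i\in\lambda_{2j-1}}Y_iI\{D_i=d\})(\sum_{k\in\lambda_{2j}}Y_kI\{D_k=d\})$ and verify the uniform-integrability condition of Lemma S.1.3 of \cite{bai2021inference} conditional on $X^{(n)}$. Because distinct $j$ index disjoint pairs of blocks, the $\hat\rho_{n,j}(d,d)$ are conditionally independent given $X^{(n)}$, so that lemma applies. Using $|ab|I\{|ab|>\lambda\}\le a^2I\{|a|>\sqrt\lambda\}+b^2I\{|b|>\sqrt\lambda\}$ together with $0\le I\{D_i=d\}\le1$, I would bound the truncated conditional first moment of $\hat\rho_{n,j}(d,d)$ by sums of $Y_i^2(d)$ and $\Gamma_d^2(X_i)$ restricted to tail events, exactly as in Lemma \ref{lem:rho-dd'}. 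Averaging and applying the weak law of large numbers reduces the bound to terms of the form $E[Y_i^2(d)I\{|Y_i(d)|>\sqrt\lambda/c\}]$ and $E[\Gamma_d^2(X_i)I\{|\Gamma_d(X_i)|>\sqrt\lambda/c\}]$ for a constant $c>0$, which tend to $0$ as $\lambda\to\infty$ since $E[Y_i^2(d)]<\infty$; the usual subsequencing argument then yields $\hat\rho_n(d,d)-E[\hat\rho_n(d,d)\mid X^{(n)}]\stackrel{P}{\to}0$, and combining the two steps gives the claim.

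I expect the main obstacle to be the conditional-mean step: verifying that the conditional expectation factorizes across the paired blocks and then organizing the double sum so that the error terms are matched to Assumption \ref{as:close-4} (rather than \ref{as:close}) while the leading terms reconstitute a full-sample average of $\Gamma_d^2$. The uniform-integrability step should be essentially a transcription of the argument for Lemma \ref{lem:rho-dd'}, with within-block pairs replaced by adjacent-block pairs.
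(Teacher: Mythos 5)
Your proposal is correct and follows essentially the same route as the paper's own proof: the conditional mean is computed via the across-block factorization from Assumption \ref{as:D}, reorganized with the polarization identity so that the remainder is controlled by Assumptions \ref{as:Q}(c) and \ref{as:close-4} while the leading terms collapse to $\frac{1}{|\mathcal D|n}\sum_i\Gamma_d^2(X_i)$, and the centered term is handled by the same conditional uniform-integrability and subsequencing argument as in Lemma \ref{lem:rho-dd'}. No gaps to report.
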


\begin{proof}[\sc Proof of Lemma \ref{lem:rho-dd}]
For $1 \leq j \leq \frac{n}{2}$, define
\[ \hat \rho_{n, j}(d, d) = \sum_{i \in \lambda_{2j - 1}} Y_i I \{D_i = d\} \sum_{i \in \lambda_{2j}} Y_i I \{D_i = d\}~. \]
By defintition, $\hat \rho_n(d, d) = \frac{2}{n}\sum_{1 \leq j \leq \frac{n}{2}} \hat \rho_{n, j}$. Note by Assumption \ref{as:D},
\[ E[ \hat \rho_{n, j}(d, d) | X^{(n)}] = \frac{1}{|\mathcal D|^2} \sum_{i \in \lambda_{2j - 1}, k \in \lambda_{2j}} \Gamma_d(X_i) \Gamma_d(X_k)~. \]
Further note
\[ \Gamma_d(X_i) \Gamma_d(X_k) = \frac{1}{2} \Gamma_d^2(X_i) + \frac{1}{2} \Gamma_d^2(X_k) - \frac{1}{2} (\Gamma_d(X_i) - \Gamma_d(X_k))^2~. \]
Therefore,
\begin{align*}
& E[\hat \rho_n(d, d) | X^{(n)}] = \frac{2}{n} \sum_{1 \leq j \leq \frac{n}{2}} E[\hat \rho_{n, j}(d, d) | X^{(n)}] \\
& = \frac{2}{n} \sum_{1 \leq j \leq \frac{n}{2}} \frac{1}{|\mathcal D|^2} \sum_{i \in \lambda_{2j - 1}, k \in \lambda_{2j}} \Big ( \frac{1}{2} \Gamma_d^2(X_i) + \frac{1}{2} \Gamma_d^2(X_k) - \frac{1}{2} (\Gamma_d(X_i) - \Gamma_d(X_k))^2 \Big ) \\
& = \frac{1}{|\mathcal D| n} \sum_{1 \leq i \leq |\mathcal D| n} \Gamma_d^2(X_i) - \frac{1}{n |\mathcal D|^2} \sum_{1 \leq j \leq \frac{n}{2}} \sum_{i \in \lambda_{2j - 1}, k \in \lambda_{2j}} (\Gamma_d(X_i) - \Gamma_d(X_k))^2 \\
& \stackrel{P}{\to} E[\Gamma_d^2(X_i)]~,
\end{align*}
where the convergence in probability follows from Assumptions \ref{as:Q}(c) and \ref{as:close-4} as well as the weak law of large numbers. To conclude the proof, we show
\begin{equation} \label{eq:rho-dd0}
\hat \rho_n(d, d) - E[\hat \rho_n(d, d) | X^{(n)}] \stackrel{P}{\to} 0~.
\end{equation}
In order for this, we proceed to verify the uniform integrability condition in Lemma S.1.3 of \cite{bai2021inference} conditional on $X^{(n)}$. In what follows, we repeatedly use the following inequalities:
\begin{align*}
I \Big \{ \Big |\sum_{1 \leq j \leq k} a_j \Big | > \lambda \Big \} & \leq \sum_{1 \leq j \leq k} I \Big \{ |a_j| > \frac{\lambda}{k} \Big \} \\
\Big |\sum_{1 \leq j \leq k} a_j \Big | I \Big \{ \Big |\sum_{1 \leq j \leq k} a_j \Big | > \lambda \Big \} & \leq \sum_{1 \leq j \leq k} k |a_j| I \Big \{ |a_j| > \frac{\lambda}{k} \Big \} \\
|a b| I \{|ab| > \lambda\} & \leq a^2 I \{|a| > \sqrt \lambda\} + b^2 I \{|b| > \sqrt \lambda\}~.
\end{align*}
We will also repeatedly use the facts that $0 \leq I \{D_i = d\} \leq 1$ and $I \{D_i = d\} I \{D_k = d\} = 0$ for $i \neq k$ in the same stratum. Note
\begin{align*}
& E[|\hat \rho_{n, j}(d, d) - E[\hat \rho_{n, j}(d, d) | X^{(n)}]| I \{|\hat \rho_{n, j}(d, d) - E[\hat \rho_{n, j}(d, d) | X^{(n)}]| > \lambda\} | X^{(n)}] \\
& \leq E \Big [ |\hat \rho_{n, j}(d, d)| I \Big \{ |\hat \rho_{n, j}(d, d)| > \frac{\lambda}{2} \Big \} \Big | X^{(n)} \Big ] + E \Big [ |E[\hat \rho_{n, j}(d, d) | X^{(n)}]| I \Big \{ |E[\hat \rho_{n, j}(d, d) | X^{(n)}]| > \frac{\lambda}{2} \Big \} \Big | X^{(n)} \Big ] \\
& = E \Big [ |\hat \rho_{n, j}(d, d)| I \Big \{ |\hat \rho_{n, j}(d, d)| > \frac{\lambda}{2} \Big \} \Big | X^{(n)} \Big ] + |E[\hat \rho_{n, j}(d, d) | X^{(n)}]| I \Big \{ |E[\hat \rho_{n, j}(d, d) | X^{(n)}]| > \frac{\lambda}{2} \Big \} \\
& \leq E \Big [ \Big | \sum_{i \in \lambda_{2j - 1}} Y_i(d) I \{D_i = d\} \sum_{i \in \lambda_{2j}} Y_i(d) I \{D_i = d\} \Big | I \Big \{ \Big | \sum_{i \in \lambda_{2j - 1}} Y_i(d) I \{D_i = d\} \sum_{i \in \lambda_{2j}} Y_i(d) I \{D_i = d\} \Big | > \frac{\lambda}{2} \Big \} \Big | X^{(n)} \Big ] \\
& \hspace{1.5em} + \Big | \frac{1}{|\mathcal D|^2} \sum_{i \in \lambda_{2j - 1}, k \in \lambda_{2j}} \Gamma_d(X_i) \Gamma_d(X_k) \Big | I \Big \{ \Big | \frac{1}{|\mathcal D|^2} \sum_{i \in \lambda_{2j - 1}, k \in \lambda_{2j}} \Gamma_d(X_i) \Gamma_d(X_k) \Big | > \frac{\lambda}{2} \Big \} \\
& \lesssim E \Big [ \sum_{i \in \lambda_{2j - 1}} Y_i^2(d) I \{D_i = d\} I \Big \{ \Big | \sum_{i \in \lambda_{2j - 1}} Y_i(d) I \{D_i = d\} \Big | > \sqrt{\frac{\lambda}{2}} \Big \} \Big | X^{(n)} \Big ] \\
& \hspace{1.5em} + E \Big [ \sum_{i \in \lambda_{2j}} Y_i^2(d) I \{D_i = d\} I \Big \{ \Big | \sum_{i \in \lambda_{2j}} Y_i(d) I \{D_i = d\} \Big | > \sqrt{\frac{\lambda}{2}} \Big \} \Big | X^{(n)} \Big ] \\
& \hspace{1.5em} + \sum_{i \in \lambda_{2j - 1}, k \in \lambda_{2j}} |\Gamma_d(X_i) \Gamma_d(X_k)| I \Big \{ |\Gamma_d(X_i) \Gamma_d(X_k)| > \frac{\lambda}{2} \Big \} \\
& \leq \sum_{i \in \lambda_{2j - 1}}  E \Big [ Y_i^2(d) I \Big \{ |Y_i(d)| > \sqrt{\frac{\lambda}{2}} \Big \} \Big | X_i \Big ] + \sum_{i \in \lambda_{2j}} E \Big [ Y_i^2(d) I \Big \{ |Y_i(d)| > \sqrt{\frac{\lambda}{2}} \Big \} \Big | X_i \Big ] \\
& \hspace{1.5em} + \sum_{i \in \lambda_{2j - 1}} \Gamma_d^2(X_i) I \Big \{ |\Gamma_d(X_i)| > \sqrt{\frac{\lambda}{2}} \Big \} + \sum_{i \in \lambda_{2j}} \Gamma_d^2(X_i) I \Big \{ |\Gamma_d(X_i)| > \sqrt{\frac{\lambda}{2}} \Big \}~.
\end{align*}
Therefore
\begin{align*}
& \frac{1}{|\mathcal D| n} \sum_{1 \leq j \leq \frac{n}{2}} E[|\hat \rho_{n, j}(d, d) - E[\hat \rho_{n, j}(d, d) | X^{(n)}]| I \{|\hat \rho_{n, j}(d, d) - E[\hat \rho_{n, j}(d, d) | X^{(n)}]| > \lambda\} | X^{(n)}] \\
& \lesssim \frac{1}{|\mathcal D| n} \sum_{1 \leq i \leq |\mathcal D| n} E \Big [ Y_i^2(d) I \Big \{ |Y_i(d)| > \sqrt{\frac{\lambda}{2}} \Big \} \Big | X_i \Big ] + \frac{1}{|\mathcal D| n} \sum_{1 \leq i \leq |\mathcal D| n} \Gamma_d^2(X_i) I \Big \{ |\Gamma_d(X_i)| > \sqrt{\frac{\lambda}{2}} \Big \} \\
& \stackrel{P}{\to} E \Big [ Y_i^2(d) I \Big \{ |Y_i(d)| > \sqrt{\frac{\lambda}{2}} \Big \} \Big ] + E \Big [ \Gamma_d^2(X_i) I \Big \{ |\Gamma_d(X_i)| > \sqrt{\frac{\lambda}{2}} \Big \} \Big ]~,
\end{align*}
where the convergence in probability follows from the weak law or large numbers. Because $E[Y_i^2(d)] < \infty$ and $E[\Gamma_d^2(X_i)] \leq E[Y_i^2(d)] < \infty$, we have
\begin{align*}
\lim_{\lambda \to \infty} E \Big [ Y_i^2(d) I \Big \{ |Y_i(d)| > \sqrt{\frac{\lambda}{4}} \Big \} \Big ] & = 0 \\
\lim_{\lambda \to \infty} E \Big [ \Gamma_d^2(X_i) I \Big \{ |\Gamma_d(X_i)| > \sqrt{\frac{\lambda}{4}} \Big \} \Big ] & = 0~.
\end{align*}
It follows from a subsequencing argument as in the proof of Lemma S.1.5 of \cite{bai2021inference} that \eqref{eq:rho-dd0} holds. The conclusion therefore follows.
\end{proof}

\begin{lemma} \label{lem:fwl}
Suppose $(Y_i, X_{1, i}', X_{2, i}')'$, $1 \leq i \leq n$ is an i.i.d.\ sequence of random vectors, where $Y_i$ takes values in $\mathbf R$, $X_{1, i}$ takes values in $\mathbf R^{k_1}$, and $X_{2, i}$ takes values in $\mathbf R^{k_2}$. Consider the linear regression
\[ Y_i = X_{1, i}' \beta_1 + X_{2, i}' \beta_2 + \epsilon_i~. \]
Define $\mathbb X = (X_1, \ldots, X_n)'$, $\mathbb X_1 = (X_{1, 1}, \ldots, X_{1, n})'$, and $\mathbb X_2 = (X_{2, 1}, \ldots, X_{2, n})'$. Define $\mathbb P_2 = \mathbb X_2 (\mathbb X_2' \mathbb X_2)^{-1} \mathbb X_2'$ and $\mathbb M_2 = \mathbb I - \mathbb P_2$. Let $\hat \beta_{1, n}$ and $\hat \beta_{2, n}$ denote the OLS estimator of $\beta_1$ and $\beta_2$. Define $\hat \epsilon_i = Y_i - X_{1, i}' \hat \beta_{1, n} - X_{2, i}' \hat \beta_{2, n}$. Define
\[ \tilde {\mathbb X}_1 = \mathbb M_2 \mathbb X_1~. \]
Let
\[ \hat \Omega_n = (\mathbb X' \mathbb X)^{-1} (\mathbb X' \mathrm{diag}(\hat \epsilon_i^2: 1 \leq i \leq n) \mathbb X) (\mathbb X' \mathbb X)^{-1} \]
denote the heteroskedasticity-robust variance estimator of $(\hat \beta_{1, n}, \hat \beta_{2, n})$. Then, the upper-left $k_1 \times k_1$ block of $\hat \Omega_n$ equals
\[ (\tilde {\mathbb X}_1' \tilde {\mathbb X}_1)^{-1} (\tilde {\mathbb X}_1' \mathrm{diag}(\hat \epsilon_i^2: 1 \leq i \leq n) \tilde {\mathbb X}_1) (\tilde {\mathbb X}_1' \tilde {\mathbb X}_1)^{-1}~. \]
\end{lemma}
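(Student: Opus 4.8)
The plan is to reduce the whole computation to the block-partitioned inverse of $\mathbb X'\mathbb X$ together with one algebraic identity that realizes the projection $\mathbb M_2$ automatically. Write the sandwich estimator as $\hat\Omega_n = A^{-1} B A^{-1}$, where $A = \mathbb X'\mathbb X$ and $B = \mathbb X'\,\mathrm{diag}(\hat\epsilon_i^2 : 1 \leq i \leq n)\,\mathbb X$, and partition $A$ conformably with $\mathbb X = [\mathbb X_1, \mathbb X_2]$ as a $2\times 2$ block matrix with blocks $A_{11} = \mathbb X_1'\mathbb X_1$, $A_{12} = \mathbb X_1'\mathbb X_2 = A_{21}'$, and $A_{22} = \mathbb X_2'\mathbb X_2$. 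Since $A$ is symmetric, so is $A^{-1}$, and hence the upper-left $k_1\times k_1$ block of $\hat\Omega_n$ is obtained by sandwiching $B$ between the first block row $R$ of $A^{-1}$ and its transpose, i.e.\ it equals $R\,B\,R'$. The entire argument therefore comes down to identifying $R$ and simplifying $R\,\mathbb X'$.

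First I would invoke the standard block-inverse formula. Letting $S := A_{11} - A_{12}A_{22}^{-1}A_{21}$ denote the Schur complement of $A_{22}$, the first block row of $A^{-1}$ is $R = \bigl(S^{-1},\ -S^{-1}A_{12}A_{22}^{-1}\bigr) = S^{-1}\bigl(\mathbb I,\ -A_{12}A_{22}^{-1}\bigr)$. The two facts I need about $S$ follow directly from $\mathbb P_2 = \mathbb X_2(\mathbb X_2'\mathbb X_2)^{-1}\mathbb X_2'$: because $A_{12}A_{22}^{-1}A_{21} = \mathbb X_1'\mathbb P_2\mathbb X_1$, we have $S = \mathbb X_1'(\mathbb I - \mathbb P_2)\mathbb X_1 = \mathbb X_1'\mathbb M_2\mathbb X_1$, and since $\mathbb M_2$ is symmetric and idempotent, $S = \mathbb X_1'\mathbb M_2'\mathbb M_2\mathbb X_1 = \tilde{\mathbb X}_1'\tilde{\mathbb X}_1$.

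The crux is the identity $\bigl(\mathbb I,\ -A_{12}A_{22}^{-1}\bigr)\mathbb X' = \tilde{\mathbb X}_1'$. Expanding the left-hand side gives $\mathbb X_1' - \mathbb X_1'\mathbb X_2(\mathbb X_2'\mathbb X_2)^{-1}\mathbb X_2' = \mathbb X_1'(\mathbb I - \mathbb P_2) = \mathbb X_1'\mathbb M_2 = \tilde{\mathbb X}_1'$, which shows that multiplying the unnormalized first block row into $\mathbb X'$ performs exactly the residual-maker projection onto the orthogonal complement of $\mathbb X_2$. Consequently $R\,\mathbb X' = S^{-1}\tilde{\mathbb X}_1'$, and therefore
\[
R\,B\,R' = R\,\mathbb X'\,\mathrm{diag}(\hat\epsilon_i^2)\,\mathbb X\,R' = S^{-1}\,\tilde{\mathbb X}_1'\,\mathrm{diag}(\hat\epsilon_i^2)\,\tilde{\mathbb X}_1\,S^{-1}.
\]
Substituting $S = \tilde{\mathbb X}_1'\tilde{\mathbb X}_1$ yields $(\tilde{\mathbb X}_1'\tilde{\mathbb X}_1)^{-1}(\tilde{\mathbb X}_1'\,\mathrm{diag}(\hat\epsilon_i^2)\,\tilde{\mathbb X}_1)(\tilde{\mathbb X}_1'\tilde{\mathbb X}_1)^{-1}$, which is the claimed expression.

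The computation is essentially mechanical once the projection identity is in hand, so I do not anticipate a genuine obstacle; the only points requiring care are the bookkeeping in the block-inverse formula and the observation that the \emph{same} residual vector $\hat\epsilon_i$ appears throughout. The latter holds automatically here because the lemma fixes a single regression and a single residual vector, so no separate appeal to the Frisch--Waugh--Lovell residual-equivalence is needed. I would also note in passing that invertibility of $A$, $A_{22}$, and $S$ is maintained implicitly, as is standard for these estimators.
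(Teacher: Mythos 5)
Your proof is correct and follows essentially the same route as the paper's: both apply the block-partitioned inverse formula to $\mathbb X'\mathbb X$ to identify the first block row as $\bigl((\mathbb X_1'\mathbb M_2\mathbb X_1)^{-1},\ -(\mathbb X_1'\mathbb M_2\mathbb X_1)^{-1}\mathbb X_1'\mathbb X_2(\mathbb X_2'\mathbb X_2)^{-1}\bigr)$ and then reduce the sandwich to the claimed expression. The paper leaves the final step as "elementary calculations," whereas you make it explicit via the identity $(\mathbb I,\ -A_{12}A_{22}^{-1})\mathbb X' = \tilde{\mathbb X}_1'$, which is a clean way to finish.
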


\begin{proof}[\sc Proof of Lemma \ref{lem:fwl}]
By the formula for the inverse of a partitioned matrix, the first $k_1$ rows of $(\mathbb X' \mathbb X)^{-1}$ equal
\[\begin{pmatrix}
(\mathbb X_1' \mathbb M_2 \mathbb X_1)^{-1} & - (\mathbb X_1' \mathbb M_2 \mathbb X_1)^{-1} \mathbb X_1' \mathbb X_2 (\mathbb X_2' \mathbb X_2)^{-1}
\end{pmatrix}~. \]
Furthermore,
\[ \mathbb X' \mathrm{diag}(\hat \epsilon_i^2: 1 \leq i \leq n) \mathbb X = \begin{pmatrix}
\mathbb X_1' \mathrm{diag}(\hat \epsilon_i^2: 1 \leq i \leq n) \mathbb X_1 & \mathbb X_1' \mathrm{diag}(\hat \epsilon_i^2: 1 \leq i \leq n) \mathbb X_2 \\
\mathbb X_2' \mathrm{diag}(\hat \epsilon_i^2: 1 \leq i \leq n) \mathbb X_1 & \mathbb X_2' \mathrm{diag}(\hat \epsilon_i^2: 1 \leq i \leq n) \mathbb X_2
\end{pmatrix}~. \]
The conclusion then follows from elementary calculations.
\end{proof}

\section{Additional Tables and Figures}\label{sec:additional-simulations}
\mycomment{
\subsection{MSEs under Additional Designs}
In this section, we include four additional designs that are not discussed in the paper.
\begin{enumerate}
    \item \textbf{MP-C} A matched-pair design for $D_i^{(1)}$ only and a Completely Randomized Design for $D_i^{(2)}$.
    \item \textbf{B-MP} A matched-pair design for $D_i^{(2)}$ only and a Bernoulli(1/2) for $D_i^{(1)}$.
    \item \textbf{C-MP} A matched-pair design for $D_i^{(2)}$ only and a Completely Randomized Design for $D_i^{(1)}$.
    \item \textbf{MP-MP} Matched-pair designs for both $D_i^{(1)}$ and $D_i^{(2)}$.
\end{enumerate}
Table \ref{table:mse2} compares Mean Squared Errors (MSEs) of all designs from Section \ref{sec:sims-mse} as well as the four extra designs above.  

\begin{table}[ht!]
\centering
\setlength{\tabcolsep}{3pt}
\begin{adjustbox}{max width=0.9\linewidth,center}
\begin{tabular}{lllllllllllll}
\toprule
Model              & Parameter & \textbf{B-B} & \textbf{C} & \textbf{MP-B} & \textbf{MP-C} & \textbf{B-MP} & \textbf{C-MP}  & \textbf{MP-MP} & \textbf{MT} & \textbf{Large-2} & \textbf{Large-4} & \textbf{RE}    \\
\midrule
\multirow{5}{*}{1} &   $ \Delta_{\nu_1}$       & 1.813 & 1.858 & 0.913 & 0.890 & 1.860 & 1.793 & 0.929 & 1.000 & 1.283 & 1.090 & 0.956 \\
& $ \Delta_{\nu_2}$ & 2.103 & 1.995 & 2.076 & 2.027 & 1.025 & 0.997 & 1.015 & 1.000 & 1.276 & 1.166 & 1.044 \\
& $ \Delta_{\nu_{1,2}}$ & 1.898 & 1.912 & 1.830 & 2.004 & 1.874 & 1.954 & 2.866 & 1.000 & 1.312 & 1.051 & 0.955 \\
&   ${\Delta}_{\nu_1^1}$        & 1.786 & 1.859 & 1.274 & 1.401 & 1.877 & 1.867 & 1.888 & 1.000 & 1.228 & 1.053 & 0.898 \\
&   ${\Delta}_{\nu_{-1}^1}$       &  1.927 & 1.910 & 1.457 & 1.474 & 1.856 & 1.878 & 1.871 & 1.000 & 1.369 & 1.089 & 1.016 \\
\\
\multirow{5}{*}{2} &   $ \Delta_{\nu_1}$     &  2.350 & 2.309 & 1.040 & 1.016 & 2.348 & 2.293 & 0.971 & 1.000 & 1.617 & 1.326 & 1.218 \\
& $ \Delta_{\nu_2}$ & 2.304 & 2.211 & 2.228 & 2.233 & 1.057 & 1.004 & 0.966 & 1.000 & 1.607 & 1.267 & 1.214 \\
& $ \Delta_{\nu_{1,2}}$ & 2.208 & 2.192 & 2.196 & 2.273 & 2.314 & 2.317 & 3.514 & 1.000 & 1.647 & 1.294 & 1.197 \\
&   ${\Delta}_{\nu_1^1}$        & 2.361 & 2.177 & 1.607 & 1.641 & 2.185 & 2.266 & 2.237 & 1.000 & 1.686 & 1.320 & 1.212 \\
&   ${\Delta}_{\nu_{-1}^1}$       &  2.194 & 2.325 & 1.634 & 1.652 & 2.480 & 2.345 & 2.257 & 1.000 & 1.577 & 1.300 & 1.203 \\
\\
\multirow{5}{*}{3} &   $ \Delta_{\nu_1}$       & 2.064 & 1.992 & 1.766 & 1.733 & 1.893 & 2.017 & 2.399 & 1.000 & 1.427 & 1.193 & 1.095 \\
& $ \Delta_{\nu_2}$ & 1.704 & 1.595 & 1.534 & 1.491 & 1.500 & 1.498 & 1.683 & 1.000 & 1.332 & 1.226 & 1.156 \\
& $ \Delta_{\nu_{1,2}}$ & 3.344 & 3.237 & 1.996 & 2.074 & 2.438 & 2.604 & 1.724 & 1.000 & 1.993 & 1.424 & 1.260 \\
&   ${\Delta}_{\nu_1^1}$        & 3.367 & 3.325 & 2.111 & 2.113 & 2.986 & 3.315 & 3.128 & 1.000 & 2.023 & 1.413 & 1.174 \\
&   ${\Delta}_{\nu_{-1}^1}$       &  2.027 & 1.912 & 1.686 & 1.701 & 1.522 & 1.538 & 1.485 & 1.000 & 1.398 & 1.197 & 1.149 \\
\\
\multirow{5}{*}{4} &   $ \Delta_{\nu_1}$       & 1.294 & 1.270 & 1.197 & 1.227 & 1.251 & 1.225 & 1.295 & 1.000 & 1.129 & 1.039 & 1.091 \\
& $ \Delta_{\nu_2}$ & 1.211 & 1.257 & 1.124 & 1.132 & 1.127 & 1.104 & 1.193 & 1.000 & 1.031 & 0.999 & 1.110 \\
& $ \Delta_{\nu_{1,2}}$ & 1.280 & 1.282 & 1.151 & 1.166 & 1.268 & 1.250 & 1.187 & 1.000 & 1.180 & 1.125 & 1.231 \\
&   ${\Delta}_{\nu_1^1}$        & 1.440 & 1.377 & 1.235 & 1.235 & 1.403 & 1.393 & 1.415 & 1.000 & 1.170 & 1.081 & 1.225 \\
&   ${\Delta}_{\nu_{-1}^1}$       &  1.147 & 1.183 & 1.120 & 1.163 & 1.128 & 1.094 & 1.084 & 1.000 & 1.139 & 1.082 & 1.100 \\
\\
\multirow{5}{*}{5} &   $ \Delta_{\nu_1}$       & 1.639 & 1.655 & 1.313 & 1.422 & 1.581 & 1.558 & 1.453 & 1.000 & 1.365 & 1.306 & 1.338 \\
& $ \Delta_{\nu_2}$ & 1.454 & 1.360 & 1.336 & 1.334 & 1.218 & 1.105 & 1.209 & 1.000 & 1.145 & 1.157 & 1.215 \\
& $ \Delta_{\nu_{1,2}}$ & 1.694 & 1.692 & 1.450 & 1.475 & 1.546 & 1.517 & 1.630 & 1.000 & 1.377 & 1.261 & 1.510 \\
&   ${\Delta}_{\nu_1^1}$        & 1.933 & 1.902 & 1.418 & 1.474 & 1.884 & 1.726 & 1.783 & 1.000 & 1.415 & 1.266 & 1.434 \\
&   ${\Delta}_{\nu_{-1}^1}$       &  1.434 & 1.474 & 1.348 & 1.426 & 1.285 & 1.374 & 1.330 & 1.000 & 1.333 & 1.299 & 1.413 \\
\\
\multirow{5}{*}{6} &   $ \Delta_{\nu_1}$       & 1.171 & 1.099 & 1.095 & 1.062 & 1.176 & 1.165 & 1.157 & 1.000 & 1.077 & 1.056 & 1.101 \\
& $ \Delta_{\nu_2}$ & 1.140 & 1.044 & 1.117 & 1.034 & 1.097 & 1.047 & 1.090 & 1.000 & 1.019 & 1.080 & 1.087 \\
& $ \Delta_{\nu_{1,2}}$ & 1.190 & 1.158 & 1.097 & 1.115 & 1.236 & 1.077 & 1.079 & 1.000 & 1.107 & 1.109 & 1.063 \\
&   ${\Delta}_{\nu_1^1}$        & 1.153 & 1.096 & 1.067 & 1.064 & 1.203 & 1.108 & 1.112 & 1.000 & 1.063 & 1.078 & 1.080 \\
&   ${\Delta}_{\nu_{-1}^1}$       &  1.327 & 1.291 & 1.246 & 1.213 & 1.218 & 1.193 & 1.149 & 1.000 & 1.245 & 1.106 & 1.092 \\
\bottomrule
\end{tabular}
\end{adjustbox}
\caption{Ratio of MSE under all designs against those under matched tuples}
\label{table:mse2}
\end{table}

}
\subsection{Power Plots}
In Section \ref{sec:sims-multcovs}, we presented truncated power plots for the first and third configurations in order to make the horizontal axes the same as that of the second power plot. Here we present plots showing the entire ``S'' shape of the power curves for \textbf{MT} and \textbf{MT2} under all three configurations. 

\begin{figure}[ht!]
\centering
\includegraphics[width=\textwidth]{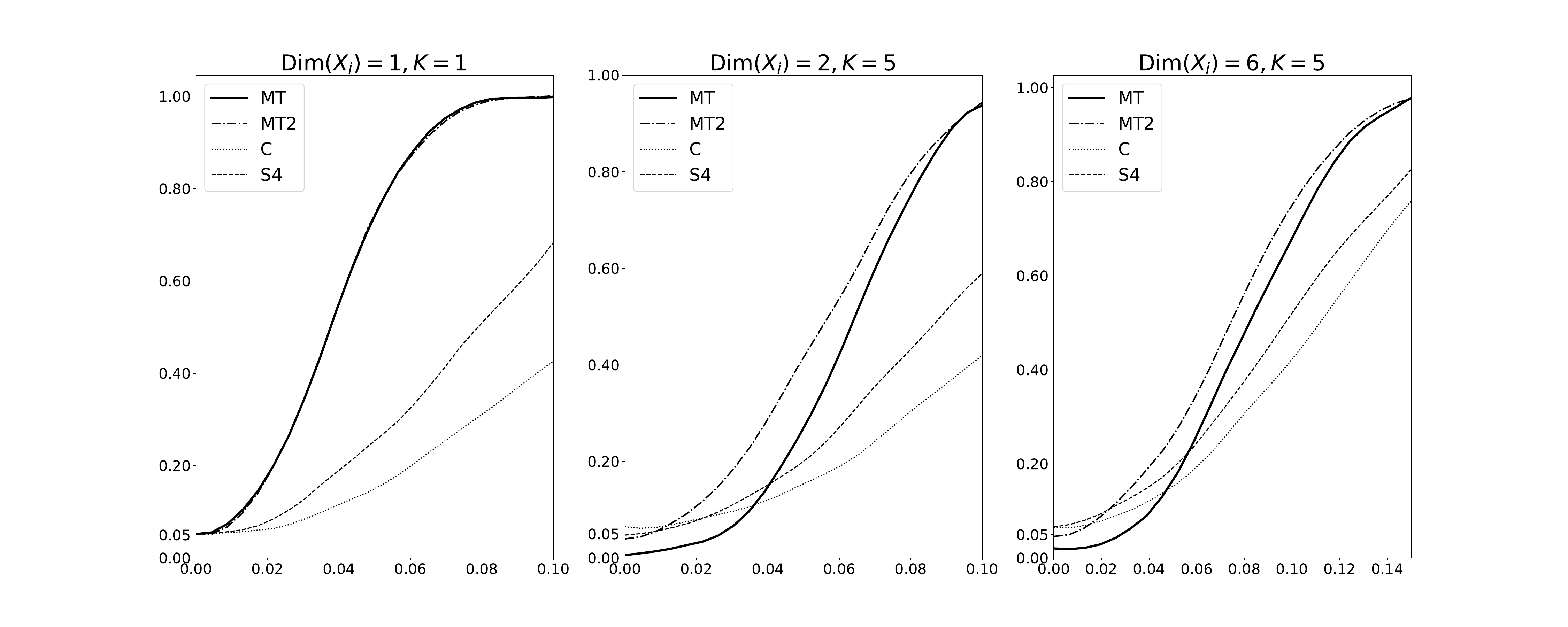}
\caption{Reject probability under various $\tau$s for the alternative hypothesis}
\label{fig:power_plots2}
\end{figure}

\subsection{Comparing Super Population and Finite Population Inference}\label{sec:fin_pop}
In this section, we compare the coverage properties of confidence intervals constructed using our proposed variance estimator versus two other well-known estimators, under both the super and finite population approaches to inference. First, we revisit the setting introduced in Section \ref{sec:sims-inference}, but now we consider only the matched-tuples design {\bf (MT)}, and construct confidence intervals for the parameter $\Delta_{\nu^1_{-1}}$ using one of three variance estimators:
\begin{enumerate}
    \item the variance estimator $\hat{\mathbb{V}}_{\nu,n}$ introduced in Section \ref{sec:main_tuple},
    \item a standard heteroskedasticity-robust variance estimator obtained from the regression in \eqref{eq:sfe}, and
    \item the block-cluster variance estimator considered in Theorem \ref{thm:bcve}.
\end{enumerate}
For the super population simulations, we generate the data as in Section \ref{sec:sims-inference}. For the finite population simulations, we simply use each DGP to generate the covariates and outcomes \emph{once}, and then fix these in repeated samples.

Table \ref{table:finite_population} presents coverage probabilities and average confidence interval lengths (in parentheses) with varying sample sizes, based on $2,000$ Monte Carlo replications. As expected given our theoretical results, $\hat{\mathbb{V}}_{\nu,n}$ delivers exact coverage in large samples under the super-population framework in all cases, whereas the robust variance estimator and BCVE are both generally conservative. In the finite population framework, we find that both $\hat{\mathbb{V}}_{\nu,n}$ and BCVE deliver exact coverage for some model specifications in large populations, but all three methods are generally conservative. $\hat{\mathbb{V}}_{\nu,n}$ displays some under-coverage in small populations relative to BCVE, but as the population size increases, $\hat{\mathbb{V}}_{\nu,n}$ generally produces narrower confidence intervals. 

Next, we repeat the above exercise using a calibrated simulation design analogous to that used in Section \ref{sec:sims-multcovs}, but utilizing the wave 6 data from \cite{McKenzie2014}. To construct our data generating process, we run an OLS regression of $Y_i$ on a constant and the seven covariates $X_i$ employed for matching, obtaining $\hat\beta$ and residuals $\hat\epsilon$. Subsequently, for $d \in \{0, 1, 2\}$ we compute $Y_i(d)$ based on the following model:
\begin{align*}
Y_i(d) = X_i^\prime \hat\beta + (X_i - \bar X_i)^\prime\hat{\beta}\cdot \gamma \cdot d + \epsilon_i~,
\end{align*}
with $X_i$ drawn from the empirical distribution of the data and $\epsilon_i \sim N(0, \text{var}(\hat\epsilon))$. Note that when $\gamma = 0$ we obtain a model with a constant treatment effect of zero, but that as $\gamma$ increases so does the amount of treatment effect heterogeneity. For the super-population simulations, the data is re-generated for each of the Monte Carlo replications. For the finite population simulations, the data is generated only \emph{once} and then fixed in repeated samples. In each experimental assignment we match the units into triplets and assign one unit to each of $d \in \{0, 1, 2\}$.

Table \ref{table:finite_population_empirical} presents coverage probabilities and average confidence interval lengths (in parentheses) for the parameter $\Delta_\nu = E[Y_i(1) - Y_i(0)]$, based on 2,000 Monte Carlo replications. Our first observation is that given the results for $\gamma = 0$, it is clear that the covariates $X_i$ explain little of the variation in experimental outcomes in our simulation design since all three variance estimators obtain exact coverage. However, as we artificially increase the amount of treatment effect heterogeneity by increasing the parameter $\gamma$, we find that, in line with our theoretical results, both the robust variance estimator and BCVE become slightly conservative. Moreover, in the finite population framework, $\hat{\mathbb{V}}_{\nu ,n}$ starts to become conservative as well.



\begin{table}[ht!]
\begin{adjustbox}{max width=\linewidth,center}
\begin{tabular}{cllllllllllll}
\toprule
\multicolumn{1}{l}{} & &  \multicolumn{5}{c}{Super Population}                     &  & \multicolumn{5}{c}{Finite Population}       \\ \cmidrule{3-7} \cmidrule{9-13}
Model                &     Method       & 4$n$=40  & 4$n$=80 & 4$n$=160 & 4$n$=480 & 4$n$=1000 &  & 4$n$=40  & 4$n$=80 & 4$n$=160 & 4$n$=480 & 4$n$=1000 \\
\midrule
\multirow{6}{*}{1}   & $\hat{\mathbb{V}}_{\nu,n}$ & 0.9340 & 0.9445 & 0.9435 & 0.9460 & 0.9470 && 0.9620 & 0.9550 & 0.9335 & 0.9445 & 0.9535\\
                                          &                            & (1.810) & (1.253) & (0.881) & (0.508) & (0.351) && (2.002) & (1.547) & (0.923) & (0.480) & (0.354)\\
                                          & Robust                     & 0.9855 & 0.9910 & 0.9930 & 0.9890 & 0.9920 && 0.9905 & 0.9895 & 0.9860 & 0.9950 & 0.9970\\
                                          &                            & (2.375) & (1.727) & (1.226) & (0.714) & (0.495) && (2.373) & (1.891) & (1.208) & (0.702) & (0.506)\\
                                          & BCVE                       & 0.9350 & 0.9470 & 0.9400 & 0.9455 & 0.9455 && 0.9185 & 0.9390 & 0.9405 & 0.9470 & 0.9525\\
                                          &                            & (1.821) & (1.262) & (0.885) & (0.509) & (0.351) && (1.822) & (1.475) & (0.938) & (0.483) & (0.354)\\\\
\multirow{6}{*}{2}   & $\hat{\mathbb{V}}_{\nu,n}$ & 0.9295 & 0.9395 & 0.9400 & 0.9525 & 0.9505 && 0.9495 & 0.9375 & 0.9405 & 0.9370 & 0.9520\\
                                          &                            & (1.897) & (1.299) & (0.896) & (0.509) & (0.352) && (1.829) & (1.309) & (0.848) & (0.505) & (0.354)\\
                                          & Robust                     & 0.9850 & 0.9905 & 0.9955 & 0.9965 & 0.9955 && 0.9870 & 0.9820 & 0.9970 & 0.9945 & 0.9980\\
                                          &                            & (2.489) & (1.809) & (1.290) & (0.751) & (0.522) && (2.337) & (1.560) & (1.354) & (0.749) & (0.540)\\
                                          & BCVE                       & 0.9185 & 0.9395 & 0.9415 & 0.9545 & 0.9515 && 0.9340 & 0.9395 & 0.9425 & 0.9415 & 0.9530\\
                                          &                            & (1.858) & (1.282) & (0.893) & (0.508) & (0.352) && (1.789) & (1.311) & (0.852) & (0.518) & (0.356)\\\\
\multirow{6}{*}{3}   & $\hat{\mathbb{V}}_{\nu,n}$ & 0.9445 & 0.9545 & 0.9600 & 0.9435 & 0.9450 && 0.9970 & 0.9790 & 0.9975 & 0.9890 & 0.9945\\
                                          &                            & (2.499) & (1.702) & (1.193) & (0.679) & (0.469) && (2.439) & (1.710) & (1.144) & (0.686) & (0.468)\\
                                          & Robust                     & 0.9800 & 0.9915 & 0.9920 & 0.9905 & 0.9910 && 1.0000 & 0.9985 & 1.0000 & 0.9995 & 1.0000\\
                                          &                            & (3.080) & (2.222) & (1.593) & (0.922) & (0.640) && (3.112) & (2.228) & (1.485) & (0.916) & (0.654)\\
                                          & BCVE                       & 0.9915 & 0.9940 & 0.9980 & 0.9960 & 0.9965 && 0.9995 & 0.9995 & 1.0000 & 1.0000 & 1.0000\\
                                          &                            & (3.748) & (2.578) & (1.811) & (1.032) & (0.714) && (3.766) & (2.628) & (1.729) & (1.015) & (0.709)\\\\
\multirow{6}{*}{4}   & $\hat{\mathbb{V}}_{\nu,n}$ & 0.9355 & 0.9480 & 0.9375 & 0.9445 & 0.9470 && 0.9310 & 0.9345 & 0.9540 & 0.9535 & 0.9640\\
                                          &                            & (1.889) & (1.319) & (0.927) & (0.534) & (0.371) && (1.674) & (1.292) & (1.015) & (0.562) & (0.373)\\
                                          & Robust                     & 0.9470 & 0.9680 & 0.9580 & 0.9635 & 0.9655 && 0.9435 & 0.9560 & 0.9695 & 0.9685 & 0.9770\\
                                          &                            & (1.931) & (1.406) & (1.005) & (0.584) & (0.406) && (1.751) & (1.410) & (1.085) & (0.599) & (0.407)\\
                                          & BCVE                       & 0.9550 & 0.9740 & 0.9700 & 0.9710 & 0.9750 && 0.9730 & 0.9760 & 0.9750 & 0.9760 & 0.9815\\
                                          &                            & (2.208) & (1.543) & (1.077) & (0.617) & (0.428) && (2.190) & (1.572) & (1.149) & (0.655) & (0.432)\\\\
\multirow{6}{*}{5}   & $\hat{\mathbb{V}}_{\nu,n}$ & 0.9315 & 0.9435 & 0.9495 & 0.9465 & 0.9530 && 0.9620 & 0.9615 & 0.9735 & 0.9625 & 0.9680\\
                                          &                            & (2.012) & (1.386) & (0.962) & (0.550) & (0.381) && (2.244) & (1.153) & (0.975) & (0.554) & (0.377)\\
                                          & Robust                     & 0.9530 & 0.9660 & 0.9790 & 0.9770 & 0.9850 && 0.9805 & 0.9870 & 0.9950 & 0.9870 & 0.9875\\
                                          &                            & (2.152) & (1.570) & (1.117) & (0.650) & (0.452) && (2.472) & (1.415) & (1.162) & (0.655) & (0.448)\\
                                          & BCVE                       & 0.9615 & 0.9730 & 0.9790 & 0.9785 & 0.9845 && 0.9610 & 0.9915 & 0.9930 & 0.9880 & 0.9870\\
                                          &                            & (2.419) & (1.667) & (1.155) & (0.662) & (0.458) && (2.506) & (1.530) & (1.151) & (0.656) & (0.453)\\\\
\multirow{6}{*}{6}   & $\hat{\mathbb{V}}_{\nu,n}$ & 0.9065 & 0.9290 & 0.9305 & 0.9425 & 0.9505 && 0.9105 & 0.9675 & 0.9655 & 0.9715 & 0.9665\\
                                          &                            & (4.730) & (3.361) & (2.388) & (1.388) & (0.961) && (4.846) & (3.244) & (2.233) & (1.425) & (1.025)\\
                                          & Robust                     & 0.9425 & 0.9600 & 0.9615 & 0.9660 & 0.9670 && 0.9625 & 0.9835 & 0.9855 & 0.9835 & 0.9765\\
                                          &                            & (5.001) & (3.624) & (2.606) & (1.521) & (1.055) && (5.392) & (3.449) & (2.437) & (1.549) & (1.090)\\
                                          & BCVE                       & 0.9560 & 0.9675 & 0.9660 & 0.9725 & 0.9735 && 0.9670 & 0.9875 & 0.9865 & 0.9865 & 0.9860\\
                                          &                            & (5.623) & (3.930) & (2.767) & (1.595) & (1.101) && (5.886) & (3.812) & (2.537) & (1.611) & (1.166)\\
\toprule
\end{tabular}
\end{adjustbox}
\caption{Coverage rate and average CI length (parentheses) under the super and finite population approaches to inference}
\label{table:finite_population}
\end{table}

\begin{table}[ht!]
\begin{adjustbox}{max width=\linewidth,center}
\begin{tabular}{cllllllllllll}
\toprule
\multicolumn{1}{l}{} & &  \multicolumn{5}{c}{Super Population}                     &  & \multicolumn{5}{c}{Finite Population}       \\ \cmidrule{3-7} \cmidrule{9-13}
Model                &     Method       & 3$n$=60 & 3$n$=120 & 3$n$=360 & 3$n$=750 & 3$n$=1200 &  & 3$n$=60 & 3$n$=120 & 3$n$=360 & 3$n$=750 & 3$n$=1200 \\
\midrule
\multirow{6}{*}{$\gamma=0$}   & $\hat{\mathbb{V}}_{\nu,n}$ & 0.949 & 0.943 & 0.946 & 0.946 & 0.952 & & 0.950 & 0.940 & 0.955 & 0.946 & 0.953 \\
&&(225.457) & (160.525) & (92.715) & (64.226) & (50.706) & & (225.896) & (159.946) & (92.607) & (64.235) & (50.771) \\
&Robust&0.950 & 0.943 & 0.950 & 0.947 & 0.952 & & 0.947 & 0.943 & 0.955 & 0.951 & 0.955 \\
&&(223.224) & (160.560) & (93.791) & (65.160) & (51.503) & & (224.081) & (160.511) & (93.731) & (65.128) & (51.553) \\
&BCVE&0.948 & 0.938 & 0.943 & 0.940 & 0.946 & & 0.953 & 0.944 & 0.954 & 0.943 & 0.950 \\
&&(229.461) & (162.261) & (92.762) & (64.198) & (50.674) & & (230.041) & (161.019) & (92.765) & (64.089) & (50.685) \\\\
\multirow{6}{*}{$\gamma=1$}   & $\hat{\mathbb{V}}_{\nu,n}$ & 0.940 & 0.946 & 0.953 & 0.960 & 0.959 & & 0.946 & 0.941 & 0.947 & 0.948 & 0.953 \\
&&(229.287) & (164.518) & (94.925) & (65.239) & (51.591) & & (233.870) & (165.423) & (94.580) & (65.390) & (51.554) \\
&Robust&0.936 & 0.955 & 0.961 & 0.970 & 0.963 & & 0.945 & 0.950 & 0.954 & 0.958 & 0.960 \\
&&(230.262) & (166.659) & (97.449) & (67.499) & (53.449) & & (232.131) & (167.113) & (97.281) & (67.482) & (53.420) \\
&BCVE&0.936 & 0.945 & 0.957 & 0.961 & 0.959 & & 0.949 & 0.946 & 0.950 & 0.950 & 0.956 \\
&&(232.063) & (165.622) & (95.388) & (65.468) & (51.662) & & (237.561) & (166.805) & (94.836) & (65.553) & (51.658) \\ \\              
\multirow{6}{*}{$\gamma=3$}   & $\hat{\mathbb{V}}_{\nu,n}$ & 0.947 & 0.949 & 0.963 & 0.966 & 0.957 & & 0.948 & 0.952 & 0.953 & 0.947 & 0.952 \\
&&(251.942) & (180.451) & (101.057) & (70.280) & (55.300) & & (253.653) & (177.162) & (102.184) & (70.042) & (55.324) \\
&Robust&0.961 & 0.962 & 0.978 & 0.977 & 0.975 & & 0.951 & 0.961 & 0.962 & 0.968 & 0.968 \\
&&(255.377) & (188.130) & (108.362) & (76.242) & (60.466) & & (257.964) & (185.413) & (109.376) & (75.993) & (60.422) \\
&BCVE&0.947 & 0.955 & 0.969 & 0.971 & 0.963 & & 0.958 & 0.957 & 0.954 & 0.959 & 0.961 \\
&&(256.837) & (185.391) & (103.913) & (72.470) & (57.259) & & (260.735) & (181.843) & (105.186) & (72.325) & (57.091) \\\\
\multirow{6}{*}{$\gamma=5$}   & $\hat{\mathbb{V}}_{\nu,n}$ & 0.945 & 0.947 & 0.966 & 0.964 & 0.957 & & 0.940 & 0.959 & 0.978 & 0.968 & 0.966 \\
&&(285.897) & (199.748) & (111.957) & (78.191) & (60.960) & & (284.327) & (200.163) & (113.900) & (77.267) & (60.890) \\
&Robust&0.959 & 0.965 & 0.986 & 0.981 & 0.977 & & 0.955 & 0.970 & 0.986 & 0.983 & 0.982 \\
&&(295.771) & (215.171) & (125.135) & (88.824) & (70.149) & & (293.489) & (215.318) & (127.164) & (88.177) & (70.040) \\
&BCVE&0.949 & 0.958 & 0.975 & 0.976 & 0.970 & & 0.949 & 0.962 & 0.981 & 0.975 & 0.975 \\
&&(296.164) & (209.731) & (119.286) & (83.916) & (65.873) & & (293.557) & (209.593) & (121.447) & (83.287) & (65.842) \\
\toprule
\end{tabular}
\end{adjustbox}
\caption{Coverage rate and average CI length (parentheses) under the super and finite population approaches to inference}
\label{table:finite_population_empirical}
\end{table}

\subsection{Calibrated Simulation Design Details}\label{sec:calibrated_details}
In this section we provide details for the calibrated simulation study considered in Section \ref{sec:sims-multcovs}. Following \cite{rubin2016}, we consider data obtained from the New York Department of Education, who were considering implementing a $2^5$ factorial experiment to study five new intervention programs: a quality review, a periodic assessment, inquiry teams, a school-wide performance bonus program and an online resource program; details about each of these programs can be found in \cite{dasgupta2015causal}. The data-set contains covariate information for $1,376$ schools. As in \cite{rubin2016}, we consider experimental designs constructed using nine covariates which were deemed likely to be correlated with schools' performance scores: total number of students, proportion of male students, enrollment rate, poverty rate, and five additional variables recording the proportion of students of various races. 

Since the NYDE has yet to run such an experiment, and given the limitations of the available dataset, we select one covariate (``number of teachers") from the original dataset to use as the potential outcome under control, and then construct the potential outcomes under the various treatment combinations using the model described in Section \ref{sec:sims-multcovs}. Specifically, we first demean and standardize all 9 covariates (denoted $\tilde{X}_i$), and then estimate a parameter vector $\beta$ by ordinary least squares in the following linear model specification for $Y_i(-1, -1, \ldots, -1)$:
\begin{equation}\label{eq:regression}
    Y_i(-1,-1, \ldots, -1) = \gamma_{(-1, -1, \ldots , -1)} \tilde{X}_i' \beta + \epsilon_{i}~,
\end{equation}
where $\gamma_{(-1, -1, \ldots, -1)} = - 1$ as defined in Section \ref{sec:sims-multcovs}. Table \ref{table:regression} presents the regression results. For each treatment combination $d$, we then compute $Y_i(d)$ using the model from Section \ref{sec:sims-multcovs} given by
\[Y_i(d) = \tau\cdot\left(d^{(1)} + \frac{\sum_{k=2}^K d^{(k)} }{K-1}\right) + \gamma_{d}\tilde{X}_i' \beta + \epsilon_{i}~,\]
where $\tilde{X}_i$ is drawn from the empirical distribution of the data and  $\epsilon_i \sim N(0, 0.1)$, where we note that $0.1$ is approximately equal to the sample variance of the residuals of the regression in \eqref{eq:regression}. 
\begin{table}
\centering
\begin{tabular}{lcccccc}
\toprule
               & \textbf{coef} & \textbf{std err} & \textbf{z} & \textbf{P$> |$z$|$} & \textbf{[0.025} & \textbf{0.975]}  \\
\midrule
\textbf{constant}       &    2.824e-06  &        0.007     &     0.000  &         1.000        &       -0.014    &        0.014     \\
\textbf{Total}          &      -0.9808  &        0.016     &   -60.609  &         0.000        &       -1.012    &       -0.949     \\
\textbf{nativeAmerican} &       0.0374  &        0.054     &     0.699  &         0.485        &       -0.068    &        0.143     \\
\textbf{black}          &       2.9378  &        3.175     &     0.925  &         0.355        &       -3.285    &        9.160     \\
\textbf{latino}         &       2.6158  &        2.836     &     0.922  &         0.356        &       -2.942    &        8.174     \\
\textbf{asian}          &       1.6866  &        1.822     &     0.926  &         0.355        &       -1.884    &        5.258     \\
\textbf{white}          &       1.9064  &        2.150     &     0.887  &         0.375        &       -2.308    &        6.121     \\
\textbf{male}           &      -0.0379  &        0.007     &    -5.355  &         0.000        &       -0.052    &       -0.024     \\
\textbf{stability}      &       0.0045  &        0.007     &     0.636  &         0.525        &       -0.009    &        0.018     \\
\textbf{povertyRate}    &      -0.1818  &        0.011     &   -16.350  &         0.000        &       -0.204    &       -0.160     \\
\bottomrule
\end{tabular}
\caption{Model \eqref{eq:regression} OLS Regression Results}
\label{table:regression}
\end{table}

\subsection{More Results for the Empirical Application}
In this section we repeat our analysis for the data on long-term effects obtained through the final round (wave 7) of surveys from the original paper. For the analysis of long-term effects, we follow the same procedure as in the original paper, except we additionally drop the four groups with sizes ranging from 5 to 8. Note that the estimated effects are different for the fixed-effect regression. This is because, as in the analysis in the original paper, we do \emph{not} drop entire quadruplets from our dataset whenever one member of the quadruplet was missing due to non-response in the final survey round. 
\mycomment{
\begin{table}[ht!]
\centering
\setlength{\tabcolsep}{4pt}
\caption{Point estimates and standard errors for testing the treatment effects of cash and in-kind grants
using different methods (wave 6)}
\label{table:application-wave6-full}
\begin{adjustbox}{max width=\linewidth,center}
\begin{tabular}{cccccccccccc}
\toprule
& & &   \multicolumn{5}{c}{All} & & High initial & & Low initial \\ \cmidrule{4-8}  \cmidrule{10-10} \cmidrule{12-12} 
&  &  &Firms & & Males &  & Females & & Profit women & & Profit women \\ \cmidrule{4-4} \cmidrule{6-6} \cmidrule{8-8} \cmidrule{10-10} \cmidrule{12-12}
&  &  & (1) & & (2) &  & (3) & & (4) & & (5) \\
\midrule
& & Cash treatment & $19.64^{*}$  & & 24.84  & & 16.30  & & 33.09  & & 7.01 \\
OLS & &   & (10.76)  & & (18.74)  & & (12.90)  & & (31.94)  & & (9.35) \\
with group & & In-kind treatment & $20.26^{*}$  & & 4.48  & & $30.42^{*}$  & & 65.36  & & 11.10 \\
fixed effects& & & (11.85)  & & (15.48)  & & (16.66)  & & (41.17)  & & (11.86) \\
& & Cash=in-kind ($p$-val) & 0.965  & & 0.333  & & 0.452  & & 0.487  & & 0.761 \\
\\
& &Cash treatment & 19.64  & & 24.84  & & 16.30  & & 33.09  & & 7.01 \\
OLS & &  & (15.42)  & & (27.29)  & & (18.13)  & & (42.56)  & & (11.58) \\
without group& & In-kind treatment & 20.26  & & 4.48  & & 30.42  & & 65.36  & & 11.10 \\
fixed effects& &  & (15.67)  & & (18.42)  & & (22.83)  & & (53.28)  & & (15.31) \\
&& Cash=in-kind ($p$-val) & 0.975  & & 0.493  & & 0.600  & & 0.610  & & 0.817 \\
\\
OLS& &Cash treatment & 19.64  & & 24.84  & & 16.30  & & 33.09  & & 7.01 \\
without group & &  & (12.59)  & & (23.68)  & & (14.20)  & & (35.72)  & & (10.33) \\
fixed effects& & In-kind treatment & 20.26  & & 4.48  & & 30.42  & & 65.36  & & 11.10 \\
under clustered SE& &  & (14.44)  & & (18.18)  & & (20.71)  & & (52.03)  & & (14.72) \\
&& Cash=in-kind ($p$-val) & 0.971  & & 0.414  & & 0.543  & & 0.582  & & 0.808 \\
\\
\multirow{5}{*}{Matched-Tuples} & & Cash treatment & 19.64  & & 24.84  & & 16.30  & & 33.09  & & 7.01 \\
 & &  & (14.24)  & & (26.05)  & & (15.21)  & & (39.27)  & & (11.15) \\
& & In-kind treatment & 20.26  & & 4.48  & & 30.42  & & 65.36  & & 11.10 \\
& &  & (15.24)  & & (17.79)  & & (21.97)  & & (48.27)  & & (14.99) \\
&& cash=in-kind ($p$-val) & 0.974  & & 0.468  & & 0.567  & & 0.576  & & 0.815 \\
\toprule
\end{tabular}
\end{adjustbox}
\end{table}
}
\begin{table}[ht!]
\centering
\setlength{\tabcolsep}{4pt}
\label{table:application-wave7}
\begin{adjustbox}{max width=\linewidth,center}
\begin{tabular}{cccccccccccc}
\toprule
& & &   \multicolumn{5}{c}{All} & & High initial & & Low initial \\ \cmidrule{4-8}  \cmidrule{10-10} \cmidrule{12-12} 
&  &  &firms & & Males &  & Females & & Profit women & & Profit women \\ \cmidrule{4-4} \cmidrule{6-6} \cmidrule{8-8} \cmidrule{10-10} \cmidrule{12-12}
&  &  & (1) & & (2) &  & (3) & & (4) & & (5) \\
\midrule
& &Cash treatment & 18.02 & & 56.17 &  & -8.43 & & -15.32 & & -3.84 \\
OLS & &  & (29.66) & & (67.95) &  & (18.25) & & (38.99) & & (17.14) \\
without group& & In-kind treatment & 31.59 & & 62.02 &  & 4.63 & & 42.10 & & -13.40 \\
fixed effects& &  & (21.63) & & (40.60) &  & (20.97) & & (48.82) & & (16.08) \\
&& Cash=in-kind ($p$-val) & 0.680 & & 0.938 &  & 0.484 & & 0.171 & & 0.554 \\
\\
\multirow{5}{*}{Matched-Tuples} & & Cash treatment & 18.02 & & 56.17 &  & -8.43 & & -15.32 & & -3.84 \\
 & &  & (26.07) & & (60.09) &  & (17.25) & & (42.10) & & (16.60) \\
& & In-kind treatment & 31.59 & & 62.02 &  & 4.63 & & 42.10 & & -13.40 \\
& &  & (19.47) & & (39.02) &  & (18.57) & & (45.30) & & (14.32) \\
&& Cash=in-kind ($p$-val) & 0.641 & & 0.931 &  & 0.456 & & 0.147 & & 0.556 \\
\toprule
\end{tabular}
\end{adjustbox}
\caption{Point estimates and standard errors for testing the treatment effects of cash and in-kind grants
using different methods (wave 7)}
\end{table}

\mycomment{
\begin{table}[ht!]
\begin{adjustbox}{max width=\linewidth,center}
\begin{tabular}{cllllllllllll}
\toprule
\multicolumn{1}{l}{} & &  \multicolumn{5}{c}{Super Population}                     &  & \multicolumn{5}{c}{Finite Population}       \\ \cmidrule{3-7} \cmidrule{9-13}
Model                &     Method       & 3$n$=60 & 3$n$=120 & 3$n$=360 & 3$n$=750 & 3$n$=1200 &  & 3$n$=60 & 3$n$=120 & 3$n$=360 & 3$n$=750 & 3$n$=1200 \\
\midrule
\multirow{6}{*}{Homogeneous}   & $\hat{\mathbb{V}}_{\nu,n}$ & 0.956 & 0.963 & 0.966 & 0.973 & 0.977 & & 0.939 & 0.972 & 0.962 & 0.974 & 0.958 \\
                                          &                            & (175.965) & (128.679) & (76.461) & (53.610) & (41.208) & & (187.277) & (144.302) & (83.842) & (57.434) & (44.227) \\
                                          & Robust                     & 0.977 & 0.978 & 0.990 & 0.994 & 0.995 && 0.954 & 0.979 & 0.976 & 0.990 & 0.982    \\
                                          &                            & (190.441) & (140.168) & (87.562) & (64.009) & (50.575) && (187.953) & (147.688) & (90.007) & (64.158) & (51.057) \\
                                          & BCVE                       & 0.954 & 0.954 & 0.959 & 0.953 & 0.953 & & 0.948 & 0.975 & 0.958 & 0.966 & 0.947 \\  
                                          &                            & (163.145) & (117.175) & (68.852) & (47.380) & (36.057) && (187.736) & (141.590) & (80.485) & (53.084) & (40.074)     \\\\
\multirow{6}{*}{Heterogeneous}   & $\hat{\mathbb{V}}_{\nu,n}$ & 0.909 & 0.923 & 0.910 & 0.905 & 0.901 & & 0.947 & 0.948 & 0.967 & 0.961 & 0.972 \\
                                          &                            & (144.992) & (101.954) & (55.963) & (36.359) & (27.331) & & (147.868) & (105.939) & (57.690) & (37.206) & (27.883) \\
                                          & Robust                     & 0.928 & 0.943 & 0.949 & 0.956 & 0.968 & & 0.943 & 0.960 & 0.983 & 0.989 & 0.995 \\
                                          &                            & (149.411) & (108.981) & (64.537) & (44.941) & (35.646) & & (148.540) & (109.407) & (64.327) & (45.115) & (35.724) \\
                                          & BCVE                       & 0.921 & 0.943 & 0.946 & 0.951 & 0.965 & & 0.945 & 0.950 & 0.979 & 0.987 & 0.993 \\
                                          &                            & (150.906) & (107.311) & (62.451) & (43.646) & (34.686) & & (149.544) & (107.658) & (62.296) & (43.900) & (34.843) \\
\toprule
\end{tabular}
\end{adjustbox}
\caption{Coverage rate and average CI length (parentheses) under the super and finite population approaches to inference}
\label{table:finite_population_empirical}
\end{table}
}

\clearpage

\bibliography{factorial}

\end{document}